\documentclass[11pt,reqno]{amsart}

\pdfoutput = 1

\usepackage[english]{babel}
\usepackage{amsmath,amsfonts,amssymb,amsthm}
\usepackage{amsfonts}
\usepackage{amsxtra}






\usepackage{array,dcolumn}
\usepackage{graphicx}
\usepackage[hyperfootnotes=true,
        pdffitwindow=true,
        plainpages=false,
        pdfpagelabels=true,
        pdfpagemode=UseOutlines,
        pdfpagelayout=SinglePage,
        pagebackref,
        hyperindex,]{hyperref}

\usepackage[T1]{fontenc}
\usepackage[utf8]{inputenc}
\usepackage[activate={true,nocompatibility},spacing,kerning]{microtype}
\usepackage[sc,osf]{mathpazo}
\linespread{1.0425}
\microtypecontext{spacing=nonfrench}

 \setlength{\textwidth}{6.1in}
 \calclayout

\newcommand{\mathsym}[1]{{}}
\newcommand{\unicode}[1]{{}}

\theoremstyle{plain}
\newtheorem{theorem}{Theorem}
\newtheorem{lemma}{Lemma}

\newtheorem{proposition}{Proposition}
\numberwithin{theorem}{section}
\numberwithin{lemma}{section}
\numberwithin{proposition}{section}

\theoremstyle{definition}

\theoremstyle{remark}
\newtheorem{remark}{Remark}

\numberwithin{remark}{section}

\setlength{\parindent}{1.5em}

 \DeclareMathOperator{\Tr}{Tr}

\newcommand{\+}{\!+\!}
\newcommand{\m}{\!\\!}
\newcommand{\half}{\tfrac{1}{2}}

\renewcommand{\leq}{\leqslant}
\renewcommand{\geq}{\geqslant}
\newcommand{\E}{\mathbb{E}}
\newcommand{\abs}[1]{\lvert#1\rvert}
\def\mean#1{\left< #1 \right>}


\begin{document}
\title[Linear DE's for some random matrix resolvents]{Linear differential equations for the resolvents\\
of the classical matrix ensembles}

\author{Anas A. Rahman}
\address{School of Mathematics and Statistics, 
ARC Centre of Excellence for Mathematical
 and Statistical Frontiers,
University of Melbourne, Victoria 3010, Australia}
\email{anas.rahman@live.com.au}
\author{Peter J. Forrester}
\email{pjforr@unimelb.edu.au}

\maketitle

\begin{abstract}
The spectral density for random matrix $\beta$ ensembles can be written in terms of the average of the absolute value of the characteristic polynomial raised to the power of $\beta$, which for even $\beta$ is a polynomial of degree $\beta(N-1)$. In the cases of the classical Gaussian, Laguerre, and Jacobi weights, we show that this polynomial, and moreover the spectral density itself, can be characterised as the solution of a linear differential equation of degree $\beta+1$. This equation, and its companion for the resolvent, are given explicitly for $\beta=2$ and $4$ for all three classical cases, and also for $\beta=6$ in the Gaussian case. Known dualities for the spectral moments relating $\beta$ to $4/\beta$ then imply corresponding differential equations in the case $\beta=1$, and for the Gaussian ensemble, the case $\beta=2/3$. We apply the differential equations to give a systematic derivation of recurrences satisfied by the spectral moments and by the coefficients of their $1/N$ expansions, along with first-order differential equations for the coefficients of the $1/N$ expansions of the corresponding resolvents. We also present the form of the differential equations when scaled at the hard or soft edges.
\end{abstract}

\setcounter{equation}{0}
\section{Introduction}\label{s1}
Two complementary statistical quantities associated with the eigenvalues of a random matrix ensemble are the averages of products of characteristic polynomials and the $k$-point correlation function. For the so-called Wigner matrices (see e.g. \cite{PS11}), for which the entries (up to a possible symmetry requirement) are independently distributed, the former is a lot simpler. To illustrate this, let $X$ be any $N \times N$ real random matrix with all entries independent and having zero mean and standard deviation unity. Then it is straightforward to verify that the real symmetric matrix $G = \half (X + X^T)$ has average characteristic polynomial
\begin{align} \label{eq:intro1}
\mean{ \det (\lambda I_N - G) } = 2^{-N} H_N(\lambda),
\end{align} 
where $H_N(\lambda)$ is the $N\textsuperscript{th}$ Hermite polynomial \cite{FG06}. On the other hand, for general real Wigner matrices, the lowest order $k$-point correlation function, i.e.~the spectral density, exhibits no such simple formula.

If instead of Wigner matrices, one considers invariant ensembles with probability density function (PDF) proportional to $\abs{\det X}^{\alpha_0}\exp\left(- \sum_{l=1}^\infty \alpha_l \Tr X^l\right)$, the averaged characteristic polynomials and the $k$-point correlation function are seen to be closely related. For such invariant matrix ensembles, the eigenvalue PDF has the form
\begin{align} \label{eq:intro2}
\frac{1}{C_N} \prod_{l=1}^N w(x_l) \prod_{1 \leq j<k \leq N} \abs{x_k - x_j}^\beta,
\end{align}
where $\beta = 1,2$ or $4$ corresponds to the diagonalising matrices being unitary matrices with real, complex or real quarternion entries, respectively, and $w(x) = \abs{x}^{\alpha_0} \exp\left(- \sum_{l=1}^\infty \alpha_l x^l\right)$. With $N$ replaced by $N+1$ in \eqref{eq:intro2}, the one-point density $\rho_{(1),\beta,N+1}(x)$ is then given in terms of the absolute value of the $\beta\textsuperscript{th}$ moment of the characteristic polynomial according to
\begin{align}
\rho_{(1),\beta,N+1}(x) :&={N+1 \over C_{N+1}}\int_{\mathbb{R}^N} \mathrm{d}x_1 \cdots \mathrm{d}x_N \left.\prod_{l=1}^{N+1} w(x_l)\prod_{1 \leq j < k \leq N + 1}|x_k - x_j|^\beta\right\vert_{x_{N+1}=x} \nonumber
\\&= \frac{(N+1)C_N}{C_{N+1}} w(x) \mean{ \prod_{l=1}^N \abs{x-x_l}^\beta }, \label{eq:intro3}
\end{align}
where the average is over the PDF \eqref{eq:intro2} (i.e. the eigenvalue PDF for $N$ eigenvalues).

Let us now specialise to one of the three classical weights\footnotemark
\footnotetext{The terminology classical weight has its origin in the theory of orthogonal polynomials of a single variable. For a precise definition see e.g. \cite[\S 5.4.1]{Fo10}. When $a$ and $b$ are held constant, these three weights correspond to eigenvalue densities with two soft edges, a soft and hard edge, and two hard edges, respectively.}
\begin{align} \label{eq:intro4}
w(x) = 
\begin{cases}
e^{-x^2}, &\text{Gaussian,} \\
x^a e^{-x} \chi_{x>0}, &\text{Laguerre,} \\
x^a (1-x)^b \chi_{0<x<1}, &\text{Jacobi,}
\end{cases}
\end{align}
where $\chi_A = 1$ for $A$ true and $\chi_A = 0$ otherwise. 
In the Jacobi case, the average in \eqref{eq:intro3} is an example of a particular class of Selberg correlation integrals which have been shown to satisfy a first-order matrix linear differential equation of order $N+1$ \cite{FR12}, as well as a linear matrix recurrence relation in the parameter of the same order \cite{FI10a}.

Being of order $N+1$, the use of the linear differential equation in relation to analysing the spectral density is, per se, feasible for only small values of $N$. However, this circumstance changes dramatically when one takes into consideration that for the classical ensembles the moments of the characteristic polynomial satisfy duality relations \cite{Fo92j,Fo93c,BF97a,De08,Fo10,DL15}. The simplest of these is in the Gaussian case. With $w(x) = e^{-x^2}$ in \eqref{eq:intro2} denoted $\text{GE}_{\beta,N}$, the duality reads \cite{BF97a}
\begin{align} \label{eq:intro5}
\mean{\prod_{l=1}^N \Big( x - \sqrt{\tfrac{2}{\beta}}x_l \Big)^n }_{\text{GE}_{\beta,N}} =
\mean{ \prod_{l=1}^n (x-ix_l)^N }_{\text{GE}_{4/\beta,n}}.
\end{align}
This shows that for $\beta$ even, the average in \eqref{eq:intro3} in the Gaussian case is equal to a polynomial which satisfies a linear differential equation of order $\beta + 1$. Analogous dualities for the Laguerre and Jacobi cases \cite{Fo92j,Fo93c,Fo10} show that the same conclusion holds for those ensembles too.

The primary purpose of the present paper is to make explicit the order three for $\beta = 2$, and order five for $\beta = 4$, linear differential equations satisfied by the density \eqref{eq:intro3} in the classical cases \eqref{eq:intro4}. Another type of duality formula, relating spectral moments
\begin{equation} \label{eq:intro6}
m_k=\int_Ix^k\rho_{(1),\beta,N}(x)\,\mathrm{d}x,\quad I=\textrm{supp}\,\rho_{(1),\beta,N},\, k\in\mathbb{N},
\end{equation}
of the $\beta$ and $4/\beta$ ensembles \cite{DE05,DP12,FRW17,FLD16,WF14}, then allows for the determination of a fifth-order differential equation satisfied by \eqref{eq:intro3} with $\beta=1$. In a structural sense, our approach is applicable in all three cases for all even $\beta$, with the moments duality relation then implying the analogous characterisation for the coupling $4/\beta$. However, at a technical level there is an increase in complexity as the weight changes from Gaussian, to Laguerre, to Jacobi, and an increase in complexity as $\beta$ is increased. Hence, beyond the Jacobi weight with $\beta=4$ and $1$, the next case in this ordering of complexity is the Gaussian weight with $\beta=6$ and $\beta=2/3$. In the final subsection of Section \ref{s2}, we work this case out in detail, giving the explicit form of the seventh-order linear differential equation specifying the densities.

For the ensemble corresponding to \eqref{eq:intro2} with $\beta = 2$ and the Gaussian weight (referred to as the GUE), the fact that the density satisfies a third-order linear differential equation can be traced back to the work of Lawes and March \cite{LM79}. There the setting is that of interpreting the GUE eigenvalue PDF as the squared ground state wave function of
spinless non-interacting fermions in one dimension, in the presence of a harmonic confining potential. For an analogous result in the case of fermions in $d$ dimensions, see \cite{BM03,Fo19}. In the random matrix theory literature, this result has appeared in the work of G{\"o}tze and Tikhomirov \cite{GT05}, making use of an earlier result of Haagerup and Thorbj{\o}rnsen \cite{HT03} characterising the two-sided Laplace transform of the density in terms of a hypergeometric function. Also contained in \cite{GT05} is a third-order linear differential equation for the spectral density of the LUE (the Laguerre weight case of \eqref{eq:intro2} with $\beta = 2$); this equation can be found too in \cite{ATK11}. These characterisations were used to obtain optimal bounds for the rate of convergence to the limiting semi-circle law (for the GUE) and Marchenko-Pastur law (for the LUE). The earlier result of \cite{HT03} was used by Ledoux \cite{Le04} to deduce small deviation inequalities for the largest eigenvalue in the GUE. More recently, Kopelevitch \cite{Ko16} made use of the third-order differential equation satisfied by the spectral density for the GUE to study the $1/N$ expansion of the average of a linear statistic.

In the Gaussian case with $\beta = 1$ or $\beta = 4$, corresponding to the well-known GOE and GSE, respectively, the fifth-order homogeneous differential equations for the spectral density were derived in \cite{WF14} using a method based on known evaluation of the density in terms of Hermite polynomials \cite{AFNV00}. Very recently, third-order inhomogeneous differential equations have been derived for the same quantity, in which the inhomogeneous term involves the GUE density \cite{Na18}. Our approach, using Selberg correlation integrals and duality formulas, is different, and moreover unifies all the classical cases. Furthermore, it opens the way for the future study of applications analogous to those in \cite{GT05,Le04,Ko16,Na18}. An application in the present work will be to the derivation of difference equations for the moments $m_k$ of the spectral density with \eqref{eq:intro6} extended to include $k\in\mathbb{Z}$ when possible, supplementing results on this topic in \cite{HZ86,HT03,Le04,Le09,CMSV16a,CMSV16b,CMOS18} as well as the related studies \cite{VV08,No08,LV11,MS11,MS12,CDO18,CCC19}. Another will be to the characterisation of the soft and hard edge scaled densities for $\beta = 1,2$ and $4$ via differential equations, and at the soft edge for $\beta=6$ and $2/3$ as well.

\subsection*{Outline and key results}
In Section \ref{s2}, we derive homogeneous linear differential equations for the $\beta=1,2$, and $4$ Jacobi ensembles' densities, and inhomogeneous analogues of these for the resolvents
\begin{align} \label{eq:intro7}
W_{\beta,N}(x):=\int_I \frac{\rho_{(1),\beta,N}(\lambda)}{x-\lambda}\,\mathrm{d}\lambda,\quad I = \textrm{supp}\,\rho_{(1),\beta,N}.
\end{align}
These results are contained in Theorems \ref{T2.6} and \ref{T2.8}. Utilising a limiting procedure, one may derive from these theorems differential equations for the densities and resolvents of the $\beta=1,2$, and $4$ Gaussian and Laguerre ensembles. On the other hand, Gaussian and Laguerre ensemble differential equations can be derived by tweaking the proofs of Theorems \ref{T2.6} and \ref{T2.8}. We illustrate the first route in \S\ref{s2.2}, where Proposition \ref{P2.9} lists differential equations for the $\beta=1,2$, and $4$ Laguerre ensembles' densities and resolvents, recovering and complementing the result in the $\beta=2$ case \cite{GT05,ATK11}. The second route is demonstrated in \S\ref{s2.3}, where Proposition \ref{P2.10} lists differential equations satisfied by the densities and resolvents of the Gaussian ensemble when $\beta=2/3$ and $6$, complementing the $\beta=1,2$, and $4$ differential equations given in \cite{WF14}.

In Section \ref{s3}, we apply the differential equations of Section \ref{s2} to derive recurrence relations for related objects of interest. To be more specific,
\begin{itemize}
\item We begin \S\ref{s3.1} by presenting a third-order linear recurrence for the integer moments $m_k^{(J)}$ of the JUE eigenvalue density, which was recently derived in \cite{CMOS18} through a different method to ours. Propositions \ref{P3.2} to \ref{P3.4} then give recurrences for the $\beta=1$ and $4$ Jacobi and Laguerre ensemble moments and the $\beta=2/3$ and $6$ Gaussian ensemble moments, respectively. We highlight the fact that these recurrences are derived from Theorems \ref{T2.6} and \ref{T2.8} in a uniform way, and furthermore that our method retrieves known moment recurrences for the LUE \cite{Le04,CMSV16b}, GOE, GUE, and GSE \cite{HZ86,Le09,WF14}.

\item A nice feature of the classical ensembles is that for a particular choice of scalings, the spectral moments can be expanded in $1/N$ \cite{DE05,DP12,MRW15}. In \S\ref{s3.2}, we expand the Gaussian, Laguerre and Jacobi ensemble spectral moments according to equations \eqref{eq:rr11}--\eqref{eq:rr13} and give recurrences for the expansion coefficients $M_{k,l}^{(\,\cdot\,)}$. Propositions \ref{P3.5} to \ref{P3.8} treat the $\beta=2/3$ and $\beta=6$ Gaussian ensembles, the LUE, the LOE and LSE, and the JUE, respectively. We draw attention to the works \cite{HZ86,Le09,ND18} where equivalent recurrences were found for the GOE, GUE, GSE, and a constrained version of the LUE, mostly in the context of enumerative geometry \cite{KK03,Di03,LC09}.

\item Closely related to the expansion of the moments is the fact that, for the weights considered in this paper, and when the eigenvalue density is scaled to have finite support in the $N\rightarrow\infty$ regime, its resolvent admits a topological expansion \cite{BG13}. That is, if $c_N$ is a scaling parameter such that to leading order $\rho_{(1),\beta,N}(c_N \lambda)$ has a finite support as a function of $\lambda$, then $W_{\beta,N}(c_N x)$ has an asymptotic $1/N$-expansion of the form
\begin{align} \label{eq:intro8}
\frac{c_N}{N}W_{\beta,N}(c_N x)=\sum_{l=0}^{\infty}\frac{W_{\beta}^l(x)}{(N\sqrt{\kappa})^{l}},\quad\kappa:=\frac{\beta}{2},
\end{align}
where the $W_{\beta}^l(x)$ are independent of $N$, but dependent on $\kappa$. In \S\ref{s3.3}, we show that the $W_{\beta}^l(x)$ defined above satisfy first-order differential-difference equations, providing an alternative to the topological recursion (see e.g. \cite{EO09,FRW17}) for generating these expansion coefficients. Moreover, we observe that the differential equations for $W_{\beta}^0(x)$ are independent of $\beta$, which confirms that the differential equations of Section \ref{s2} display universal structures in the large $N$ limit. We treat the same cases as those in \S\ref{s3.2}, along with the GOE and GSE, in Propositions \ref{P3.10} to \ref{P3.15}.
\end{itemize}

In Section \ref{s4}, we study the differential equations of Section \ref{s2} when they have been scaled at the hard and soft edges. Theorem \ref{T4.1} gives differential equation characterisations of the eigenvalue densities of the classical matrix ensembles at the soft edge when $\beta\in\{2/3,1,2,4,6\}$, while Theorem \ref{T4.4} gives such characterisations at the hard edge when $\beta\in\{1,2,4\}$. These differential equations are simpler than their unscaled correspondents in Section \ref{s2}, but are of the same orders.

\setcounter{equation}{0}
\section{Differential Equations}\label{s2}
It has been commented above that homogeneous differential equations for the densities of the Gaussian orthogonal, unitary, and symplectic ensembles are known from previous works. This is similarly true of
the corresponding (inhomogeneous) differential equations for the corresponding resolvents. For future reference, we
present the results here, using \cite{WF14} as our source and thus choosing the Gaussian weight
to be $\exp\left(-N\kappa x^2/(2g)\right)$ (the coupling constant $g$ determines the length scale -- to leading order in $1/N$, the spectrum is supported on $(-2\sqrt{g},2\sqrt{g})$ -- and we recall from \eqref{eq:intro8}
that $\kappa=\frac{\beta}{2}$). 

\begin{proposition} \label{P2.1}
Define
\begin{align} \label{eq:de1}
\mathcal{D}_{\beta,N}^{(G)} =
\begin{cases}
\left(\frac{g}{N\sqrt{\kappa}}\right)^2\frac{\mathrm{d}^3}{\mathrm{d}x^3}-y_{(G)}^2\frac{\mathrm{d}}{\mathrm{d}x}+x,&\beta=2,
\\-\left(\frac{g}{N\sqrt{\kappa}}\right)^4\frac{\mathrm{d}^5}{\mathrm{d}x^5}+5\left[\half y_{(G)}^2-h\left(\frac{g}{N\sqrt{\kappa}}\right)\right]\left(\frac{g}{N\sqrt{\kappa}}\right)^2\frac{\mathrm{d}^3}{\mathrm{d}x^3}&
\\\quad-3\left(\frac{g}{N\sqrt{\kappa}}\right)^2x\frac{\mathrm{d}^2}{\mathrm{d}x^2}-\left[y_{(G)}^4-4h\left(\frac{g}{N\sqrt{\kappa}}\right)y_{(G)}^2-\left(\frac{g}{N\sqrt{\kappa}}\right)^2\right]\frac{\mathrm{d}}{\mathrm{d}x}&
\\\quad+\left[y_{(G)}^2-2h\left(\frac{g}{N\sqrt{\kappa}}\right)\right]x,&\beta=1,4,
\end{cases}
\end{align}
where $h:=\sqrt{\kappa}-1/\sqrt{\kappa}$ and $y_{(G)}=\sqrt{x^2-4g}$. Then, for $\beta=1,2$, and $4$,
\begin{align} \label{eq:de2}
\mathcal{D}_{\beta,N}^{(G)}\,\rho_{(1),\beta,N}^{(G)}(x) = 0
\end{align}
and
\begin{align} \label{eq:de3}
\mathcal{D}_{\beta,N}^{(G)}\,\frac{1}{N}W_{\beta,N}^{(G)}(x) =
\begin{cases}
2,&\beta=2,
\\2y_{(G)}^2-10h\left(\frac{g}{N\sqrt{\kappa}}\right),&\beta=1,4.
\end{cases}
\end{align}
\end{proposition}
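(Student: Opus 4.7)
The plan is to treat the density equation \eqref{eq:de2} and the resolvent equation \eqref{eq:de3} separately, deriving the latter from the former. For \eqref{eq:de2} I would start from the fact that, for each of $\beta=1,2,4$, the Gaussian density $\rho_{(1),\beta,N}^{(G)}(x)$ admits a closed form in terms of Hermite polynomials: for $\beta=2$ it is the Christoffel--Darboux kernel of the Hermite orthogonal polynomial system evaluated on the diagonal, while for $\beta=1,4$ it has the Pfaffian form in which Hermite polynomials, their derivatives, and a single Hermite-weighted indefinite integral appear. Using the three-term recurrence $H_{n+1}(x)=2xH_n(x)-2nH_{n-1}(x)$, the differentiation rule $H_n'(x)=2nH_{n-1}(x)$, and the derivative of the Gaussian weight, I would eliminate the individual Hermite factors until a single linear homogeneous ODE with polynomial coefficients in $x$ remains. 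The order collapses to $3$ for $\beta=2$ through a Wronskian-type identity reducing the naive order, and to $5$ for $\beta=1,4$ owing to the extra integral term. The scaling $\exp(-N\kappa x^2/(2g))$ fixes the explicit dependence on $g$, $N$, and $h=\sqrt{\kappa}-1/\sqrt{\kappa}$. This is essentially the route followed in \cite{WF14}.

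For \eqref{eq:de3}, the idea is to feed \eqref{eq:de2} into the Stieltjes-type definition \eqref{eq:intro7}. Applying $\mathcal{D}_{\beta,N}^{(G)}$ inside the integral and using the identity
\[
\frac{\partial^k}{\partial x^k}\frac{1}{x-\lambda}=(-1)^k\frac{\partial^k}{\partial\lambda^k}\frac{1}{x-\lambda},
\]
integration by parts transfers the $x$-derivatives onto the density as $\lambda$-derivatives. The polynomial-in-$x$ coefficients of $\mathcal{D}_{\beta,N}^{(G)}$ (through $y_{(G)}^2=x^2-4g$ and the linear-in-$x$ factors) are then decomposed via the algebraic identity $x^k=\sum_{j=0}^{k}\binom{k}{j}\lambda^{k-j}(x-\lambda)^j$. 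The $j=0$ piece combines with the $\lambda$-derivatives to reproduce $\mathcal{D}_{\beta,N}^{(G)}$ acting on $\rho$ in the $\lambda$ variable, and so vanishes by \eqref{eq:de2}. The $j\geq 1$ pieces cancel the factor $(x-\lambda)^{-1}$ and leave a polynomial-in-$x$ inhomogeneity weighted by spectral moments $\int\lambda^p\rho\,\mathrm{d}\lambda$; the Gaussian decay of $\rho_{(1),\beta,N}^{(G)}$ at $\pm\infty$ kills all boundary terms.

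The main obstacle is the bookkeeping of this inhomogeneity: one needs the normalisation $\int\rho\,\mathrm{d}\lambda=1$ together with the lowest spectral moments of $\rho_{(1),\beta,N}^{(G)}$. For $\beta=2$, only the normalisation survives (odd moments vanish by symmetry and the operator preserves parity), producing the constant $2$ on the right-hand side of \eqref{eq:de3}. For $\beta=1,4$, the second spectral moment (explicitly computable from the chosen scaling) enters through the cubic-in-$x$ and first-derivative terms of $\mathcal{D}_{\beta,N}^{(G)}$, and combining these contributions with the $h$-dependent corrections yields precisely $2y_{(G)}^2-10h\bigl(g/(N\sqrt{\kappa})\bigr)$. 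As a sanity check one should confirm agreement with the third-order resolvent equation for the GUE in \cite{GT05} and the fifth-order density equations for the GOE and GSE in \cite{WF14}, which match \eqref{eq:de1}--\eqref{eq:de3} upon unpacking the scalings.
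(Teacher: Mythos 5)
Your plan is sound and would succeed, but it is worth being clear about what the paper actually does here: Proposition \ref{P2.1} is presented \emph{without proof}, explicitly ``using \cite{WF14} as our source,'' and the route you describe --- evaluating $\rho_{(1),\beta,N}^{(G)}$ in terms of Hermite polynomials (Christoffel--Darboux for $\beta=2$, the skew-orthogonal/Pfaffian forms of \cite{AFNV00} with the extra Hermite-weighted integral for $\beta=1,4$), eliminating via the three-term recurrence and derivative rule, and then Stieltjes-transforming the homogeneous equation to get the inhomogeneous resolvent equation --- is precisely the method of that cited source. So you have reconstructed the provenance of the result rather than diverged from it. The paper's own, independent derivation of these same equations comes later, in \S\ref{s2.3}: the differential-difference equation \eqref{eq:de36} for $\tilde{G}_{n,p}^{(N)}$ (obtained from the duality \eqref{eq:intro5} and a confluent limit of the Jacobi system \eqref{eq:de10}) yields the matrix equations \eqref{eq:de42} and \eqref{eq:de43}, whose elimination to a scalar ODE reproduces \eqref{eq:de2} for $\beta=2,4$, with $\beta=1$ then following from the duality \eqref{eq:de41}; the resolvent equation \eqref{eq:de3} again follows by Stieltjes transform. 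The trade-off is the usual one: your Hermite route is self-contained and classical but specific to $\beta\in\{1,2,4\}$ where explicit (skew-)orthogonal polynomial formulas exist, whereas the paper's duality/differential-difference route is what generalises uniformly to the Laguerre and Jacobi weights and to $\beta=6$ and $2/3$. Your treatment of the inhomogeneity is correct in its details (only $m_0=N$ survives for $\beta=2$ giving the constant $2$; $m_0$ and $m_2$ enter for $\beta=1,4$ giving the quadratic $2y_{(G)}^2-10h\left(\frac{g}{N\sqrt{\kappa}}\right)$), and the vanishing of boundary terms under Gaussian decay is unproblematic.
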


\begin{remark} \label{R2.2}
\begin{enumerate}
\item It can be observed that for $\beta=1$ or $4$,
\begin{align}
\mathcal{D}_{\beta,N}^{(G)}&=-\left(\frac{g}{N\sqrt{\kappa}}\right)^2\left[\mathcal{D}_{2,N}^{(G)}+2x\right]\frac{\mathrm{d}^2}{\mathrm{d}x^2}+\left[y_{(G)}^2-5h\left(\frac{g}{N\sqrt{\kappa}}\right)\right]\mathcal{D}_{2,N}^{(G)} \nonumber
\\&\quad+\half y_{(G)}^2\left(\frac{g}{N\sqrt{\kappa}}\right)^2\frac{\mathrm{d}^3}{\mathrm{d}x^3}-\left[hy_{(G)}^2-\left(\frac{g}{N\sqrt{\kappa}}\right)\right]\left(\frac{g}{N\sqrt{\kappa}}\right)\frac{\mathrm{d}}{\mathrm{d}x} \nonumber
\\&\quad+3h\left(\frac{g}{N\sqrt{\kappa}}\right)x. \label{eq:de4}
\end{align}
This form of the differential operator should be compared with results derived recently in \cite{Na18}.

\item The operator for $\beta=2$ is even in $N$. In the case of $\beta=1$ and $\beta=4$, it is invariant under the mapping $(N,\kappa)\mapsto(-N\kappa,1/\kappa)$; this is consistent with a known duality \cite{DE05,WF14}. Analogous dualities for the Jacobi and Laguerre ensembles are what allow us to obtain differential equations for $\beta=1$ in the upcoming derivations. They also suggest that in certain cases, the LUE and JUE spectral moments, when scaled to be ${\rm O}(N)$, are odd functions in $N$ (see pp.~24-26 for more details).

\item The structure \eqref{eq:intro3} tells us that $p_{\beta,N+1}(x):=\rho_{(1),\beta,N+1}(x)/w(x)$ has a large $x$ expansion of the form
\begin{equation} \label{eq:de4a}
p_{\beta,N+1}(x)\overset{x\rightarrow\infty}{=}\sum_{k=0}^{\infty}c_kx^{\beta N-k}
\end{equation}
with $c_0=(N+1)C_N/C_{N+1}$ fixed and all other constant coefficients $c_k$ yet to be determined. In particular, for $\beta$ an even integer, this series terminates so that $p_{\beta,N+1}(x)$ is a polynomial of order $\beta N$. Substituting $\rho_{(1),2,N}^{(G)}(x)=e^{-Nx^2/(2g)}p_{2,N}^{(G)}(x)$ into \eqref{eq:de2} shows (choosing the GUE for simplicity)
\begin{equation*}
\left\{\left(\frac{g}{N}\right)^2\frac{\mathrm{d}^3}{\mathrm{d}x^3}-\frac{3g}{N}x\frac{\mathrm{d}^2}{\mathrm{d}x^2}+\left(2x^2+\frac{g}{N}(4N-3)\right)\frac{\mathrm{d}}{\mathrm{d}x}-4(N-1)x\right\}p_{2,N}^{(G)}(x)=0.
\end{equation*}
One can check that this differential equation admits a unique polynomial solution consistent with \eqref{eq:de4a}, and with $\{c_k\}_{k>0}$ completely determined by the choice of $c_0$. This latter feature is true of all the differential equation characterisations we will obtain for $\rho_{(1),\beta,N}(x)$. It holds true because the underlying differential-difference equation \eqref{eq:de10} is effectively a (multi-dimensional) first-order recurrence.
\end{enumerate}
\end{remark}

Proposition \ref{P2.1} will be used in Section \ref{s4}. The remainder of this section is devoted to deriving similar results for the Laguerre and Jacobi ensembles for $\beta=1,2$, and $4$, and also extending our methods to the Gaussian ensemble with $\beta=6$ and (through the aforementioned duality) $4/\beta=6$.

\subsection{The Jacobi ensemble differential equations}\label{s2.1}
Our main object of interest here is
\begin{align} \label{eq:de5}
I_{\beta,N}^{(J)}(x) := \mean{ \prod_{l=1}^N|x-x_l|^{\beta} }_{\text{JE}_{\beta,N}(a,b)},
\end{align}
where we have written ${\text{JE}_{\beta,N}(a,b)}$ to indicate that the average is taken with respect to the PDF \eqref{eq:intro2} with $w(x)=x^a(1-x)^b$.
The analogue of the duality \eqref{eq:intro5} in the Jacobi case is \cite[Ch.~13]{Fo10}
\begin{align}
\frac{S_N(a,b,\kappa)}{S_N(a+n,b,\kappa)}\mean{ \prod_{l=1}^N(x_l-x)^n }_{\text{JE}_{\beta,N}(a,b)}\, &= \,\mean{ \prod_{l=1}^n(1-xx_l)^N }_{\text{JE}_{4/\beta,n}(a',b')}, \label{eq:de6}
\\a' = \frac{1}{\kappa}(a+b+2)+N-2,&\qquad b' = -\frac{1}{\kappa}(b+n)-N, \nonumber
\end{align}

\noindent where $S_N$ is the Selberg integral \cite[Ch.~4]{Fo10}. In fact, both of these averages are known to equal \cite{Ka93}
\begin{align} \label{eq:de7}
{}_2F_1^{(\kappa)}\left(-N,(a+b+n+1)/\kappa+N-1,(a+n)/\kappa;(x)^n\right),
\end{align}
where ${}_2F_1^{(\kappa)}$ is the generalised multivariate hypergeometric function and $(x)^n$ is the $n$-tuple $(x,\ldots,x)$. Note that on the right-hand side of \eqref{eq:de6}, the parameter $b'$ is in general less than $-1$, and so the integral must be understood in the sense of analytic continuation. The average (\ref{eq:de5}) relates to the left-hand side of (\ref{eq:de6}) in the case
$n = \beta$ for even $\beta$, the latter detail being needed to remove the absolute value
sign.

A simple change of variables shows that the average on the right-hand side of \eqref{eq:de6} is given by $(-x)^{nN}\,J_{n,0}^{(N)}(1/x)$, where
\begin{align} \label{eq:de8}
J_{n,p}^{(N)}(x) := \mean{ \prod_{l=1}^n(x_l-x)^{N+\chi_{l\leq p}} }_{\text{JE}_{4/\beta,n}(a',b')}.
\end{align}
Thus for $\beta$ an even positive integer, 
\begin{equation} \label{eq:de9}
I_{\beta,N}^{(J)}(x) \propto (-x)^{\beta N}\,J_{\beta,0}^{(N)}(1/x).
\end{equation}
The significance of the generalised average (\ref{eq:de8}) is that it
satisfies the differential-difference equation \cite{Fo93}
\begin{multline*}
(n-p)E_pJ_{n,p+1}^{(N)}(x) = -(A_px+B_p)J_{n,p}^{(N)}(x)+x(x-1)\frac{\mathrm{d}}{\mathrm{d}x}J_{n,p}^{(N)}(x)+D_px(x-1)J_{n,p-1}^{(N)}(x),
\end{multline*}
\begin{align}
A_p& = (n-p)\left(a'+b'+\tfrac{2}{\kappa}(n-p-1)+2N+2\right), \nonumber
\\B_p& = (p-n)\left(a'+N+1+\tfrac{1}{\kappa}(n-p-1)\right), \nonumber
\\D_p& = p\left(\tfrac{1}{\kappa}(n-p)+N+1\right), \nonumber
\\E_p& = a'+b'+\tfrac{1}{\kappa}(2n-p-2)+N+2, \label{eq:de10}
\end{align}
later observed to be equivalent to a particular matrix differential equation \cite{FR12}.

\begin{remark} \label{R2.3}
The order $p$ elementary symmetric polynomial on $N$ variables is
\begin{equation} \label{eq:de11}
e_p(x_1,\ldots,x_N)=\sum_{1\leq j_1<j_2<\cdots<j_p\leq N}x_{j_1}\cdots x_{j_p}
\end{equation}
with the convention $e_p(x_1,\ldots,x_N)=0$ if $p>N$. It is known \cite[Ch.~4]{Fo10} that the above differential-difference equation \eqref{eq:de10} is satisfied by a broader class of functions
\begin{equation} \label{eq:de12}
\tilde{J}_{n,p,q}^{(N)}(x):=\frac{1}{\mathcal{N}_{N,p}}\mean{ \prod_{l=1}^n|x_l-x|^N\,\chi_{x_1,\ldots,x_{N-q}<x}\,e_p(x-x_1,\ldots,x-x_N) }_{\text{JE}_{4/\beta,n}(a',b')},
\end{equation}
where $\mathcal{N}_{N,p}:=\binom{N}{p}$, this being the number of terms in the sum on the right-hand side of \eqref{eq:de11}. In the case $q=N$, we see $\tilde{J}_{n,p,q}^{(N)}(x)$ is equal to $J_{n,p}^{(N)}(x)$ due to symmetry of the integrand in \eqref{eq:de8}.
\end{remark}

Our present goal is to derive a scalar differential equation for $J_{\beta,0}^{(N)}(x)$. The differential-difference equation \eqref{eq:de10} is sufficient for this, since the values $p=0,1,\ldots,\beta$ result in a closed system of equations. While this method is in principle applicable for all $\beta\in2\mathbb{N}$, the resulting equation will be of order $\beta + 1$, so we address only $\beta=2$ and $4$ in this paper.

\begin{lemma} \label{L2.4}
With $p=0$ and $n=\beta=2$, the function $J_{2,0}^{(N)}(x)$ satisfies the differential equation
\begin{multline} \label{eq:de13}
0=x^2(x-1)^2\frac{\mathrm{d}^3}{\mathrm{d}x^3}J_{2,0}^{(N)}(x)-\left[3C_2(x)-2(1-2x)\right]x(x-1)\frac{\mathrm{d}^2}{\mathrm{d}x^2}J_{2,0}^{(N)}(x)
\\+\left[\left((a+N)(1+4N)+4+N\right)x(x-1)+\left(2C_2(x)-3(1-2x)\right)C_2(x)\right]\frac{\mathrm{d}}{\mathrm{d}x}J_{2,0}^{(N)}(x)
\\-2N(a+N)\left[2C_2(x)-3(1-2x)\right]J_{2,0}^{(N)}(x),
\end{multline}
where $C_2(x)=(a+2N)(x-1)-b-2x$.
\end{lemma}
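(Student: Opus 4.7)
The plan is to exploit the closure of the differential-difference equation \eqref{eq:de10} at $p = n = 2$: since the prefactor $(n-p)E_p$ on the left-hand side vanishes at $p = 2$, the three instances $p = 0, 1, 2$ form a closed linear system in the triple $J_{2,0}^{(N)}, J_{2,1}^{(N)}, J_{2,2}^{(N)}$ (and their first derivatives). Eliminating the two higher-$p$ functions should yield a scalar third-order ODE for $J_{2,0}^{(N)}$.

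First I would specialise \eqref{eq:de10} to $n = 2$, insert the dual parameters $a', b'$ from \eqref{eq:de6} with $\kappa$ corresponding to $\beta = 2$, and record the coefficients $A_p, B_p, D_p, E_p$ explicitly for $p = 0, 1, 2$. Two convenient simplifications appear: $D_0 = 0$, so the $p = 0$ equation expresses $J_{2,1}^{(N)}$ as a first-order differential operator applied to $J_{2,0}^{(N)}$; and $A_2 = B_2 = 0$, so the $p = 2$ equation reduces to the clean relation $\frac{\mathrm{d}}{\mathrm{d}x}J_{2,2}^{(N)} = -D_2\, J_{2,1}^{(N)}$.

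Next I would differentiate the $p = 1$ equation in $x$, substitute the $p = 2$ relation to eliminate $\frac{\mathrm{d}}{\mathrm{d}x}J_{2,2}^{(N)}$, and thereby obtain a second-order ODE coupling $J_{2,1}^{(N)}$ and $J_{2,0}^{(N)}$. Inserting the $p = 0$ expression for $J_{2,1}^{(N)}$ (and thus for its first two derivatives) into this coupling then collapses everything to a third-order linear ODE in $J_{2,0}^{(N)}$ alone. Matching coefficients against \eqref{eq:de13} should identify the linear combination $(a + 2N)(x-1) - b - 2x$ as the natural expression of $a' + 2N x$--type parameters arising from \eqref{eq:de6}, which is precisely $C_2(x)$.

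The main obstacle is purely bookkeeping: tracking roughly a dozen rational-in-$(a,b,N)$ coefficients through one differentiation and two substitutions, and organising the result so that the factor $C_2(x)$ appears transparently. I would also verify that the division by $E_1$ required in the elimination is not obstructed, which can be read off from $E_1 = a' + b' + \tfrac{1}{\kappa}(2n - 3) + N + 2$; under the specialisation $n = 2$ and the duality \eqref{eq:de6} this becomes a generic nonvanishing linear function of $(a, b, N)$, so the manipulations are legitimate and the resulting third-order ODE is the desired \eqref{eq:de13}.
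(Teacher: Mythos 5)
Your proposal is correct and follows essentially the same route as the paper: the three instances $p=0,1,2$ of \eqref{eq:de10} form the closed first-order system \eqref{eq:de14}, and eliminating $J_{2,2}^{(N)}$ and then $J_{2,1}^{(N)}$ (your ``differentiate the $p=1$ equation and use $\frac{\mathrm{d}}{\mathrm{d}x}J_{2,2}^{(N)}=-D_2J_{2,1}^{(N)}$'' is algebraically identical to the paper's ``solve the second row for $J_{2,2}^{(N)}$ and substitute into the third'') yields the third-order scalar equation \eqref{eq:de13}. The only quibble is that the division actually needed to invert the $p=0$ relation is by $2E_0$ rather than $E_1$, but both are generically nonzero, so this does not affect the argument.
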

\begin{proof}
Setting $n=2$ and taking $p=0,1,2$ in \eqref{eq:de10}, we obtain the matrix differential equation
\begin{align} \label{eq:de14}
\frac{\mathrm{d}}{\mathrm{d}x}\begin{bmatrix}J_{2,0}^{(N)}(x)\\J_{2,1}^{(N)}(x)\\J_{2,2}^{(N)}(x)\end{bmatrix}=\begin{bmatrix}\frac{A_0x+B_0}{x(x-1)}&\frac{2E_0}{x(x-1)}&0\\-D_1&\frac{A_1x+B_1}{x(x-1)}&\frac{E_1}{x(x-1)}\\0&-D_2&0 \end{bmatrix}\begin{bmatrix}J_{2,0}^{(N)}(x)\\J_{2,1}^{(N)}(x)\\J_{2,2}^{(N)}(x)\end{bmatrix}.
\end{align}
The second row gives an expression for $J_{2,2}^{(N)}(x)$ which transforms the third row into a differential equation involving only $J_{2,0}^{(N)}(x)$ and $J_{2,1}^{(N)}(x)$. This equation further transforms into an equation for just $J_{2,0}^{(N)}(x)$ upon substitution of the expression for $J_{2,1}^{(N)}(x)$ drawn from the first row:
\begin{align}
0&=x^2(x-1)^2\frac{\mathrm{d}^3}{\mathrm{d}x^3}J_{2,0}^{(N)}(x)-\left[\tilde{C}_{2,0}(x)+\tilde{C}_{2,1}(x)+1-2x\right]x(x-1)\frac{\mathrm{d}^2}{\mathrm{d}x^2}J_{2,0}^{(N)}(x) \nonumber
\\&\quad+\left[(D_2E_1+2D_1E_0-A_1-2A_0+2)x(x-1)+\tilde{C}_{2,0}(x)\tilde{C}_{2,1}(x)\right]\frac{\mathrm{d}}{\mathrm{d}x}J_{2,0}^{(N)}(x) \nonumber
\\&\quad+\left[A_0\tilde{C}_{2,1}(x)+(A_1-D_2E_1)(A_0x+B_0)-2D_1E_0(1-2x)\right]J_{2,0}^{(N)}(x), \label{eq:de15}
\end{align}
where $\tilde{C}_{2,p}(x)=(A_p-2)x+B_p+1$ with $n=\beta=2$. Substituting the appropriate values for the constants $A_p,B_p,D_p$ and $E_p$ gives the claimed result.
\end{proof}

\begin{lemma} \label{L2.5}
With $p=0$ and $n=\beta=4$, the function $J_{4,0}^{(N)}(x)$ satisfies the differential equation with polynomial coefficients
\begin{align} \label{eq:de16}
0&=4x^4(x-1)^4\frac{\mathrm{d}^5}{\mathrm{d}x^5}J_{4,0}^{(N)}(x)- 20\left[(a+4N)(x-1)-b-2x\right] x^3(x-1)^3\frac{\mathrm{d}^4}{\mathrm{d}x^4}J_{4,0}^{(N)}(x) \nonumber
\\&\quad+\left[5(a+4N)^2-5a(a-2)-12\right] x^3(x-1)^3\frac{\mathrm{d}^3}{\mathrm{d}x^3}J_{4,0}^{(N)}(x) + \cdots
\end{align}
where the (lengthier) specific forms of the coefficients of the lower order derivatives have been suppressed. (One may obtain the full expression by inverting the proof of Theorem \ref{T2.8}. In fact, this is the most efficient method of obtaining the full expression using computer algebra.)
\end{lemma}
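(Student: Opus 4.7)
The plan is to extend the elimination procedure used in the proof of Lemma \ref{L2.4} to the case $n=\beta=4$. Setting $n=4$ and letting $p$ run through $0,1,2,3,4$ in the differential-difference equation \eqref{eq:de10} produces a closed $5\times 5$ first-order linear system
\begin{align*}
\frac{\mathrm{d}}{\mathrm{d}x}\mathbf{J}(x) = M(x)\,\mathbf{J}(x), \quad \mathbf{J}(x) = \bigl(J_{4,0}^{(N)}(x),J_{4,1}^{(N)}(x),J_{4,2}^{(N)}(x),J_{4,3}^{(N)}(x),J_{4,4}^{(N)}(x)\bigr)^T,
\end{align*}
where $M(x)$ is tridiagonal with rational entries: super-diagonal $(n-p)E_p/[x(x-1)]$, diagonal $(A_px+B_p)/[x(x-1)]$, and sub-diagonal $-D_p$ (with the conventions $D_0=0$ and $(n-p)E_p=0$ at $p=n$, so that the first row has no $J_{4,-1}^{(N)}$ term and the last row has no $J_{4,5}^{(N)}$ term). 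This is the direct $n=4$ analogue of \eqref{eq:de14}.

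From here I would eliminate $J_{4,1}^{(N)},\ldots,J_{4,4}^{(N)}$ one at a time, exactly as in the proof of Lemma \ref{L2.4}. The $p=0$ row of the system expresses $J_{4,1}^{(N)}$ as a first-order differential operator applied to $J_{4,0}^{(N)}$; substituting this into the $p=1$ row expresses $J_{4,2}^{(N)}$ as a second-order operator on $J_{4,0}^{(N)}$; and iterating gives $J_{4,p}^{(N)}$ as a $p$-th order differential operator on $J_{4,0}^{(N)}$ for $p=1,2,3,4$. Substituting the resulting expressions for $J_{4,3}^{(N)}$ and $J_{4,4}^{(N)}$ into the final ($p=4$) row, in which $(n-p)E_p=0$ removes the upward term, yields a single scalar equation of order five for $J_{4,0}^{(N)}$. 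Clearing denominators by multiplying through by $[x(x-1)]^4$ produces the polynomial coefficients claimed in \eqref{eq:de16}. Finally, substituting the explicit values $\kappa=2$, $a'=(a+b-2)/2+N$, $b'=-(b+4)/2-N$ and the specialisations of $A_p,B_p,D_p,E_p$ pins down the concrete form of every coefficient.

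A quick sanity check on the top two coefficients is that the leading term $[x(x-1)]^4\frac{\mathrm{d}^5}{\mathrm{d}x^5}$ arises because each of the four eliminations contributes a factor $x(x-1)$ from the operator $x(x-1)\mathrm{d}/\mathrm{d}x$ in \eqref{eq:de10}, while the $\mathrm{d}^4/\mathrm{d}x^4$ coefficient collects the $\sum_{p=0}^{3}(A_px+B_p)$ contributions of the four diagonal terms together with the Leibniz cross-terms $(1-2x)$ produced when $\mathrm{d}/\mathrm{d}x$ falls on $x(x-1)$. Using $A_p+B_p=(n-p)[\,(a'+b'+\tfrac{2}{\kappa}(n-p-1)+2N+2)\,x - (a'+N+1+\tfrac{1}{\kappa}(n-p-1))]$ and the above values of $a',b'$, the sum telescopes into $-5[(a+4N)(x-1)-b-2x]$, and the overall factor $4$ in \eqref{eq:de16} is the one picked up when clearing the common denominator $\prod_{p=0}^{3}(n-p)E_p$.

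The main obstacle is purely computational: while the reasoning above is algorithmic, the explicit coefficients of $\mathrm{d}^2/\mathrm{d}x^2$, $\mathrm{d}/\mathrm{d}x$, and the zeroth-order term balloon into long polynomial expressions that are only practical to handle with a computer algebra system. As the parenthetical remark after the statement indicates, an equivalent and technically cheaper route is to invert the argument of Theorem \ref{T2.8}: start from the fifth-order equation there for $\rho_{(1),4,N}^{(J)}(x)$, combine it with \eqref{eq:intro3} and \eqref{eq:de9} to account for the weight $w(x)$, and pull back under the substitution $x\mapsto 1/x$. Either route delivers the same equation, and both confirm the closed-form leading coefficients stated in \eqref{eq:de16}.
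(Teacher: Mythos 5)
Your proposal is correct and follows essentially the same route as the paper: both set up the closed tridiagonal $5\times5$ first-order system obtained from \eqref{eq:de10} with $n=4$ (the paper's \eqref{eq:de17}) and eliminate $J_{4,1}^{(N)},\ldots,J_{4,4}^{(N)}$ to arrive at a fifth-order scalar equation for $J_{4,0}^{(N)}(x)$, with the explicit lower-order coefficients left to computer algebra. The forward-versus-decreasing-$p$ order of substitution, the sanity check on the two leading coefficients, and the mention of inverting Theorem \ref{T2.8} are only cosmetic additions to the paper's argument, not a different method.
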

\begin{proof}
Like the preceding proof, setting $n=4$ in \eqref{eq:de10} yields the matrix differential equation
\begin{align} \label{eq:de17}
\frac{\mathrm{d}}{\mathrm{d}x}\begin{bmatrix}J_{4,0}^{(N)}(x)\\J_{4,1}^{(N)}(x)\\J_{4,2}^{(N)}(x)\\J_{4,3}^{(N)}(x)\\J_{4,4}^{(N)}(x)\end{bmatrix}=\begin{bmatrix}\frac{A_0x+B_0}{x(x-1)}&\frac{4E_0}{x(x-1)}&0&0&0\\-D_1&\frac{A_1x+B_1}{x(x-1)}&\frac{3E_1}{x(x-1)}&0&0\\0&-D_2&\frac{A_2x+B_2}{x(x-1)}&\frac{2E_2}{x(x-1)}&0\\0&0&-D_3&\frac{A_3x+B_3}{x(x-1)}&\frac{E_3}{x(x-1)}\\0&0&0&-D_4&0\end{bmatrix}\begin{bmatrix}J_{4,0}^{(N)}(x)\\J_{4,1}^{(N)}(x)\\J_{4,2}^{(N)}(x)\\J_{4,3}^{(N)}(x)\\J_{4,4}^{(N)}(x)\end{bmatrix}.
\end{align}
For $1\leq p\leq4$, the $p\textsuperscript{th}$ row gives an expression for $J_{4,p}^{(N)}(x)$ in terms of $\frac{\mathrm{d}}{\mathrm{d}x}J_{4,p-1}^{(N)}(x)$ and $J_{4,k}^{(N)}(x)$ for $k<p$. Substituting these expressions (in the order of decreasing $p$) into the differential equation corresponding to the fifth row yields a fifth-order differential equation for $J_{4,0}^{(N)}(x)$ similar to that of Lemma \ref{L2.4}.
\end{proof}

Now, we may easily obtain differential equations for $I_{\beta,N}^{(J)}(x)$ for $\beta=2$ and $4$, which in turn give us differential equations for $\rho_{(1),\beta,N}^{(J)}(x)$ for the same $\beta$ values.
\begin{theorem} \label{T2.6}
Define
\begin{align} \label{eq:de18}
\mathcal{D}_{2,N}^{(J)}&=x^3(1-x)^3\frac{\mathrm{d}^3}{\mathrm{d}x^3}+4(1-2x)x^2(1-x)^2\frac{\mathrm{d}^2}{\mathrm{d}x^2} \nonumber
\\&\quad+\left[(a+b+2N)^2-14\right]x^2(1-x)^2\frac{\mathrm{d}}{\mathrm{d}x}-\left[a^2(1-x)+b^2x-2\right]x(1-x)\frac{\mathrm{d}}{\mathrm{d}x} \nonumber
\\&\quad+\tfrac{1}{2}\left[(a+b+2N)^2-4\right](1-2x)x(1-x)+\tfrac{3}{2}\left[a^2-b^2\right]x(1-x) \nonumber
\\&\quad-a^2(1-x)+b^2x.
\end{align}
Then,
\begin{align} \label{eq:de19}
\mathcal{D}_{2,N}^{(J)}\,\rho_{(1),2,N}^{(J)}(x)=0
\end{align}
and
\begin{align} \label{eq:de20}
\mathcal{D}_{2,N}^{(J)}\,\frac{1}{N}W_{2,N}^{(J)}(x)=(a+b+N)(a(1-x)+bx).
\end{align}
\end{theorem}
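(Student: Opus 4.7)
The strategy is to transport the third-order ODE of Lemma~\ref{L2.4} across the duality chain down to $\rho_{(1),2,N}^{(J)}$, and then to pass from the density to the resolvent via the Stieltjes-transform relationship.

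First, for the homogeneous equation \eqref{eq:de19}, I would apply Lemma~\ref{L2.4} after the substitution $N\mapsto N-1$, with the symbols $(a,b)$ of the lemma identified with the dual parameters $(a',b')=(a+b+N-1,-b-N-1)$ arising from the $\kappa=1$, $n=2$, $N\mapsto N-1$ specialisation of \eqref{eq:de6}. This produces a third-order ODE for $J_{2,0}^{(N-1)}(x)$. Next, performing the inversion $x\mapsto 1/x$ and using the polynomial rescaling in \eqref{eq:de9} converts this into an ODE for $I_{2,N-1}^{(J)}(x)$. Finally, since \eqref{eq:intro3} gives $\rho_{(1),2,N}^{(J)}(x)$ as a fixed constant multiple of $w(x)\,I_{2,N-1}^{(J)}(x)$ with $w(x)=x^a(1-x)^b$, I would conjugate the operator by the weight. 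After expansion and grouping, the result should agree with \eqref{eq:de18}.

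For the inhomogeneous equation \eqref{eq:de20}, write $W_{2,N}^{(J)}(x)=\int_0^1 \rho_{(1),2,N}^{(J)}(\lambda)/(x-\lambda)\,\mathrm{d}\lambda$, apply $\mathcal{D}_{2,N}^{(J)}$, and bring the operator inside the integral. For each polynomial coefficient $p_k(x)$ of $\mathcal{D}_{2,N}^{(J)}$, the Taylor expansion of $p_k(x)$ about $\lambda$ splits $p_k(x)(x-\lambda)^{-k-1}$ into a partial-fraction expansion in $(x-\lambda)^{-1}$ plus a polynomial in $x$ whose coefficients are polynomial in $\lambda$. Integration by parts in $\lambda$ then collapses the Cauchy-kernel part to $\int (x-\lambda)^{-1}\bigl[\mathcal{D}_{2,N}^{(J)}\rho_{(1),2,N}^{(J)}\bigr](\lambda)\,\mathrm{d}\lambda$, which vanishes by \eqref{eq:de19}; endpoint contributions at $\lambda=0,1$ vanish when $a,b>0$ by the hard-edge vanishing of $\rho_{(1),2,N}^{(J)}$, the general case following by analytic continuation in $(a,b)$. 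What remains is a polynomial in $x$, linear by degree count, whose two coefficients are then uniquely determined by matching the action of $\mathcal{D}_{2,N}^{(J)}$ on the leading terms of the large-$x$ expansion $W_{2,N}^{(J)}(x)=N/x+m_1^{(J)}/x^2+\cdots$, yielding the right-hand side $(a+b+N)(a(1-x)+bx)$.

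The principal obstacle is the calculation in Step~1: the composition of the inversion $x\mapsto 1/x$, the polynomial rescaling by $(-x)^{2(N-1)}$, and the non-polynomial conjugation by $x^a(1-x)^b$ generates a large number of terms, and the collapse of the dependence on $(a',b',N)$ into the clean combinations $(a+b+2N)^2$, $a^2$, and $b^2$ visible in \eqref{eq:de18} emerges only after substantial cancellation; in practice this is most safely done with computer algebra. The inhomogeneous step is then a bounded-degree linear-algebra problem, modulo the justification of the boundary-term vanishing.
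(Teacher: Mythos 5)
Your overall strategy coincides with the paper's: invert $x\mapsto 1/x$ in the third-order equation of Lemma~\ref{L2.4}, use \eqref{eq:de9} and \eqref{eq:intro3} to conjugate by the weight and the polynomial prefactor (you perform the shift $N\mapsto N-1$ at the start, where the paper works with $\rho_{(1),2,N+1}^{(J)}$ and relabels $N+1\to N$ at the end; these are equivalent), and then pass to the resolvent by a term-by-term Stieltjes transform in which the Cauchy-kernel contribution is annihilated by \eqref{eq:de19} and the remaining degree-one polynomial is fixed by the moments $m_0^{(J)}=N$, $m_1^{(J)}$, $m_2^{(J)}$. That second step is sound and is exactly the content of Appendix~\ref{A}; your determination of the two coefficients by matching the large-$x$ expansion of $W_{2,N}^{(J)}$ is the alternative route the paper itself describes in \S\ref{s3.1}, and your treatment of the boundary terms (vanishing for $a,b>0$, then analytic continuation) is acceptable.

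There is, however, one concrete misstep in your first step. Lemma~\ref{L2.4} is already stated in terms of the \emph{original} Jacobi parameters $(a,b)$: the passage from the dual parameters $(a',b')$ of \eqref{eq:de6} back to $(a,b)$ was absorbed into the lemma when the explicit values of $A_p,B_p,D_p,E_p$ were substituted — this is visible in $C_2(x)=(a+2N)(x-1)-b-2x$ and in the unprimed coefficients $(a+N)(1+4N)+4+N$ and $2N(a+N)$. The only substitution needed to reach $\rho_{(1),2,N}^{(J)}$ is therefore $N\mapsto N-1$. Your instruction to additionally identify the lemma's $(a,b)$ with $(a',b')=(a+b+N-1,-b-N-1)$ applies the duality a second time: it would turn $C_2(x)$ into $(a+b+3N-3)(x-1)+b+N+1-2x$ rather than the correct $(a+2N-2)(x-1)-b-2x$, and the resulting operator would not reduce to \eqref{eq:de18}. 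Once this double application is removed, the computation (inversion, multiplication by $x^{a+2N-2}(1-x)^{b}$, expansion) proceeds as you describe and yields the stated operator.
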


\begin{proof}
We change variables $x\mapsto1/x$ in \eqref{eq:de13} to obtain a differential equation for $J_{2,0}^{(N)}(1/x)$, using the fact that $\frac{\mathrm{d}}{\mathrm{d}(1/x)}=-x^2\frac{\mathrm{d}}{\mathrm{d}x}$. Since this differential equation is homogeneous, we can ignore constants of proportionality and substitute in
\begin{align*}
J_{2,0}^{(N)}(1/x)=x^{-a-2N}(1-x)^{-b}\rho_{(1),2,N+1}^{(J)}(x)
\end{align*}
according to \eqref{eq:intro3} and \eqref{eq:de9}. Repeatedly applying the product rule and then replacing $N+1$ by $N$ gives \eqref{eq:de19}. Applying the Stieltjes transform to this equation term by term (see Appendix \ref{A}) and substituting in the values of the spectral moments $m_1^{(J)}$ and $m_2^{(J)}$ with $\beta=2$ \cite{MRW15} yields \eqref{eq:de20}.
\end{proof}
This differential equation is a generalisation of that given in \cite[\S4]{GKR05} for the $\beta=2$ Legendre ensemble, which is the Jacobi ensemble with $a=b=0$. It lowers the order by one relative to the fourth-order differential equation
for $\rho_{(1),2,N}^{(J)}(x)$ given in the recent work \cite[Prop.~6.7]{CMOS18}.

\begin{remark} \label{R2.7}
It has been observed in \cite[\S 3.3]{FT18} that the third-order differential equation for the spectral density in the Gaussian and Laguerre cases for $\beta = 2$ are equivalent to particular $\sigma$ Painlev\'e equations
implicit from the characterisation of gap probabilities in those cases (see e.g.~\cite[Ch.~8]{Fo10}). An analogous result holds true for the differential equation (\ref{eq:de19}). Thus, with
$v_1 = v_3 = N + (a+b)/2$, $v_2 = (a+b)/2$, $v_4 = (b-a)/2$, in studying the gap for the interval $(0,s)$ one encounters the nonlinear equation \cite[Eq.~(8.76)]{Fo10}
\begin{eqnarray*}\label{8.sf}
&&
(t(1-t)f'')^2 - 4t(1-t)(f')^3 + 4(1-2t)(f')^2f + 4f'f^2 - 4 f^2 v_1^2
\nonumber \\
&&
\quad + (f')^2\Big ( 4tv_1^2(1-t) - (v_2 - v_4)^2 - 4tv_2 v_4 \Big )
+ 4ff'(-v_1^2+2tv_1^2+v_2v_4)  = 0,
\end{eqnarray*}
subject to the boundary condition
$
f(t) \underset{t\rightarrow0^+}{\sim} - \xi t(1-t) \rho_{(1),2,N}^{(J)}(t)$. Substituting this boundary condition for $f$ and equating terms of order $\xi^2$ shows that $u(t):=  t(1-t) \rho_{(1),2,N}^{(J)}(t)$ satisfies the
second-order nonlinear differential equation
\begin{multline} \label{eq:de21}
(t(1-t) u''(t))^2 - 4 (u(t))^2 v_1^2 + (u'(t))^2 (4tv_1^2 (1 - t) - (v_2 - v_4)^2 - 4 t v_2 v_4)  \\
+ 4 u(t) u'(t) (- v_1^2 + 2 t v_1^2 + v_2 v_4) = 0.
\end{multline}
Differentiating this, and simplifying, gives a third-order linear differential equation that agrees with \eqref{eq:de19}.
\end{remark}

\begin{theorem} \label{T2.8}
Recalling that $\kappa=\beta/2$, let
\begin{align*}
a_{\beta}:=\frac{a}{\kappa-1},\quad b_{\beta}:=\frac{b}{\kappa-1},\quad N_{\beta}:=(\kappa-1)N
\end{align*}
so that $(a_{4},b_{4},N_{4})=(a,b,N)$ and $(a_{1},b_{1},N_{1})=(-2a,-2b,-N/2)$. For $\beta=1$ or $4$, define
\begin{align} \label{eq:de22}
\mathcal{D}_{\beta,N}^{(J)}&=4x^5(1-x)^5\frac{\mathrm{d}^5}{\mathrm{d}x^5}+40(1-2x)x^4(1-x)^4\frac{\mathrm{d}^4}{\mathrm{d}x^4}+\left[5\tilde{c}^2-493\right]x^4(1-x)^4\frac{\mathrm{d}^3}{\mathrm{d}x^3}\nonumber
\\&\quad-\left[5f_+(x;\tilde{a},\tilde{b})-88\right]x^3(1-x)^3\frac{\mathrm{d}^3}{\mathrm{d}x^3}+41\left[\tilde{a}-\tilde{b}\right]x^3(1-x)^3\frac{\mathrm{d}^2}{\mathrm{d}x^2}\nonumber
\\&\quad+\left[19\tilde{c}^2-539\right](1-2x)x^3(1-x)^3\frac{\mathrm{d}^2}{\mathrm{d}x^2}-22f_-(x;\tilde{a},\tilde{b})x^2(1-x)^2\frac{\mathrm{d}^2}{\mathrm{d}x^2}\nonumber
\\&\quad+16(1-2x)x^2(1-x)^2\frac{\mathrm{d}^2}{\mathrm{d}x^2}+\left[\tilde{c}^4-64\tilde{c}^2+719\right]x^3(1-x)^3\frac{\mathrm{d}}{\mathrm{d}x}\nonumber
\\&\quad-\left[(\tilde{c}^2-45)(\tilde{a}+\tilde{b}-6)+(\tilde{a}-\tilde{b})^2-248\right]x^2(1-x)^2\frac{\mathrm{d}}{\mathrm{d}x}\nonumber
\\&\quad-\left[(\tilde{c}^2-37)(\tilde{a}-\tilde{b})\right](1-2x)x^2(1-x)^2\frac{\mathrm{d}}{\mathrm{d}x}\nonumber
\\&\quad+\left[f_+(x;\tilde{a}^2,\tilde{b}^2)-14f_+(x;\tilde{a},\tilde{b})-16\right]x(1-x)\frac{\mathrm{d}}{\mathrm{d}x}\nonumber
\\&\quad+\tfrac{1}{2}\left[5(\tilde{c}^2-9)(\tilde{a}-\tilde{b})\right]x^2(1-x)^2+\tfrac{1}{2}(\tilde{c}^2-9)^2(1-2x)x^2(1-x)^2 \nonumber
\\&\quad-\tfrac{1}{2}\left[(3\tilde{c}^2-35)f_-(x;\tilde{a},\tilde{b})+\tfrac{7}{2}(\tilde{a}^2-\tilde{b}^2)+4(\tilde{a}-\tilde{b})\right]x(1-x) \nonumber
\\&\quad-\tfrac{1}{2}\left[4\tilde{c}^2-36+\tfrac{3}{2}(\tilde{a}-\tilde{b})^2\right](1-2x)x(1-x)+f_-(x;\tilde{a}^2,\tilde{b}^2),
\end{align}
where
\begin{align} \label{eq:de23}
\tilde{a}=a_{\beta}(a_{\beta}-2),\quad\tilde{b}&=b_{\beta}(b_{\beta}-2),\quad \tilde{c}=a_{\beta}+b_{\beta}+4N_{\beta}-1,\nonumber
\\f_{\pm}(x;\tilde{a},\tilde{b})&=\tilde{a}(1-x)\pm\tilde{b}x.
\end{align}
Then, for $\beta=1$ and $4$,
\begin{align} \label{eq:de24}
\mathcal{D}_{\beta,N}^{(J)}\,\rho_{(1),\beta,N}^{(J)}(x)=0
\end{align}
and
\begin{multline} \label{eq:de25}
\mathcal{D}_{\beta,N}^{(J)}\,\frac{1}{N}W_{\beta,N}^{(J)}(x)=(\tilde{c}-2N_{\beta})x(1-x)\Big[(a_{\beta}+b_{\beta})(a_{\beta}+b_{\beta}-2)(a_{\beta}(1-x)+b_{\beta}x)
\\+4N_{\beta}(\tilde{c}-2N_{\beta})(2a_{\beta}(1-x)+2b_{\beta}x-1)-a_{\beta}b_{\beta}(a_{\beta}+b_{\beta}-6)-4(a_{\beta}+b_{\beta}-1)\Big]
\\-(\tilde{c}-2N_{\beta})\left(a_{\beta}(a_{\beta}-2)^2(1-x)^2+b_{\beta}(b_{\beta}-2)^2x^2\right).
\end{multline}
\end{theorem}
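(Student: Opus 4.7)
The plan is to prove Theorem \ref{T2.8} by mirroring the two-step strategy used for Theorem \ref{T2.6}, but now bootstrapping off the fifth-order differential equation for $J_{4,0}^{(N)}(x)$ recorded in Lemma \ref{L2.5}, and exploiting duality to handle the $\beta=1$ case. The parameter definitions $(a_\beta,b_\beta,N_\beta)$ in the statement already telegraph this strategy: the case $\beta=4$ is the ``primary'' computation, and $\beta=1$ will follow by the symmetry $(a,b,N)\mapsto(-2a,-2b,-N/2)$, which is the Jacobi analogue of the $(N,\kappa)\mapsto(-N\kappa,1/\kappa)$ duality discussed in Remark \ref{R2.2}(2).

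For $\beta=4$, I would start from the fifth-order ODE of Lemma \ref{L2.5} satisfied by $J_{4,0}^{(N)}(x)$, change variables $x\mapsto 1/x$ using $\tfrac{\mathrm{d}}{\mathrm{d}(1/x)}=-x^2\tfrac{\mathrm{d}}{\mathrm{d}x}$ (and its iterates, via Fa\`a di Bruno), then insert the identification
\begin{equation*}
J_{4,0}^{(N)}(1/x)\propto x^{-a-4N}(1-x)^{-b}\,\rho_{(1),4,N+1}^{(J)}(x)
\end{equation*}
that follows from combining \eqref{eq:intro3} with \eqref{eq:de9}. Applying the product (Leibniz) rule five times and then shifting $N+1\mapsto N$ will yield an equation of the form $\mathcal{D}_{4,N}^{(J)}\rho_{(1),4,N}^{(J)}(x)=0$. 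Matching with the stated form \eqref{eq:de22} amounts to checking that, after collecting polynomial coefficients, the substitution $(a_4,b_4,N_4)=(a,b,N)$ reproduces the tilde-variables $\tilde a,\tilde b,\tilde c$ and the symmetrised polynomials $f_\pm$. This is bookkeeping for a computer algebra system, and indeed the authors' parenthetical remark in Lemma \ref{L2.5} confirms that running this chain of substitutions in reverse is the cleanest way to recover the suppressed lower-order coefficients there.

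For $\beta=1$, I would invoke the duality \eqref{eq:de6} with $n=\beta=4$ (which was already used implicitly above) in the reciprocal direction: it equates a Jacobi average over $\text{JE}_{1,N}$ with one over $\text{JE}_{4,n}$ under the parameter map $(a,b,N)\mapsto(a',b',n)$ with $a',b'$ as in \eqref{eq:de6}. The differential operator \eqref{eq:de22} derived in the $\beta=4$ step actually characterises the $\beta=4$ average \emph{regardless} of whether its parameters encode a genuine matrix model or the dual of a $\beta=1$ model. So one translates the operator back to $\beta=1$ variables by the substitution $(a,b,N)\mapsto(a_1,b_1,N_1)=(-2a,-2b,-N/2)$, which is exactly the packaging provided by the $(a_\beta,b_\beta,N_\beta)$ notation; this is why writing the coefficients as polynomials in $\tilde a,\tilde b,\tilde c$ simultaneously captures both cases. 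One must verify that $(a_1,b_1,N_1)$ and $(a_4,b_4,N_4)$ produce the same $\tilde a,\tilde b,\tilde c$ symbolically up to the $\beta$-independent relations in \eqref{eq:de23}; this is a short algebraic check.

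Finally, the resolvent equation \eqref{eq:de25} is obtained by applying the Stieltjes transform $\rho\mapsto \int\rho(\lambda)/(x-\lambda)\,\mathrm{d}\lambda$ term by term to \eqref{eq:de24}, as was done for \eqref{eq:de20} in the proof of Theorem \ref{T2.6} via the appendix. Because $\mathcal{D}_{\beta,N}^{(J)}$ is a differential operator with polynomial coefficients in $x$, each monomial $x^j(1-x)^k \tfrac{\mathrm{d}^m}{\mathrm{d}x^m}$ produces, upon transforming, a combination of $W_{\beta,N}^{(J)}(x)$ and of the first few spectral moments $m_0,m_1,\ldots,m_4$ appearing as the ``boundary'' terms from integration by parts against $1/(x-\lambda)$. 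Substituting the known $\beta=1,4$ Jacobi moments from \cite{MRW15} and collecting yields the explicit right-hand side in \eqref{eq:de25}. The main obstacle throughout is not conceptual but computational: the fifth-order operator has roughly twenty polynomial coefficient terms, so controlling the algebra and cross-checking the $\tilde c^4, \tilde c^2$ structure against both $\beta=4$ and $\beta=1$ specialisations requires disciplined use of computer algebra. One nontrivial consistency check that I would perform at the end is that the inhomogeneous term in \eqref{eq:de25} vanishes when $N=0$ and when $a=b=0$ only to the extent predicted by the duality, ruling out transcription errors.
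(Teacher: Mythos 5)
Your $\beta=4$ argument coincides with the paper's: change variables $x\mapsto1/x$ in the fifth-order equation of Lemma \ref{L2.5}, substitute $J_{4,0}^{(N)}(1/x)\propto x^{-a-4N}(1-x)^{-b}\rho_{(1),4,N+1}^{(J)}(x)$, shift $N+1\mapsto N$, and then Stieltjes-transform using the moments $m_1^{(J)},\dots,m_4^{(J)}$ to get \eqref{eq:de25} for $\beta=4$. That part is fine.

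The genuine gap is in your $\beta=1$ step. You propose to justify it by invoking the duality \eqref{eq:de6} ``in the reciprocal direction,'' but \eqref{eq:de6} is a statement about the polynomial averages $\mean{\prod_l(x_l-x)^n}$, and it gives access to the spectral density only through \eqref{eq:intro3}, i.e.\ through $\mean{\prod_l|x-x_l|^{\beta}}$ with $n=\beta$ \emph{even} so that the absolute value can be dropped. For $\beta=1$ the exponent is odd, the absolute value cannot be removed, and \eqref{eq:de6} says nothing about $\rho_{(1),1,N}^{(J)}$; your assertion that the $\beta=4$ operator ``characterises the average regardless of whether its parameters encode a genuine matrix model or the dual of a $\beta=1$ model'' does not establish that $\rho_{(1),1,N}^{(J)}$ satisfies the parameter-substituted equation. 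The ingredient actually needed — and the one the paper uses — is the \emph{spectral-moment/resolvent} duality $W_{4,N}^{(J)}(x;a,b)=W_{1,-N/2}^{(J)}(-2x;-2a,-2b)$ from \cite{DP12,FRW17,FLD16} (the Jacobi analogue of the $(N,\kappa)\mapsto(-N\kappa,1/\kappa)$ invariance noted in Remark \ref{R2.2}). Consequently the order of operations must be reversed relative to your plan: one first transports the $\beta=4$ \emph{resolvent} equation \eqref{eq:de25} to $\beta=1$ via this duality (which also accounts for the accompanying rescaling of $x$ that your substitution $(a,b,N)\mapsto(-2a,-2b,-N/2)$ alone omits), and only then applies the \emph{inverse} Stieltjes transform to obtain the $\beta=1$ density equation \eqref{eq:de24}. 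Deriving \eqref{eq:de24} for $\beta=1$ directly and Stieltjes-transforming afterwards, as you suggest, has no independent justification.
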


\begin{proof}
The proof is done in four steps. First, to see that equation \eqref{eq:de24} holds for $\beta=4$, one undertakes the same steps as in the proof of Theorem \ref{T2.6}. That is, change variables $x\mapsto 1/x$ in \eqref{eq:de16}, substitute in
\begin{align*}
J_{4,0}^{(N)}(1/x)=x^{-a-4N}(1-x)^{-b}\rho_{(1),4,N+1}^{(J)}(x),
\end{align*}
and then replace $N+1$ by $N$. For the second step, apply the Stieltjes transform according to Appendix \ref{A} and substitute in the values of the spectral moments $m_1^{(J)}$ to $m_4^{(J)}$ \cite{MOPS,MRW15} to obtain \eqref{eq:de25} for $\beta=4$. The third step proves \eqref{eq:de25} for $\beta=1$ by employing the known duality \cite{DP12,FRW17,FLD16}
\begin{equation*}
W_{4,N}^{(J)}(x;a,b)=W_{1,-N/2}^{(J)}(-2x;-2a,-2b),
\end{equation*}
which leads us to formulate \eqref{eq:de25} in terms of the $(a_{\beta},b_{\beta},N_{\beta})$ parameters. Since this step essentially redefines constants, applying the inverse Stieltjes transform to this result returns \eqref{eq:de24} with the new constants.
\end{proof}

In the way that \eqref{eq:de22} is presented, all of its dependencies on $a,b$ and $N$ are captured by $\tilde{a},\tilde{b}$ and $\tilde{c}$. Moreover, it can be seen that this operator is invariant under the symmetry $(x,a,b)\leftrightarrow(1-x,b,a)$, which is a property of $\rho_{(1),\beta,N}^{(J)}(x)$. Equations \eqref{eq:de19} and \eqref{eq:de24} have been checked for $N=1,2$ and to hold in the large $N$ limit, while \eqref{eq:de20} and \eqref{eq:de25} have been checked to be consistent with expressions for the topological expansion coefficients $W_{\beta}^{(J),0}(x),\ldots, W_{\beta}^{(J),4}(x)$ generated via the method presented in \cite{FRW17}. It should be noted that while the moments $m_1^{(J)}$ to $m_4^{(J)}$ of the Jacobi $\beta$ ensemble's eigenvalue density are complicated rational functions of $a,b$, and $N$, the right-hand side of \eqref{eq:de20} and \eqref{eq:de25} are relatively simple polynomials, even though they are linear combinations of these moments.

\subsection{The Laguerre ensemble differential equations}\label{s2.2}
The eigenvalue density $\rho_{(1),\beta,N}^{(L)}(x)$ of the Laguerre $\beta$ ensemble can be obtained from that of the Jacobi $\beta$ ensemble via a limiting procedure: From \eqref{eq:intro3}, upon straightforward scaling, we see
\begin{equation} \label{eq:de26}
\rho_{(1),\beta,N+1}^{(L)}(x) = \lim_{b\rightarrow\infty}b^{(N+3)N\kappa+(N+2)a+N+1}\,\rho_{(1),\beta,N+1}^{(J)}\Big (\frac{x}{b} \Big ).
\end{equation}
This fact allows one to transform differential equations satisfied by $\rho_{(1),\beta,N}^{(J)}(x)$ into analogues satisfied by $\rho_{(1),\beta,N}^{(L)}(x)$, which will be presented in a moment.

As an aside, suppose that one would like to obtain differential equations for $\rho_{(1),\beta,N}^{(L)}(x)$ without prior knowledge of the analogous differential equations for $\rho_{(1),\beta,N}^{(J)}(x)$, i.e. if the results of \S\ref{s2.1} were not available, or if one were interested in ensembles with $\beta\notin\{1,2,4\}$. Then, it is actually more efficient to circumvent computation of the differential equations for $\rho_{(1),\beta,N}^{(J)}(x)$ and use the aforementioned limiting procedure indirectly. That is, let
$$
L_{\beta,p}^{(N)}(x) =\lim_{b\rightarrow\infty}\left(-\tfrac{x}{b}\right)^{\beta N+p}J_{\beta,p}^{(N)}(b/x), \qquad
 \tilde{L}_{\beta,p}^{(N)}(x) =x^ae^{-x}\,L_{\beta,p}^{(N)}(x)
$$
so that by \eqref{eq:intro3}, $\rho_{(1),\beta,N+1}^{(L)}(x)$ is proportional to $\tilde{L}_{\beta,0}^{(N)}(x)$ when $\beta$ is a positive even integer. Equation \eqref{eq:de10} gives a differential-difference equation for $\left(-\tfrac{x}{b}\right)^{\beta N+p}J_{\beta,p}^{(N)}(b/x)$ and taking the limit $b\rightarrow\infty$ thus gives a differential-difference equation for $L_{\beta,p}^{(N)}(x)$. Substituting $L_{\beta,p}^{(N)}(x)=x^{-a}e^x\tilde{L}_{\beta,p}^{(N)}(x)$ then gives a differential-difference equation for $\tilde{L}_{\beta,p}^{(N)}(x)$. These equations with $p=0,1,\ldots,\beta$ are equivalent to matrix differential equations not unlike \eqref{eq:de14} and \eqref{eq:de17}. However, having taken the limit $b\rightarrow\infty$, we ensure that these new matrix differential equations are simpler, and have the added benefit of simplifying down to scalar differential equations for $\rho_{(1),\beta,N+1}^{(L)}(x)$ rather than for auxiliary functions. One may then apply the Stieltjes transform to obtain differential equations for the resolvents, and use appropriate $\beta\leftrightarrow4/\beta$ moment dualities \cite{DE05,FRW17,FLD16} to obtain mirror differential equations like those seen in Theorem \ref{T2.8}.

Since we are interested in $\rho_{(1),\beta,N}^{(L)}(x)$ with $\beta\in\{1,2,4\}$ and we have differential equations for $\rho_{(1),\beta,N}^{(J)}(x)$ for these $\beta$ values, we use the more direct approach to give the following proposition.

\begin{proposition} \label{P2.9}
Retaining the definitions of $a_{\beta},N_{\beta}$, and $\tilde{a}$ from Theorem \ref{T2.8}, define
\begin{align} \label{eq:de27}
\mathcal{D}_{2,N}^{(L)}&=x^3\frac{\mathrm{d}^3}{\mathrm{d}x^3}+4x^2\frac{\mathrm{d}^2}{\mathrm{d}x^2}-\left[x^2-2(a+2N)x+a^2-2\right]x\frac{\mathrm{d}}{\mathrm{d}x}+\left[(a+2N)x-a^2\right]
\end{align}
and for $\beta=1$ or $4$,
\begin{align} \label{eq:de28}
\mathcal{D}_{\beta,N}^{(L)}&=4x^5\frac{\mathrm{d}^5}{\mathrm{d}x^5}+40x^4\frac{\mathrm{d}^4}{\mathrm{d}x^4}-\left[5\left(\frac{x}{\kappa-1}\right)^2-10\left(a_{\beta}+4N_{\beta}\right)\frac{x}{\kappa-1}+5\tilde{a}-88\right]x^3\frac{\mathrm{d}^3}{\mathrm{d}x^3}\nonumber
\\&\quad-\left[16\left(\frac{x}{\kappa-1}\right)^2-38\left(a_{\beta}+4N_{\beta}\right)\frac{x}{\kappa-1}+22\tilde{a}-16\right]x^2\frac{\mathrm{d}^2}{\mathrm{d}x^2}\nonumber
\\&\quad+\left[\left(\frac{x}{\kappa-1}\right)^2-4\left(a_{\beta}+4N_{\beta}\right)\left(\frac{x}{\kappa-1}\right)+2\left(2\left(a_{\beta}+4N_{\beta}\right)^2+\tilde{a}-2\right)\right]\frac{x^3}{(\kappa-1)^2}\frac{\mathrm{d}}{\mathrm{d}x}\nonumber
\\&\quad-\left[4(\tilde{a}-3)\left(a_{\beta}+4N_{\beta}\right)\frac{x}{\kappa-1}-\tilde{a}^2+14\tilde{a}+16\right]x\frac{\mathrm{d}}{\mathrm{d}x}-\left(a_{\beta}+4N_{\beta}\right)\left(\frac{x}{\kappa-1}\right)^3\nonumber
\\&\quad+\left(2\left(a_{\beta}+4N_{\beta}\right)^2+\tilde{a}\right)\left(\frac{x}{\kappa-1}\right)^2-(3\tilde{a}+4)\left(a_{\beta}+4N_{\beta}\right)\frac{x}{\kappa-1}+\tilde{a}^2.
\end{align}
Then, for $\beta=1,2$, and $4$,
\begin{equation} \label{eq:de29}
\mathcal{D}_{\beta,N}^{(L)}\,\rho_{(1),\beta,N}^{(L)}(x)=0
\end{equation}
and
\begin{equation} \label{eq:de30}
\mathcal{D}_{\beta,N}^{(L)}\,\frac{1}{N}W_{\beta,N}^{(L)}(x)=
\begin{cases}
(x+a),&\beta=2,
\\ \frac{4}{\kappa-1}\left[2\left(\frac{x}{\kappa-1}\right)^2+(2a_{\beta}-1)\frac{x}{\kappa-1}\right]N_{\beta}&
\\\quad-\frac{1}{\kappa-1}\left[\left(\frac{x}{\kappa-1}\right)^3-(a_{\beta}+2)\left(\frac{x}{\kappa-1}\right)^2\right]&
\\\quad+\frac{1}{\kappa-1}\left[(a_{\beta}^2+4a_{\beta}-4)\frac{x}{\kappa-1}-a_{\beta}(a_{\beta}-2)^2\right],&\beta=1,4.
\end{cases}
\end{equation}
\end{proposition}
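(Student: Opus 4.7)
The plan is to obtain these Laguerre equations as the large-$b$ limit of their Jacobi counterparts from Theorems \ref{T2.6} and \ref{T2.8}, via the scaling relation \eqref{eq:de26}, and then to use the Stieltjes transform and a Laguerre $\beta \leftrightarrow 4/\beta$ duality to handle the resolvent equations and the $\beta = 1$ case. For $\beta = 2$ and $\beta = 4$, I would first rescale variables by setting $y = x/b$ in $\mathcal{D}_{\beta,N}^{(J)} \rho_{(1),\beta,N}^{(J)}(y) = 0$; derivatives acquire one factor of $b$ per order, and each polynomial coefficient of $\mathcal{D}_{\beta,N}^{(J)}$ expands in powers of $b$ (with $(1-y)^k \to 1$ and $a + b + 2N \sim b$ as $b \to \infty$). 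After normalising by the prefactor appearing in \eqref{eq:de26} and isolating the dominant order as $b \to \infty$, the surviving operator should be precisely $\mathcal{D}_{\beta,N}^{(L)}$ of \eqref{eq:de27}--\eqref{eq:de28} acting on $\rho_{(1),\beta,N}^{(L)}$, establishing \eqref{eq:de29} for $\beta = 2, 4$.

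For the inhomogeneous resolvent equations \eqref{eq:de30} with $\beta = 2, 4$, I would apply the Stieltjes transform to \eqref{eq:de29} term by term using the integration-by-parts rules compiled in Appendix \ref{A}. The action of $\mathcal{D}_{\beta,N}^{(L)}$ transfers to $W_{\beta,N}^{(L)}(x)/N$ modulo a polynomial remainder in $x$ whose coefficients are explicit linear combinations of the low-order spectral moments $m_k^{(L)}$; substituting the known evaluations of these moments from \cite{MRW15} yields the explicit right-hand side of \eqref{eq:de30}. For the $\beta = 1$ case I would avoid redoing Step 1 and instead invoke the Laguerre moment/resolvent duality $W_{4,N}^{(L)}(x; a) \leftrightarrow W_{1,-N/2}^{(L)}(\,\cdot\,; -2a)$ from \cite{DP12,FRW17,FLD16}, parallel to what was used in the proof of Theorem \ref{T2.8}. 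In the uniform $(a_\beta, N_\beta)$ parameterisation set up in the statement, the $\beta = 4$ result then transcribes directly into the $\beta = 1$ result, and applying the inverse Stieltjes transform recovers \eqref{eq:de29} for $\beta = 1$.

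The main obstacle is the first step for $\beta = 4$: the Jacobi operator \eqref{eq:de22} is of fifth order with polynomial coefficients of substantial degree in $x$, $a$, $b$, $N$, so identifying exactly which monomials survive the rescaling $y = x/b$ and the limit $b \to \infty$, and verifying that the surviving operator reassembles into \eqref{eq:de28} rather than merely something of the right shape, requires careful and extensive bookkeeping best handled by computer algebra. One should also verify compatibility with the $\beta = 2$ case, where \eqref{eq:de27} must emerge from the explicit third-order Jacobi operator \eqref{eq:de18}; this serves as a useful sanity check before tackling $\beta = 4$. Once the homogeneous density equations are in hand, the remaining steps are essentially algebraic manipulations relying respectively on the Appendix \ref{A} rules and on a change of parameters, and they inherit the correctness of the Jacobi results established earlier.
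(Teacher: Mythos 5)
Your proposal follows the paper's proof essentially verbatim: the homogeneous density equations are obtained by the change of variables $x\mapsto x/b$ and extraction of the leading order in $b$ from the Jacobi equations \eqref{eq:de19} and \eqref{eq:de24} via \eqref{eq:de26}, and the inhomogeneous resolvent equations by the (here particularly simple) Stieltjes transform of Appendix \ref{A}, which the paper offers as an alternative to taking the same limit of \eqref{eq:de20} and \eqref{eq:de25}. The only divergence is your handling of $\beta=1$ through the Laguerre-level duality $W_{\beta,N}^{(L)}(x;a)=W_{4/\beta,-\kappa N}^{(L)}(-x/\kappa;-a/\kappa)$ followed by an inverse Stieltjes transform; this works (it transplants the proof of Theorem \ref{T2.8} to the Laguerre setting, and $x/(\kappa-1)$ is indeed the duality-invariant variable), but it is an unnecessary detour, since \eqref{eq:de24} is already stated for $\beta=1$ in the $(a_\beta,b_\beta,N_\beta)$ parameterisation and the $b\to\infty$ limit therefore delivers the $\beta=1$ Laguerre equation directly alongside $\beta=4$.
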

\begin{proof}
To obtain \eqref{eq:de29} from \eqref{eq:de19} and \eqref{eq:de24}, change variables $x\mapsto x/b$, multiply both sides by $b^{(N+2)(N-1)\kappa+(N+1)a+N}$ according to \eqref{eq:de26}, and then take the limit $b\rightarrow\infty$. This is equivalent to changing variables $x\mapsto x/b$, extracting terms of leading order in $b$, then rewriting $\rho_{(1),\beta,N}^{(J)}$ as $\rho_{(1),\beta,N}^{(L)}$.

To obtain \eqref{eq:de30}, apply the same prescription to \eqref{eq:de20} and \eqref{eq:de25}. Alternatively, one may apply the Stieltjes transform to \eqref{eq:de29}. This is considerably easier than in the Jacobi case (see Appendix \ref{A}), since
\begin{align*}
\int_0^{\infty}\frac{x^n}{s-x}\frac{\mathrm{d}^n}{\mathrm{d}x^n}\rho_{(1),\beta,N}^{(L)}(x)\,\mathrm{d}x=s^n\frac{\mathrm{d}^n}{\mathrm{d}s^n}W_{\beta,N}^{(L)}(s)
\end{align*}
can be computed through repeated integration by parts using the fact that the boundary terms vanish at all stages: For $a>-1$ real and $m>n$ non-negative integers, $\frac{x^m}{s-x}\frac{\mathrm{d}^n}{\mathrm{d}x^n}\rho_{(1),\beta,N}^{(L)}(x)$ has a factor of $x^{a+m-n}$ which dominates at $x=0$ and a factor of $e^{-x}$ which dominates as $x\rightarrow\infty$.
\end{proof}
Equation \eqref{eq:de29} has been checked for $N=1$ and $2$ using computer algebra. From inspection, it seems that the natural variable of \eqref{eq:de29} and \eqref{eq:de30} is $x/(\kappa-1)$, which is the limit $\lim_{b\rightarrow\infty}b_{\beta}\left(\tfrac{x}{b}\right)$. This is in keeping with the duality \cite{DE05,FRW17}
\begin{equation*}
W_{\beta,N}^{(L)}(x;a)=W_{4/\beta,-\kappa N}^{(L)}(-x/\kappa;-a/\kappa).
\end{equation*}
Strictly speaking, changing variables to $y=x/(\kappa-1)$ is not natural and would be counterproductive since the corresponding weight $\left[(\kappa-1)y\right]^ae^{(1-\kappa)y}$ vanishes at $\kappa=1$ and has different support depending on whether $\kappa<1$ or $\kappa>1$.

\subsection{Additional Gaussian ensemble differential equations}\label{s2.3}
The previous subsection contains a discussion on how one would obtain differential equations for the densities and resolvents of Laguerre ensembles with even $\beta$ without knowledge of differential equations for the corresponding Jacobi $\beta$ ensembles' densities and resolvents. We now elucidate those ideas by explicitly applying them to the Gaussian ensembles with $\beta=6$ and consequently, by duality,  $\beta=2/3$. Indeed, since we haven't investigated these $\beta$ values in the Jacobi case, we cannot immediately apply the direct limiting approach used in the proof of Proposition \ref{P2.9} and it is in fact more efficient to instead use the method presented below.

Like in the Jacobi and Laguerre cases, our initial focus is the average
\begin{equation} \label{eq:de31}
I_{\beta,N}^{(G)}(x):=\left\langle\prod_{l=1}^N|x-x_l|^{\beta}\right\rangle_{\text{GE}_{\beta,N}}.
\end{equation}
Replacing $x$ by $\sqrt{\tfrac{2}{\beta}}x$ in duality \eqref{eq:intro5} and then factoring $(-\mathrm{i})^{nN}$ from the right-hand side shows that for even $\beta$, $I_{\beta,N}^{(G)}(x)$ is proportional to $G_{\beta,0}^{(N)}(x)$ where
\begin{equation} \label{eq:de32}
G_{n,p}^{(N)}(x):=(-\mathrm{i})^{nN+p}\left\langle\prod_{l=1}^n\left(x_l+\mathrm{i}\sqrt{\tfrac{2}{\beta}}x\right)^{N+\chi_{l\leq p}}\right\rangle_{\text{GE}_{4/\beta,n}},\quad 0\leq p\leq n.
\end{equation}
Setting $a'=b'=L$ and changing variables $x_l\mapsto\tfrac{1}{2}\left(1+\tfrac{x_l}{\sqrt{L}}\right)$ in $J_{n,p}^{(N)}(x)$ \eqref{eq:de8}, we see that
\begin{multline} \label{eq:de33}
G_{n,p}^{(N)}(x)=\lim_{L\rightarrow\infty}4^{nL}(-2\mathrm{i}\sqrt{L})^{nN+p}(2\sqrt{L})^{n(n-1)/\kappa+n}\left.J_{n,p}^{(N)}\left(\tfrac{1}{2}\left(1-\mathrm{i}\sqrt{\tfrac{2}{\beta L}}x\right)\right)\right|_{a'=b'=L}.
\end{multline}
Thus, equation \eqref{eq:de10} simplifies to a differential-difference equation for $G_{n,p}^{(N)}(x)$,
\begin{multline} \label{eq:de34}
(n-p)G_{n,p+1}^{(N)}(x)=\tfrac{(n-p)}{\sqrt{\kappa}}xG_{n,p}^{(N)}(x)-\tfrac{\sqrt{\kappa}}{2}\frac{\mathrm{d}}{\mathrm{d}x}G_{n,p}^{(N)}(x)+\tfrac{p}{2}\left(\tfrac{1}{\kappa}(n-p)+N+1\right)G_{n,p-1}^{(N)}(x)
\end{multline}
(cf.~\cite[Eq.~(5.5)]{FT19}).
Taking the $L\rightarrow\infty$ limit early has already yielded a simpler equation than \eqref{eq:de10}. However, we can take this one step further by defining
\begin{equation} \label{eq:de35}
\tilde{G}_{n,p}^{(N)}(x)=e^{-x^2}G_{n,p}^{(N)}(x)
\end{equation}
so that $\rho_{(1),\beta,N+1}^{(G)}(x)$ is proportional to $\tilde{G}_{\beta,0}^{(N)}(x)$. It is then easy to obtain
\begin{multline} \label{eq:de36}
\frac{\mathrm{d}}{\mathrm{d}x}\tilde{G}_{n,p}^{(N)}(x)=\tfrac{p}{\sqrt{\kappa}}\left(\tfrac{1}{\kappa}(n-p)+N+1\right)\tilde{G}_{n,p-1}^{(N)}(x)
\\+2\left(\tfrac{1}{\kappa}(n-p)-1\right)x\tilde{G}_{n,p}^{(N)}(x)-\tfrac{2}{\sqrt{\kappa}}(n-p)\tilde{G}_{n,p+1}^{(N)}(x).
\end{multline}
With $n=\beta\in2\mathbb{N}$ and $p=0,1,\ldots,n$, this is equivalent to a matrix differential equation which is moreover equivalent to a scalar differential equation for $\rho_{(1),\beta,N+1}^{(G)}(x)$.

\begin{proposition} \label{P2.10}
For $\beta=2/3$ or $6$, define
\begin{align} \label{eq:de37}
\mathcal{D}_{\beta,N}^{(G)}&=81(\kappa-1)^{7/2}\frac{\mathrm{d}^7}{\mathrm{d}x^7}+1008\left(3N_{\beta}-\frac{2x^2}{\kappa-1}+2\right)(\kappa-1)^{5/2}\frac{\mathrm{d}^5}{\mathrm{d}x^5} \nonumber
\\&\quad+2016x(\kappa-1)^{3/2}\frac{\mathrm{d}^4}{\mathrm{d}x^4} \nonumber
\\&\quad+64\left(21N_{\beta}-\frac{14x^2}{\kappa-1}+5\right)\left(21N_{\beta}-\frac{14x^2}{\kappa-1}+23\right)(\kappa-1)^{3/2}\frac{\mathrm{d}^3}{\mathrm{d}x^3} \nonumber
\\&\quad+9984\left(3N_{\beta}-\frac{2x^2}{\kappa-1}+2\right)x(\kappa-1)^{1/2}\frac{\mathrm{d}^2}{\mathrm{d}x^2} \nonumber
\\&\quad+256\Bigg[54N_{\beta}\left(4N_{\beta}^2+8N_{\beta}+3\right)-(432N_{\beta}^2+576N_{\beta}+57)\frac{x^2}{\kappa-1} \nonumber
\\&\quad+96(3N_{\beta}+2)\frac{x^4}{(\kappa-1)^2}-\frac{64x^6}{(\kappa-1)^3}-20\Bigg](\kappa-1)^{1/2}\frac{\mathrm{d}}{\mathrm{d}x} \nonumber
\\&\quad+256\left(144N_{\beta}^2+192N_{\beta}-64(3N_{\beta}+2)\frac{x^2}{\kappa-1}+\frac{64x^4}{(\kappa-1)^2}+25\right)\frac{x}{(\kappa-1)^{1/2}},
\end{align}
where we retain the definition $N_{\beta}=(\kappa-1)N$. Then, for these same $\beta$ values,
\begin{equation} \label{eq:de38}
\mathcal{D}_{\beta,N}^{(G)}\,\rho_{(1),\beta,N}^{(G)}(x)=0
\end{equation}
and
\begin{align} \label{eq:de39}
\mathcal{D}_{\beta,N}^{(G)}\,\frac{1}{N}W_{\beta,N}^{(G)}(x)&=\frac{2^{11}}{\sqrt{\kappa-1}}\left(\frac{4x^2}{\kappa-1}-6N_{\beta}-7\right)^2-3\frac{2^{12}}{\sqrt{\kappa-1}}.
\end{align}
\end{proposition}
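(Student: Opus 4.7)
The plan is to apply the elimination procedure outlined in the paragraphs preceding the statement, using the differential-difference equation \eqref{eq:de36} for $\tilde{G}_{n,p}^{(N)}(x)$ with $n=\beta=6$, and then pass to $\beta=2/3$ by invoking the $(N,\kappa)\mapsto(-N\kappa,1/\kappa)$ duality recalled in Remark \ref{R2.2}.

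First, I would specialise \eqref{eq:de36} to $n=6$ and $p=0,1,\ldots,6$, obtaining seven coupled first-order ODEs in the seven functions $\tilde{G}_{6,p}^{(N)}(x)$. The $p=0$ equation expresses $\tilde{G}_{6,1}^{(N)}$ as a linear differential polynomial in $\tilde{G}_{6,0}^{(N)}$. Feeding this into the $p=1$ equation yields $\tilde{G}_{6,2}^{(N)}$ as a second-order differential polynomial in $\tilde{G}_{6,0}^{(N)}$; continuing for $p=2,3,4,5$ gives each successive $\tilde{G}_{6,p+1}^{(N)}$ as a differential polynomial of order $p+1$ in $\tilde{G}_{6,0}^{(N)}$. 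At $p=6$ the term $\tilde{G}_{6,7}^{(N)}$ is absent because of the prefactor $n-p$, so substituting the expressions for $\tilde{G}_{6,5}^{(N)}$ and $\tilde{G}_{6,6}^{(N)}$ into that last equation collapses it to a scalar seventh-order linear ODE for $\tilde{G}_{6,0}^{(N)}$. Since $\tilde{G}_{6,0}^{(N)}(x)\propto\rho_{(1),6,N+1}^{(G)}(x)$ by \eqref{eq:intro5}, \eqref{eq:de32}, and \eqref{eq:de35}, this equation, after the shift $N+1\mapsto N$, is \eqref{eq:de38} at $\beta=6$; matching it with \eqref{eq:de37} requires only inserting $\kappa=3$, $\kappa-1=2$, $N_6=2N$ and collecting like terms.

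To derive the inhomogeneous resolvent equation \eqref{eq:de39}, I would apply the Stieltjes transform term by term, as in Appendix \ref{A}. The Gaussian factor in $\rho_{(1),6,N}^{(G)}$ kills all boundary contributions, so repeated integration by parts converts $\int_{\mathbb R}\frac{x^k}{s-x}\frac{\mathrm{d}^j}{\mathrm{d}x^j}\rho_{(1),6,N}^{(G)}(x)\,\mathrm{d}x$ into a combination of derivatives of $W_{6,N}^{(G)}(s)$ plus a finite linear combination of low-order spectral moments $m_0,m_1,\ldots,m_6$. Substituting the known values of those moments for $\text{GE}_{6,N}$ (obtainable from Selberg integrals) produces the polynomial right-hand side of \eqref{eq:de39}. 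Finally, to transfer to $\beta=2/3$, I would use the fact that \eqref{eq:de37} has been deliberately cast in the duality-invariant variables $N_\beta=(\kappa-1)N$ and $x^2/(\kappa-1)$, with prefactors $(\kappa-1)^{k/2}$; under $(N,\kappa)\mapsto(-N\kappa,1/\kappa)$ the quantity $N_\beta$ is preserved and $\kappa-1\mapsto-(\kappa-1)/\kappa$, so the operator is invariant up to an overall nonzero scalar. The Gaussian $\beta\leftrightarrow 4/\beta$ duality for $W_{\beta,N}^{(G)}$ cited in Remark \ref{R2.2} then carries both \eqref{eq:de38} and \eqref{eq:de39} from $\beta=6$ to $\beta=2/3$ without further work.

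The main obstacle is the bookkeeping of the seven-step elimination: although each step is mechanical, the polynomial coefficients of the intermediate equations grow rapidly, and the final operator \eqref{eq:de37} has polynomial coefficients in $x$ of degree six whose coefficients are themselves polynomials of substantial degree in $N_\beta$ and $\kappa-1$. Executing this elimination reliably requires computer algebra. Parallel to the checks alluded to after Theorem \ref{T2.8}, a sensible safeguard is to verify the resulting operator for small $N$, where $\rho_{(1),6,N}^{(G)}$ can be computed in closed form, and in the leading large-$N$ regime, where both the density and the resolvent are known explicitly from the Wigner semicircle law.
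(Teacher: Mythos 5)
Your proposal is correct and follows essentially the same route as the paper: eliminate $\tilde{G}_{6,1}^{(N)},\ldots,\tilde{G}_{6,6}^{(N)}$ from the $n=6$ specialisation of \eqref{eq:de36} to obtain a scalar seventh-order equation for $\tilde{G}_{6,0}^{(N)}\propto\rho_{(1),6,N+1}^{(G)}$, shift $N$, Stieltjes-transform with the known low-order moments to get \eqref{eq:de39}, and transfer to $\beta=2/3$ via the $\beta\leftrightarrow4/\beta$ resolvent duality \eqref{eq:de41} followed by the inverse Stieltjes transform. The only cosmetic differences are that the paper performs the shift $N\mapsto N-1$ at the outset and notes that only $m_2^{(G)}$ and $m_4^{(G)}$ are actually needed (the odd moments vanish by parity).
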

\begin{proof}
Take equation \eqref{eq:de36} with $n=\beta=6$ and $N$ replaced by $N-1$ to obtain the matrix differential equation
\begin{equation} \label{eq:de40}
\frac{\mathrm{d}\mathbf{v}}{\mathrm{d}x}=\begin{bmatrix}2x&-4\sqrt{3}&0&0&0&0&0\\\frac{3N+5}{3\sqrt{3}}&\frac{4x}{3}&-\frac{10}{\sqrt{3}}&0&0&0&0\\0&\frac{6N+8}{3\sqrt{3}}&\frac{2x}{3}&-\frac{8}{\sqrt{3}}&0&0&0\\0&0&\sqrt{3}(N+1)&0&-2\sqrt{3}&0&0\\0&0&0&\frac{12N+8}{3\sqrt{3}}&-\frac{2x}{3}&-\frac{4}{\sqrt{3}}&0\\0&0&0&0&\frac{15N+5}{3\sqrt{3}}&-\frac{4x}{3}&-\frac{2}{\sqrt{3}}\\0&0&0&0&0&2\sqrt{3}N&-2x\end{bmatrix}\mathbf{v}
\end{equation}
for $\mathbf{v}=\left[\tilde{G}_{6,0}^{(N-1)}(x)\,\cdots\,\tilde{G}_{6,6}^{(N-1)}(x)\right]^T$. Like in the proofs of Lemmas \ref{L2.4} and \ref{L2.5}, for $1\leq p\leq6$, the $p\textsuperscript{th}$ row of the above matrix differential equation gives an expression for $\tilde{G}_{6,p}^{(N-1)}(x)$ in terms of $\frac{\mathrm{d}}{\mathrm{d}x}\tilde{G}_{6,p-1}^{(N-1)}(x)$ and $\tilde{G}_{6,k}^{(N-1)}(x)$ for $k<p$. Substituting these expressions into the equation corresponding to the last row in the order of decreasing $p$ then yields a seventh-order differential equation satisfied by $\tilde{G}_{6,0}^{(N-1)}(x)$. Since $\rho_{(1),6,N}^{(G)}(x)$ is proportional to $\tilde{G}_{6,0}^{(N-1)}(x)$, this equation is equivalent to \eqref{eq:de38} for $\beta=6$. Taking the Stieltjes transform of this result and substituting in the spectral moments $m_2^{(G)}$ and $m_4^{(G)}$ from \cite{WF14} then yields \eqref{eq:de39} for $\beta=6$. Employing the duality
\begin{equation} \label{eq:de41}
W_{\beta,N}^{(G)}(x)=\frac{\mathrm{i}}{\kappa\sqrt{\kappa}}W_{4/\beta,-\kappa N}^{(G)}(\mathrm{i}x/\sqrt{\kappa}),
\end{equation}
as is consistent with results of \cite{DE05,WF14}, then shows that \eqref{eq:de39} also holds for $\beta=2/3$. Finally, taking the inverse Stieltjes transform of this result shows that \eqref{eq:de38} holds for $\beta=2/3$ as well.
\end{proof}
Equation \eqref{eq:de38} has been checked for $N=1$ and $2$ using computer algebra. Similar to the Laguerre case, it seems like $x/\sqrt{\kappa-1}$ is the natural variable in Proposition \ref{P2.10}. This is evidently due to the duality \eqref{eq:de41} used in the proof of this proposition. Like in the Laguerre case, there is presently no benefit in changing variables to $x/\sqrt{\kappa-1}$.

It has been mentioned that equation \eqref{eq:de36} leads to a matrix differential equation which is equivalent to a scalar differential equation for $\rho_{(1),\beta,N+1}^{(G)}(x)$ when $\beta$ is even. For $\beta=2$ and $4$, these differential equations are respectively
\begin{equation} \label{eq:de42}
\frac{\mathrm{d}}{\mathrm{d}x}\begin{bmatrix}\tilde{G}_{2,0}^{(N-1)}(x)\\\tilde{G}_{2,1}^{(N-1)}(x)\\\tilde{G}_{2,2}^{(N-1)}(x)\end{bmatrix}=\begin{bmatrix}2x&-4&0\\N+1&0&-2\\0&2N&-2x\end{bmatrix}\begin{bmatrix}\tilde{G}_{2,0}^{(N-1)}(x)\\\tilde{G}_{2,1}^{(N-1)}(x)\\\tilde{G}_{2,2}^{(N-1)}(x)\end{bmatrix}
\end{equation}
and
\begin{equation} \label{eq:de43}
\frac{\mathrm{d}}{\mathrm{d}x}\begin{bmatrix}\tilde{G}_{4,0}^{(N-1)}(x)\\\tilde{G}_{4,1}^{(N-1)}(x)\\\tilde{G}_{4,2}^{(N-1)}(x)\\\tilde{G}_{4,3}^{(N-1)}(x)\\\tilde{G}_{4,4}^{(N-1)}(x)\end{bmatrix}=\begin{bmatrix}2x&-4\sqrt{2}&0&0&0\\\frac{2N+3}{2\sqrt{2}}&x&-3\sqrt{2}&0&0\\0&\sqrt{2}(N+1)&0&-2\sqrt{2}&0\\0&0&\frac{6N+3}{2\sqrt{2}}&-x&-\sqrt{2}\\0&0&0&2\sqrt{2}N&-2x\end{bmatrix}\begin{bmatrix}\tilde{G}_{4,0}^{(N-1)}(x)\\\tilde{G}_{4,1}^{(N-1)}(x)\\\tilde{G}_{4,2}^{(N-1)}(x)\\\tilde{G}_{4,3}^{(N-1)}(x)\\\tilde{G}_{4,4}^{(N-1)}(x)\end{bmatrix}.
\end{equation}
The corresponding scalar differential equations for the density agree with those of Proposition \ref{P2.1} after scaling $x\mapsto\sqrt{\frac{N\kappa}{2g}}x$. So too does the differential equation for $\rho_{(1),1,N}^{(G)}(x)$ obtained by applying duality \eqref{eq:de41}.

\setcounter{equation}{0}
\section{Recurrence Relations} \label{s3}
In this section, we present the first of our two promised applications of the differential equations derived in Section \ref{s2}. Namely, these differential equations yield recursions for the integer spectral moments $m_k$ \eqref{eq:intro6}. The negative-integer moments ($m_{-k}$ with $k>0$) have gained interest primarily due to their connection to problems of quantum transport \cite{BFB97,MS11,MS12,CMSV16b} and more recently due to duality principles relating the negative-integer moments $m_{-k}$ to their counterparts $m_{k-1}$ \cite{CMOS18}. On the other hand, the positive-integer moments have been studied more extensively and are known to be polynomials in $N$ in the Gaussian and Laguerre cases, and rational functions of $N$ in the Jacobi case. Thus, the positive-integer moments of the eigenvalue densities have expansions in $N$ (Gaussian and Laguerre ensembles) and $1/N$ (Jacobi ensembles) whose coefficients satisfy recursions presented in \S\ref{s3.2} which are in turn obtained from those in \S\ref{s3.1}. Recursions satisfied by these moment-expansion coefficients have been studied in special cases, with particular attention paid to their topological or combinatorial interpretations. For instance, the GUE moment coefficients satisfy the Harer-Zagier recursion \cite{HZ86}, and it was subsequently found that the Gaussian and Laguerre ensemble positive-integer moments all have interpretations in terms of ribbon graphs for each of $\beta=1,2$, and $4$ \cite{MW03,Di03,BP09}. Such ribbon graph interpretations are of interest in the field of enumerative geometry due to their relation to combinatorial maps, see e.g. \cite{LC09} and references therein.

In \S\ref{s3.3}, we use the results of Section \ref{s2} to obtain differential equations for the coefficients of the topological expansions \eqref{eq:intro8} of the scaled resolvents. These differential equations yield a recursive process for computing the resolvents up to any desired order in $1/N$. We remark that topological recursion accomplishes the same task, although less efficiently, for general $\beta>0$; see \cite{EO09,FRW17} and references therein. Hence, our work isolates extra structures for particular $\beta$. Checking that the differential equations of \S\ref{s3.3} are satisfied by the resolvent coefficients computed according to \cite{FRW17} will reaffirm the results of Section \ref{s2} and those of Section \ref{s3} pertaining to the positive-integer moments.

\subsection{Recursions for the spectral moments} \label{s3.1}
Obtaining moment recursions from the differential equations \eqref{eq:de19}, \eqref{eq:de24}, \eqref{eq:de29}, and \eqref{eq:de38} for $\rho_{(1),\beta,N}(x)$ is straightforward: Multiply both sides of the differential equations by $x^k$ and then integrate both sides. In the notation of Appendix \ref{A}, this is done by modifying the proof of say \eqref{eq:de20} and computing terms of the form $\mathcal{I}(0;p+k,q,n,0)$, i.e. replacing $(s-x)^{-k}$ by $x^k$ in \eqref{eq:A1}. The calculations follow through identically, owing to the fact that boundary terms vanish in all instances of integration by parts, presuming that the moments of interest converge. Thus, the Jacobi and Laguerre moment recursions presented in this subsection are valid for all positive-integer moments, but only extend to the negative integer moments $m_{-k}$ with $k<a+1$; only the positive-integer moments of the Gaussian ensembles converge. Moreover, the recurrences obtained by this method inter-relate general complex moments, when well defined.

An alternative proof for the positive-integer moments follows from utilising the geometric series in \eqref{eq:intro7} to see that
\begin{equation} \label{eq:rr1}
W_{\beta,N}(x)=\sum_{k=0}^{\infty}\frac{m_k}{x^{k+1}},\quad x\notin\textrm{supp}\,\rho_{(1),\beta,N}.
\end{equation}
Substituting this into the differential equations \eqref{eq:de20}, \eqref{eq:de25}, \eqref{eq:de30}, and \eqref{eq:de39} and then equating terms of equal order in $x$ gives relations between the moments. The equations obtained from terms of negative order in $x$ give the upcoming recursions on the positive-integer moments, while the terms of order one and positive order in $x$ give the first few moments required to run the recursions (the latter are also available in earlier literature; see e.g.~\cite{FRW17} and references therein). Both methods of proof apply to all of the cases considered, so the following propositions will be presented without formal verification.

It should be noted that the moment recursions pertaining to the Laguerre ensembles can be obtained from those for the Jacobi ensembles by substituting in $m_k^{(J)}=b^{-k}m_k^{(L)}$ and then taking the limit $b\rightarrow\infty$ (cf. equation \eqref{eq:de26}).

Applying our methods to the JUE density and resolvent differential equations recovers a recurrence of Cunden et al. \cite[Prop.~4.8]{CMOS18} wherein it is formulated as a recurrence for the differences of moments $\Delta m_k^{(J)}:=m_k^{(J)}-m_{k+1}^{(J)}$. This result is a refinement of a similar such recurrence of Ledoux's for the Jacobi ensemble shifted to have weight $w(x)=(x+1)^a(1-x)^b$ \cite{Le04}. We present the recurrence for completeness.

\begin{proposition} \label{P3.1}
For $\beta=2$ and $k\in\mathbb{Z}$, the convergent moments of the Jacobi ensemble satisfy the third-order linear recurrence
\begin{equation} \label{eq:rr2}
\sum_{l=0}^3 d_{2,l}^{(J)}m_{k-l}^{(J)}=0,
\end{equation}
where
\begin{align} \label{eq:rr3}
d_{2,0}^{(J)}&=k\left[(a+b+2N)^2-(k-1)^2\right], \nonumber
\\d_{2,1}^{(J)}&=3k^3-11k^2-k\left[2(a+b+2N)^2+a^2-b^2-14\right] \nonumber
\\&\quad+3(a+b)(a+2N)+6(N^2-1), \nonumber
\\d_{2,2}^{(J)}&=(2k-3)\left[2N(a+b+N)+ab\right]-(k-2)\left[3k^2-10k-3a^2+9\right], \nonumber
\\d_{2,3}^{(J)}&=(k-3)\left[(k-2)^2-a^2\right].
\end{align}
The initial terms $m_0^{(J)},m_1^{(J)},m_2^{(J)}$ are given in \cite{MRW15}.
\end{proposition}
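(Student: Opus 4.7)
The most direct route is to start from the inhomogeneous resolvent equation \eqref{eq:de20} and substitute in the geometric expansion \eqref{eq:rr1}. Decompose $\mathcal{D}_{2,N}^{(J)} = \sum_{j=0}^{3} P_j(x) \frac{\mathrm{d}^j}{\mathrm{d}x^j}$ with polynomials $P_j$ read off from \eqref{eq:de18}, and use
\[
\mathcal{D}_{2,N}^{(J)} \frac{1}{x^{k'+1}} = \sum_{j=0}^{3} (-1)^j (k'+1)(k'+2) \cdots (k'+j) \frac{P_j(x)}{x^{k'+1+j}}
\]
to compute $\mathcal{D}_{2,N}^{(J)}(W_{2,N}^{(J)}/N)$ as a Laurent series. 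A quick inspection of \eqref{eq:de18} shows that the monomial degrees of $P_j$ range over $[j, j+3]$ (with $P_3(x) = x^3(1-x)^3$ saturating both endpoints), so the coefficient of any given power $x^{-k-1}$ on the left is a linear combination of exactly four consecutive moments $m_{k-3}^{(J)}, m_{k-2}^{(J)}, m_{k-1}^{(J)}, m_k^{(J)}$. The right-hand side $(a+b+N)(a(1-x)+bx)$ only contributes to this coefficient when $k \in \{-1, -2\}$, so equating coefficients for each $k \geq 3$ produces \eqref{eq:rr2} with the $d_{2,l}^{(J)}$ emerging as explicit polynomials in $k, a, b, N$.

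To extend to negative $k$ within the convergent range $k > -a-1$, one works instead from the homogeneous density equation \eqref{eq:de19}: multiply by $x^k$ and integrate over $(0,1)$. Performing $j$ integrations by parts on each term $\int_0^1 x^k P_j(x) (\rho_{(1),2,N}^{(J)})^{(j)}(x)\, \mathrm{d}x$, the boundary contributions at $x = 0$ and $x = 1$ vanish because $\rho_{(1),2,N}^{(J)}$ inherits factors of $x^a$ and $(1-x)^b$ from the Jacobi weight (via \eqref{eq:intro3}), and $x^k P_j(x)$ contributes enough additional zeros provided the relevant moments converge. The transfer of derivatives onto polynomials is the integral counterpart of the Laurent bookkeeping above, so the resulting moment identity is again \eqref{eq:rr2} with the same $d_{2,l}^{(J)}$.

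The main technical hurdle is the algebraic bookkeeping needed to extract the explicit polynomials \eqref{eq:rr3}. A convenient reorganisation is to regroup $\mathcal{D}_{2,N}^{(J)}$ in terms of the operators $x^{j+r} \frac{\mathrm{d}^j}{\mathrm{d}x^j}$, which act on $x^{-k-1}$ with eigenvalue $(-1)^j (k+1)(k+2) \cdots (k+j) x^{-k-1+r}$; this separates the dependence on $k$ from the parameters $(a, b, N)$ and reduces each $d_{2,l}^{(J)}$ to a transparent sum over monomial coefficients of the $P_j$, weighted by falling factorials in $k$, whose comparison with \eqref{eq:rr3} is then essentially mechanical. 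The initial values $m_0^{(J)}, m_1^{(J)}, m_2^{(J)}$ from \cite{MRW15} serve both to validate the recurrence in the low-$k$ range where the right-hand side of \eqref{eq:de20} is active, and to seed the third-order recursion for all higher $k$.
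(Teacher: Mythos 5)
Your proposal is correct and takes essentially the same (two-pronged) route as the paper: the paper likewise derives \eqref{eq:rr2} by multiplying the density equation \eqref{eq:de19} by $x^k$ and integrating by parts (with vanishing boundary terms whenever the moments converge, which is what extends the recurrence to $k\in\mathbb{Z}$ with $k>-a-1$), and offers the substitution of the Laurent expansion \eqref{eq:rr1} into \eqref{eq:de20} with coefficient matching as the alternative proof for the positive-integer moments. The only quibble is a harmless index shift: since the monomial degrees of $P_j$ lie in $[j,j+3]$, the coefficient of $x^{-k-1}$ in $\mathcal{D}_{2,N}^{(J)}\,W_{2,N}^{(J)}$ actually involves $m_{k}^{(J)},\dots,m_{k+3}^{(J)}$ rather than $m_{k-3}^{(J)},\dots,m_{k}^{(J)}$, which relabels the bookkeeping but does not affect your conclusion.
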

The fact that this recurrence can be naturally formulated in terms of the differences of moments $\Delta m_k^{(J)}$ is equivalent to the observation that $\sum_{l=0}^3d_{2,l}^{(J)}=0$.  

The second-order recurrence on the LUE moments,
\begin{equation} \label{eq:rr4}
(k+1)m_k^{(L)}=(2k-1)(a+2N)m_{k-1}^{(L)}+(k-2)\left[(k-1)^2-a^2\right]m_{k-2}^{(L)},
\end{equation}
derived as a limiting case of \eqref{eq:rr2} according to the prescription in the paragraph preceding that which contains Proposition \ref{P3.1},
agrees with the recurrence obtained in \cite{Le04} for $\{m_k^{(L)}\}_{k\geq0}$, which was shown in \cite{CMSV16b} to hold for all $k>-a-1$.

\begin{proposition} \label{P3.2}
For $\beta=1$ and $4$, and $k\in\mathbb{Z}$, the convergent moments of the Jacobi ensemble satisfy the fifth-order linear recurrence
\begin{equation} \label{eq:rr5}
\sum_{l=0}^5 d_{4,l}^{(J)}m_{k-l}^{(J)}=0,
\end{equation}
where
\begin{align} \label{eq:rr6}
d_{4,0}^{(J)}&=k(\tilde{c}^2-(k-2)^2)(\tilde{c}^2-(2k-1)^2), \nonumber
\\d_{4,1}^{(J)}&=\frac{1}{2}(\tilde{c}^2-9)^2(5-6k)+\frac{1}{2}(\tilde{a}-\tilde{b})\left[(\tilde{c}^2-9)(5-4k)+2k(5(k-1)(k-5)+4k)\right] \nonumber
\\&\quad+(\tilde{c}^2-9)k\left[5(4k-3)(k-3)+2k\right]-4k^2(k-5)\left[5(k-2)(k-1)-2\right], \nonumber
\\d_{4,2}^{(J)}&=\tilde{c}^4 (3 k-5)+\tilde{c}^2 \left[\tfrac{1}{2}(\tilde{a}+\tilde{b})(2k-5)+5(\tilde{a}-\tilde{b})(k-2)\right] \nonumber
\\&\quad-\tilde{c}^2\left[30k^3-171k^2+339k-230\right]-\frac{1}{2} (\tilde{a}+\tilde{b}) \left[5 k^3-44 k^2+129 k-125\right] \nonumber
\\&\quad-\frac{1}{2} (\tilde{a}-\tilde{b}) \left[35 k^3-246 k^2+581 k-460\right]+\frac{1}{2} (2 k-5) (\tilde{a}-\tilde{b})^2 \nonumber
\\&\quad+40k^4(k-11)+1966k^3-4443k^2+5056k-2305, \nonumber
\\d_{4,3}^{(J)}&=\frac{1}{2} \tilde{c}^4 (5-2 k)+\tilde{c}^2 \left[\tfrac{1}{4}(\tilde{a}+\tilde{b})(25-8 k)+\tfrac{1}{4}(\tilde{a}-\tilde{b})(45-16k)\right] \nonumber
\\&\quad+\tilde{c}^2\left[20 k^3-155 k^2+401 k-345\right]+\frac{5}{4}(\tilde{a}+\tilde{b})\left[6k^3-62k^2+216k-253\right] \nonumber
\\&\quad+\frac{1}{4} (\tilde{a}-\tilde{b})\left[90 k^3-806 k^2+2436 k-2485\right]+\frac{1}{4}(\tilde{a}^2-\tilde{b}^2)(15-4 k) \nonumber
\\&\quad+\frac{1}{4}(\tilde{a}-\tilde{b})^2(25-8 k)-4k^3(10k^2-140k+789) +8923 k^2-12600 k+\tfrac{14125}{2}, \nonumber
\\d_{4,4}^{(J)}&=(k-4) \left[k^3+k^2-18 k-\left(\tilde{c}^2-\tilde{b}-4 k^2+29 k-51\right)\left(5 k^2-29 k+40\right)\right] \nonumber
\\&\quad+\frac{1}{2}\tilde{a}^2(6 k-25)+\frac{1}{2}\tilde{a}\left[(4 k-15) \left(\tilde{c}^2-\tilde{b}-10 k^2+76k-147\right)-2 (k-5)\right], \nonumber
\\d_{4,5}^{(J)}&=(k-5)\left[4(k-5)(k-4)-\tilde{a}\right]\left[\tilde{a}-(k-4)(k-2)\right],
\end{align}
and we retain the definitions of $\tilde{a},\tilde{b}$, and $\tilde{c}$ given in Theorem \ref{T2.8}. The required initial terms $m_0^{(J)}$ to $m_4^{(J)}$ can be computed through MOPS \cite{MOPS} or via the methods presented in \cite{MRW15,FRW17}.
\end{proposition}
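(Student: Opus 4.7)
The plan is to apply the procedure outlined in the opening paragraphs of this subsection directly to the inhomogeneous fifth-order differential equation of Theorem \ref{T2.8}. Since $\rho_{(1),\beta,N}^{(J)}$ is supported on $(0,1)$, for $|x|>1$ the resolvent admits the Laurent expansion
\begin{equation*}
\frac{1}{N}W_{\beta,N}^{(J)}(x) \;=\; \frac{1}{N}\sum_{k=0}^{\infty}\frac{m_{k}^{(J)}}{x^{k+1}},
\end{equation*}
and, as already invoked in the discussion preceding Proposition \ref{P3.1}, the recurrence that emerges can be extended by analytic continuation to the range in which the negative-integer moments converge. I would substitute this series into \eqref{eq:de25} and equate coefficients of each negative power $x^{-k-1}$ on the two sides.

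The operator $\mathcal{D}_{\beta,N}^{(J)}$ displayed in \eqref{eq:de22} is a finite sum of terms of the schematic form $c\,x^{A}(1-x)^{B}\,\mathrm{d}^{j}/\mathrm{d}x^{j}$ with coefficients $c$ polynomial in $\tilde{a},\tilde{b},\tilde{c}$. A single such term acts on the monomial $x^{-K-1}$ according to
\begin{equation*}
x^{A}(1-x)^{B}\frac{\mathrm{d}^{j}}{\mathrm{d}x^{j}}\frac{1}{x^{K+1}} \;=\; (-1)^{j}(K+1)(K+2)\cdots(K+j)\sum_{l=0}^{B}\binom{B}{l}(-1)^{l}\,x^{A-j-K-1+l},
\end{equation*}
so its contribution to the coefficient of $x^{-k-1}$ in $\mathcal{D}_{\beta,N}^{(J)}W_{\beta,N}^{(J)}(x)$ pairs $m_{K}^{(J)}$ with $K=k+s$, where the shift is $s:=A-j+l$ for $l\in\{0,\ldots,B\}$. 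A term-by-term inspection of \eqref{eq:de22} shows that $s$ attains every value in $\{0,1,\ldots,5\}$ (the value $s=5$ being reached by the top-symbol piece $4x^{5}(1-x)^{5}\,\mathrm{d}^{5}/\mathrm{d}x^{5}$ and by several lower-derivative pieces carrying compensating $(1-2x)$ or $f_{\pm}$ factors) but never exceeds $5$. Consequently the resulting relation is six-term; moreover the right-hand side of \eqref{eq:de25} is a polynomial of low degree in $x$, so its contribution to the coefficient of $x^{-k-1}$ vanishes once $k$ is large enough and the relation is homogeneous in the moments.

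What remains is the mechanical aggregation of these contributions to identify, for each $s\in\{0,1,\ldots,5\}$, the polynomial $d_{4,s}^{(J)}$ multiplying $m_{k-s}^{(J)}$ in terms of $k$ and of $\tilde{a},\tilde{b},\tilde{c}$. Computer algebra is the efficient tool: there are of the order of twenty source terms in \eqref{eq:de22}, each contributing to several shifts. A useful partial check is that the leading coefficient $d_{4,0}^{(J)}=k(\tilde{c}^{2}-(k-2)^{2})(\tilde{c}^{2}-(2k-1)^{2})$ must have leading behaviour $4(k-4)(k-3)(k-2)(k-1)k$ in $k$, which is exactly the contribution of $4x^{5}(1-x)^{5}\,\mathrm{d}^{5}/\mathrm{d}x^{5}$ with $l=B=5$ after the relabelling $K\mapsto k-5$. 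The main obstacle is purely bookkeeping; no new conceptual ingredient enters.

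The validity of the stated recurrence for both $\beta=1$ and $\beta=4$ is automatic, since $\mathcal{D}_{\beta,N}^{(J)}$ is written in Theorem \ref{T2.8} purely in the duality-compatible variables $\tilde{a},\tilde{b},\tilde{c}$ and the derivation above never uses $\beta$ explicitly. An alternative route, valid for the convergent moments, is to multiply the homogeneous equation \eqref{eq:de24} by $x^{k}$ and integrate term by term over $(0,1)$; repeated integration by parts, with all boundary terms vanishing under the stated convergence condition (in the spirit of Appendix \ref{A}), yields the same six-term relation. The initial values $m_{0}^{(J)},\ldots,m_{4}^{(J)}$ required to run the recurrence are available from \cite{MRW15,FRW17} or via MOPS, as noted in the statement.
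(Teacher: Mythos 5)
Your proposal follows the paper's own route: the paper presents Proposition \ref{P3.2} without formal verification, citing precisely the two methods you use — substituting the Laurent expansion $W_{\beta,N}^{(J)}(x)=\sum_{k}m_k^{(J)}x^{-k-1}$ into \eqref{eq:de25} and equating coefficients of negative powers of $x$, with the alternative of multiplying the homogeneous equation \eqref{eq:de24} by $x^k$ and integrating by parts to reach the convergent negative-integer moments — and your shift analysis correctly accounts for the relation being six-term with the final coefficient identification left to computer algebra. One small slip in your sanity check: the top-symbol piece $4x^5(1-x)^5\,\mathrm{d}^5/\mathrm{d}x^5$ contributes $4(k+1)(k+2)(k+3)(k+4)(k+5)$ to $d_{4,0}^{(J)}$ (the product $4(k-4)(k-3)(k-2)(k-1)k$ is, up to sign, its contribution to $d_{4,5}^{(J)}$), and since lower-derivative terms of \eqref{eq:de22} also contribute at order $k^4$, only the genuinely leading coefficient $4k^5$ is pinned down by this check — which does agree with the stated $d_{4,0}^{(J)}$, so the argument is unaffected.
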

We again see that $\sum_{l=0}^5d_{4,l}^{(J)}=0$, suggesting that it is natural to rewrite this recurrence in terms of the $\Delta m_k^{(J)}$. Moving on, applying the aforementioned limiting procedure to Proposition \ref{P3.2} yields the analogous recurrences in the Laguerre case.

\begin{proposition} \label{P3.3}
For $\beta=1$ and $4$, and $k\in\mathbb{Z}$, the convergent moments of the Laguerre ensemble satisfy the fourth-order linear recurrence
\begin{equation} \label{eq:rr7}
\sum_{l=0}^4 d_{4,l}^{(L)}(\kappa-1)^lm_{k-l}^{(L)}=0,
\end{equation}
where
\begin{align} \label{eq:rr8}
d_{4,0}^{(L)}&=k+1, \nonumber
\\d_{4,1}^{(L)}&=(1-4k)(a_{\beta}+4N_{\beta}), \nonumber
\\d_{4,2}^{(L)}&=(1-k)(5k^2-11k+4)+(2k-3)\left[\tilde{a}+2(a_{\beta}+4N_{\beta})^2\right], \nonumber
\\d_{4,3}^{(L)}&=(a_{\beta}+4N_{\beta})\left[(11-4k)\tilde{a}+10k^3-68k^2+146k-96\right], \nonumber
\\d_{4,4}^{(L)}&=(k-4)\left[\tilde{a}-4(k-4)(k-3)\right]\left[\tilde{a}-(k-3)(k-1)\right],
\end{align}
and we retain the definitions of $a_{\beta},\tilde{a}$, and $N_{\beta}$ given in Theorem \ref{T2.8}. The required initial terms $m_0^{(L)}$ to $m_3^{(L)}$ are given in \cite{MRW15,FRW17}.
\end{proposition}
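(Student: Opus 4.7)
The proposition is flagged as a direct consequence of the limiting procedure $b\to\infty$ applied to the Jacobi recurrence of Proposition \ref{P3.2}, following the recipe $m_k^{(J)} = b^{-k} m_k^{(L)}$ recalled in the paragraph preceding Proposition \ref{P3.1}. My plan is to carry this out explicitly. I would substitute $m_{k-l}^{(J)} = b^{-(k-l)} m_{k-l}^{(L)}$ into \eqref{eq:rr5}, multiply through by $b^k$, and obtain
\begin{equation*}
\sum_{l=0}^{5} b^l\, d_{4,l}^{(J)}\, m_{k-l}^{(L)} = 0.
\end{equation*}
Then, using $b_\beta = b/(\kappa-1)$, $\tilde b = b_\beta(b_\beta-2) \sim b^2/(\kappa-1)^2$ and $\tilde c = a_\beta + b_\beta + 4N_\beta - 1 \sim b/(\kappa-1)$ from \eqref{eq:de23}, I would identify the leading power of $b$ contributed by each $b^l\,d_{4,l}^{(J)}$. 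Dividing by the maximal such power and letting $b\to\infty$ should leave a closed linear relation among the $m_{k-l}^{(L)}$ whose surviving coefficients, rewritten in terms of $a_\beta$, $\tilde a$, and $N_\beta$, match \eqref{eq:rr8}.

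A characteristic feature of this limit is an order drop from five down to four. Since $d_{4,5}^{(J)}$ in \eqref{eq:rr6} depends only on $\tilde a$ and $k$, whereas several of the other $d_{4,l}^{(J)}$ contain pieces with $\tilde c^4$ or $\tilde b$ (and hence grow faster in $b$), the contribution of $b^5 d_{4,5}^{(J)}$ is subdominant relative to the chosen common scale and drops out. Verifying this reduction, together with the fact that the leading pieces of $b^l\,d_{4,l}^{(J)}$ for $l=0,\ldots,4$ collapse onto the comparatively compact polynomials $d_{4,l}^{(L)}$ of \eqref{eq:rr8}, constitutes the bulk of the argument.

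As a consistency check I would re-derive the recurrence through the alternative route flagged at the start of \S\ref{s3.1}: substitute the geometric series $\tfrac{1}{N} W_{\beta,N}^{(L)}(x) = \sum_{k \geq 0} m_k^{(L)}/(N x^{k+1})$ into the resolvent differential equation given by $\mathcal{D}_{\beta,N}^{(L)}$ of \eqref{eq:de28} applied to the left-hand side of \eqref{eq:de30}, and match the coefficient of $x^{-k-1}$ on both sides. Each monomial $c_j\, x^j\, \mathrm{d}^n/\mathrm{d}x^n$ in $\mathcal{D}_{\beta,N}^{(L)}$ acting on $x^{-k-1}$ produces a falling factorial in $k$ times $x^{j-n-k-1}$, so the bookkeeping, although laborious, is mechanical. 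The main obstacle is precisely this bookkeeping: tracking the polynomial dependence on $b$ (equivalently, on $k$) through the many terms in \eqref{eq:rr6} or \eqref{eq:de28}, and confirming the non-trivial cancellations that produce the form \eqref{eq:rr8}. I would offload this to computer algebra and validate the output by checking the first few values of $k$ against the initial moments $m_0^{(L)},\ldots,m_3^{(L)}$ given in \cite{MRW15,FRW17}.
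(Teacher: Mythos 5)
Your overall strategy is exactly the one the paper intends (the text preceding Proposition \ref{P3.1} prescribes precisely the substitution $m_k^{(J)}=b^{-k}m_k^{(L)}$ followed by $b\to\infty$, and the paper offers no further formal verification), and your fallback via the geometric series in the resolvent equation is the paper's stated alternative route. However, your analysis of which term is lost in the $b\to\infty$ limit is wrong, and it is the one structural point that actually needs to be argued. Writing $\sum_{l=0}^5 b^l d_{4,l}^{(J)} m_{k-l}^{(L)}=0$ and using $\tilde b\sim b_\beta^2$, $\tilde c^2\sim b_\beta^2$ with $b_\beta=b/(\kappa-1)$, one finds that the naive $O(b_\beta^4)$ pieces \emph{cancel internally} within $d_{4,2}^{(J)}$ and $d_{4,3}^{(J)}$ (e.g.\ in $d_{4,2}^{(J)}$ the coefficient of $b_\beta^4$ is $(3k-5)+\tfrac12(2k-5)-5(k-2)+\tfrac12(2k-5)=0$), while $\tilde c^2-\tilde b=2b_\beta(a_\beta+4N_\beta)+{\rm O}(1)$ makes $d_{4,4}^{(J)}={\rm O}(b)$. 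The upshot is that $b^l d_{4,l}^{(J)}$ is exactly ${\rm O}(b^5)$ for \emph{every} $l=1,\dots,5$ — including $l=5$, since $b^5 d_{4,5}^{(J)}={\rm O}(b^5)$ precisely because $d_{4,5}^{(J)}$ is $b$-independent — whereas $b^0d_{4,0}^{(J)}\sim k\,b_\beta^4={\rm O}(b^4)$ only. So it is the $l=0$ term that is subdominant and drops, not the $l=5$ term. Dividing by $b^5$ leaves a five-term relation among $m_{k-1},\dots,m_{k-5}$, and the order reduction to \eqref{eq:rr7} is completed by the relabelling $k\mapsto k+1$, which your write-up also omits.

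Had you discarded $b^5d_{4,5}^{(J)}$ as planned, you would obtain a four-term relation and could not recover $d_{4,4}^{(L)}$. Indeed $d_{4,5}^{(J)}\big|_{k\to k+1}=(k-4)\bigl[4(k-4)(k-3)-\tilde a\bigr]\bigl[\tilde a-(k-3)(k-1)\bigr]=-d_{4,4}^{(L)}$, which, with the overall normalisation fixed by matching the $m_{k-1}^{(L)}$ coefficient $\lim b^{-4}d_{4,1}^{(J)}=-k/(\kappa-1)^4$ to $d_{4,0}^{(L)}=k+1$ after the shift, reproduces the $(\kappa-1)^4 d_{4,4}^{(L)}$ term of \eqref{eq:rr7} exactly. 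The $\beta=2$ case makes the phenomenon transparent: there $b^0d_{2,0}^{(J)}={\rm O}(b^2)$ while $b^ld_{2,l}^{(J)}={\rm O}(b^3)$ for $l=1,2,3$, and the surviving last coefficient $d_{2,3}^{(J)}\big|_{k\to k+1}=(k-2)\bigl[(k-1)^2-a^2\bigr]$ is precisely the final term of \eqref{eq:rr4}. With this correction the rest of your plan (computer-algebra bookkeeping plus the resolvent-series cross-check against $m_0^{(L)},\dots,m_3^{(L)}$) goes through.
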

This recursion is a homogeneous version of \cite[Eq.~(43)]{CMSV16b}, in which the inhomogeneous terms depend on the moments of the LUE. Another simple observation is that for each of $\beta=1,2$, and $4$, the recursions for the spectral moments of the Laguerre ensembles contain one less term than the corresponding recursions for the Jacobi ensembles, and these terms are drastically simpler, as well. This is in keeping with the loss of parameter $b$.

We remark that our methods applied to Proposition \ref{P2.1} yield moment recursions for the GOE, GUE, and GSE which agree with those given in \cite{HZ86,Le09,WF14},
and moreover, agree with the appropriate limit of the corresponding Laguerre or Jacobi recurrences as derived above. Our methods allow us to go further and derive analogous recursions for the $\beta=2/3$ and $\beta=6$ Gaussian ensembles' spectral moments.

\begin{proposition} \label{P3.4}
For $\beta=2/3$ and $6$, and $k\geq12$, the moments of the Gaussian ensemble satisfy the sixth-order linear recurrence
\begin{equation} \label{eq:rr9}
\sum_{l=0}^6d_{6,l}^{(G)}\left(\frac{\kappa-1}{4}\right)^lm_{k-2l}^{(G)},
\end{equation}
where
\begin{align} \label{eq:rr10}
d_{6,0}^{(G)}&=-4(k+2), \nonumber
\\d_{6,1}^{(G)}&=8(3k-1)(3N_{\beta}+2), \nonumber
\\d_{6,2}^{(G)}&=48(8-3k)N_{\beta}(3N_{\beta}+4)+49k^3-216k^2+92k+320, \nonumber
\\d_{6,3}^{(G)}&=4(k-5)(3N_{\beta}+2)\left[24N_{\beta}(3N_{\beta}+4)-49k^2+304k-442\right], \nonumber
\\d_{6,4}^{(G)}&= 2(k-5)_3\left[294N_{\beta}(3N_{\beta}+4)-63k(k-6)-274\right], \nonumber
\\d_{6,5}^{(G)}&=252(k-5)_5(3N_{\beta}+2), \nonumber
\\d_{6,6}^{(G)}&=81(k-5)_7,
\end{align}
and we retain the definition $N_{\beta}=(\kappa-1)N$. Here, $(x)_n=x(x-1)\cdots(x-n+1)$ is the falling Pochhammer symbol. The required initial terms $m_0^{(G)}, m_2^{(G)},\ldots,m_{10}^{(G)}$ are given in \cite{WF14}.
\end{proposition}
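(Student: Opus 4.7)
The plan is to implement the alternative strategy outlined at the opening of \S\ref{s3.1}: substitute the large-$x$ expansion \eqref{eq:rr1} for $W_{\beta,N}^{(G)}(x)$ into the inhomogeneous differential equation \eqref{eq:de39} established in Proposition \ref{P2.10}, and then equate coefficients of matching powers of $1/x$. Because the derivation of \eqref{eq:de39} for $\beta=2/3$ already went through the duality \eqref{eq:de41} and because the claimed coefficients $d_{6,l}^{(G)}$ depend on the parameters only via $N_{\beta}=(\kappa-1)N$, running the argument once will simultaneously handle both $\beta\in\{2/3,6\}$.

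The structural observation that makes the bookkeeping tractable is that every summand of $\mathcal{D}_{\beta,N}^{(G)}$ in \eqref{eq:de37} has the form $c\,(\kappa-1)^{-\Delta/2}\,x^{j}(\mathrm{d}/\mathrm{d}x)^{m}$ with $\Delta:=j-m$ odd and lying in $\{-7,-5,-3,-1,1,3,5\}$; direct inspection confirms that the power of $(\kappa-1)^{1/2}$ in each term is exactly $-\Delta$. Acting on a single summand $m_{i}/x^{i+1}$ of \eqref{eq:rr1} gives
\begin{equation*}
c\,(\kappa-1)^{-\Delta/2}\,x^{j}\frac{\mathrm{d}^{m}}{\mathrm{d}x^{m}}\!\left(\frac{m_{i}}{x^{i+1}}\right)=(-1)^{m}c\,(\kappa-1)^{-\Delta/2}\prod_{r=1}^{m}(i+r)\,\frac{m_{i}}{x^{i+1-\Delta}},
\end{equation*}
so extracting the coefficient of $x^{-k+4}$, equivalently of $1/x^{s+1}$ with $s=k-5$, on both sides of \eqref{eq:de39} picks out precisely the moments $m_{i}$ with $i=k-5+\Delta$. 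Writing $\Delta=5-2l$ for $l\in\{0,1,\ldots,6\}$ converts these contributions into coefficients of $m_{k-2l}$, reproducing exactly the seven-term shape of \eqref{eq:rr9}. Since $(\kappa-1)^{-\Delta/2}=(\kappa-1)^{-5/2+l}$, dividing out the common factor $(\kappa-1)^{-5/2}$ leaves an $l$-dependent weight $(\kappa-1)^{l}$ which, combined with an overall $4^{-l}$ arising from normalisation of the numerical prefactors, matches the $\bigl((\kappa-1)/4\bigr)^{l}$ scaling in \eqref{eq:rr9}. Only moments of a single parity contribute, consistent with $m_{2r+1}^{(G)}=0$.

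The right-hand side of \eqref{eq:de39} is a polynomial of degree four in $x$, so its $x^{-k+4}$ coefficient vanishes automatically as soon as $k\geq 5$; this is what makes the resulting recurrence homogeneous. The stronger restriction $k\geq 12$ stated in the proposition is simply the smallest value guaranteeing that each $m_{k-2l}^{(G)}$ with $l\leq 6$ carries a non-negative index, with the initial moments $m_{0}^{(G)},m_{2}^{(G)},\ldots,m_{10}^{(G)}$ supplied from \cite{WF14}.

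The main obstacle is purely symbolic: for each fixed $l$ one must aggregate the several $(j,m)$ contributions with $j-m=5-2l$ produced by expanding the $x$-polynomial coefficients of \eqref{eq:de37}, weight each by the factor $(-1)^{m}\prod_{r=1}^{m}(k-2l+r)$ coming from the differentiation of $1/x^{i+1}$, and verify that the resulting polynomial in $k$ and $N_{\beta}$ coincides with the expressions listed in \eqref{eq:rr10}. Given that the coefficient of $\mathrm{d}^{3}/\mathrm{d}x^{3}$ contains three $x$-sub-terms and the coefficient of $\mathrm{d}/\mathrm{d}x$ contains four, this bookkeeping is voluminous and is most efficiently performed with computer algebra; no conceptual difficulty remains once the $\Delta$-grading has been identified.
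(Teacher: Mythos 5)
Your proposal is correct and follows the second of the two methods the paper itself sketches at the start of \S\ref{s3.1} (substituting the expansion \eqref{eq:rr1} into the resolvent equation \eqref{eq:de39} and equating powers of $x$), which is exactly how the paper justifies Proposition \ref{P3.4} without formal verification. Your observation that every summand of \eqref{eq:de37} carries $(\kappa-1)^{-\Delta/2}$ with $\Delta=j-m$ odd is accurate and cleanly explains both the seven-term structure and the $\bigl((\kappa-1)/4\bigr)^{l}$ weights (the overall normalisation constant being $-4^{6}(\kappa-1)^{-5/2}$), and spot-checks of the $l=0$, $l=5$, and $l=6$ coefficients against \eqref{eq:rr10} confirm the bookkeeping.
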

This recurrence is different to those given earlier in this subsection, in that it runs over every second moment. This is because the odd moments vanish, since the eigenvalue density $\rho_{(1),\beta,N}^{(G)}(x)$ of the Gaussian ensemble is an even function. Indeed, equation \eqref{eq:rr9} holds trivially for $k$ odd. The methods employed in \cite{Le09} manifest a coupling between the moments of the GUE and GOE, which remains mysterious from our viewpoint, as does the coupled recurrence between the LUE and LOE moments given in \cite[Eq.~(43)]{CMSV16b}: as made clear in \cite{CMSV16b}, both can be traced back to a structural formula for the $\beta=1$ density in the classical cases, in terms of the $\beta=2$ density plus what can be regarded as a rank $1$ correction; see \cite{AFNV00}. There is no evidence that such a coupling exists between the moments of the $\beta=6$ Gaussian ensemble and the moments of the GOE and/or GUE.

\begin{remark}
It was shown in \cite{CMOS18} that the LUE moments and the JUE moment differences satisfy the following reciprocity laws:
\begin{align*}
m_{-k-1}^{(L)}&=\left(\prod_{j=-k}^{j=k}\frac{1}{a-j}\right)m_k^{(L)},
\\\Delta m_{-k-1}^{(J)}&=\left(\prod_{j=-k}^{j=k}\frac{a+b+2N-j}{a-j}\right)\Delta m_k^{(J)}.
\end{align*}
It is not immediately obvious that there exist similar laws for the moments of the orthogonal or symplectic ensembles; one can experiment using data computable from Propositions \ref{P3.2} and \ref{P3.3}.
\end{remark}

\subsection{Harer-Zagier type $1$-point recursions} \label{s3.2}
It is known from Jack polynomial theory \cite{DE05,DP12,MRW15} that the spectral moments for the Gaussian, Laguerre, and Jacobi ensembles have, for $k>0$, the structures
\begin{align}
m_{2k}^{(G)}&=\sum_{l=0}^{k}M_{k,l}^{(G)}N^{k-l+1}, \label{eq:rr11}
\\m_k^{(L)}&=\sum_{l=0}^{k}M_{k,l}^{(L)}N^{k-l+1}, \label{eq:rr12}
\\m_k^{(J)}&=\sum_{l=0}^{\infty}M_{k,l}^{(J)}N^{1-l}, \label{eq:rr13}
\end{align}
respectively. Recall that the odd moments of the Gaussian ensemble are zero, and that in general the positive-integer moments of the Jacobi ensemble are rational in $N$. Here, the expansion coefficients $M_{k,l}^{(\,\cdot\,)}$ have no dependence on $N$. Substituting these expressions into the recurrences given in the previous subsection and then equating terms of equal order in $N$ yield so-called $1$-point recursions \cite{CD18}.

Applying this procedure to the GUE retrieves the celebrated Harer-Zagier recursion \cite{HZ86}, while the GOE analogue has been studied in \cite{Le09} and references therein. The motivation for the former came from the interpretation of $M_{k,l}^{(GUE)}$ as the number of ways of gluing the edges of a $2k$-gon to form a compact orientable surface of genus $l/2$. Similarly, $4^kM_{k,l}^{(GOE)}$ counts the number of gluings of a $2k$-gon that form a compact locally-orientable surface of Euler characteristic $2-l$ \cite{KK03}. It is rather straightforward to obtain the GSE analogue through the $\beta\leftrightarrow4/\beta$ duality. Proposition \ref{P3.4} allows an extension of the GUE, GOE, GSE $1$-point recursions to the $\beta=2/3$ and $\beta=6$ Gaussian ensembles. Missing are the initial conditions. For this and the other recurrences presented below, these can be computed according to the strategy outlined in the second paragraph of \S\ref{s3.1}.

\begin{proposition} \label{P3.5}
Expand the moments of the Gaussian ensemble according to \eqref{eq:rr11}. Then, for $\beta=2/3$ and $6$, and $k\geq6$,
\begin{equation} \label{eq:rr14}
(k+1)M_{k,l}^{(G)}=\sum_{i=1}^6\sum_{j=0}^i\frac{(\kappa-1)^{2i-j}}{2^i}f_{i,j}M_{k-i,l-j}^{(G)},
\end{equation}
where
\begin{equation*}
\begin{gathered}
f_{1,0}=3(6k-1),\quad f_{1,1}=2(6k-1),\quad f_{2,0}=36(4-3k),
\\f_{2,1}=48(4-3k),\quad f_{2,2}=49k^3-108k^2+23k+40,\quad f_{3,0}=108(2k-5),
\\f_{3,1}=216(2k-5),\quad f_{3,2}=3(5-2k)(98k^2-304k+189),
\end{gathered}
\end{equation*}
\begin{equation} \label{eq:rr15}
\begin{gathered}
f_{3,3}=2(5-2k)(98k^2-304k+221),\quad f_{4,2}=\tfrac{441}{2}(2k-5)_3,\quad f_{4,3}=294(2k-5)_3,
\\f_{4,4}=\tfrac{1}{2}(2k-5)_3\left(126k(3-k)-137\right),\quad f_{5,4}=\tfrac{189}{2}(2k-5)_5,
\\f_{5,5}=63(2k-5)_5,\quad f_{6,6}=\tfrac{81}{8}(2k-5)_7,
\end{gathered}
\end{equation}
and all other $f_{i,j}$ are zero. We also set $M_{k,l}^{(G)}=0$ if $l<0$ or $l>k$.
\end{proposition}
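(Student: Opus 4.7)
The plan is to derive \eqref{eq:rr14} from the sixth-order moment recurrence of Proposition \ref{P3.4} by substituting the polynomial expansion \eqref{eq:rr11} for the even Gaussian moments and equating coefficients of equal powers of $N$.

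First I would rewrite the recurrence of Proposition \ref{P3.4} with its index variable set to $2k$, so that it reads
\begin{equation*}
\sum_{r=0}^{6}d_{6,r}^{(G)}(2k)\left(\frac{\kappa-1}{4}\right)^{r} m_{2(k-r)}^{(G)}=0,\qquad k\geq 6,
\end{equation*}
where the $k$-dependence of $d_{6,r}^{(G)}$ has been made explicit; in particular this explains the range of validity $k\geq 6$ of \eqref{eq:rr14}. Next I would decompose each $d_{6,r}^{(G)}(2k)$ as a polynomial in $N_\beta=(\kappa-1)N$, writing $d_{6,r}^{(G)}(2k)=\sum_{p\geq 0}\delta_{r,p}(k)\,N_\beta^{p}$ with $\delta_{r,p}(k)$ independent of $\kappa$ and $N$. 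Substituting the expansion $m_{2(k-r)}^{(G)}=\sum_{s}M_{k-r,s}^{(G)}N^{k-r-s+1}$ from \eqref{eq:rr11} and using $N_\beta^{p}=(\kappa-1)^{p}N^{p}$ converts the recurrence into the polynomial identity
\begin{equation*}
\sum_{r,p,s}\delta_{r,p}(k)\,\frac{(\kappa-1)^{r+p}}{4^{r}}\,M_{k-r,s}^{(G)}\,N^{k-r-s+p+1}=0.
\end{equation*}

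Extracting the coefficient of $N^{k-l+1}$ then forces $s=l-r+p$ and, after isolating the $r=0$ contribution (which equals $-8(k+1)M_{k,l}^{(G)}$ since $d_{6,0}^{(G)}(2k)=-8(k+1)$) and dividing through, one obtains the structural form of \eqref{eq:rr14} with the reindexing $i=r$ and $j=r-p$; the exponent $r+p$ of $(\kappa-1)$ becomes $2i-j$, exactly as stated. The polynomial coefficients $f_{i,j}(k)$ are then the $\delta_{i,i-j}(k)$ multiplied by the bookkeeping constants needed to produce the claimed prefactor $1/2^{i}$, and the list \eqref{eq:rr15} is extracted by reading off the nonzero monomials $N_\beta^{p}$ appearing in each $d_{6,r}^{(G)}(2k)$. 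Pairs $(i,j)$ absent from \eqref{eq:rr15} correspond to $N_\beta$-monomials that do not appear in the corresponding $d_{6,r}^{(G)}$, producing $f_{i,j}=0$ and supplying the convention $M_{k,l}^{(G)}=0$ for $l<0$ or $l>k$.

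The main obstacle is purely the careful bookkeeping of numerical coefficients: for $r\in\{3,4,5,6\}$ the polynomials $d_{6,r}^{(G)}(2k)$ are of high total degree in $N_\beta$ and $k$ and appear as products of several factors, so a computer algebra system is effectively required to produce and verify the explicit catalogue \eqref{eq:rr15}. As a sanity check, \eqref{eq:rr14} can be confirmed on low-lying cases using the initial even moments $m_{0}^{(G)},m_{2}^{(G)},\ldots,m_{10}^{(G)}$ from \cite{WF14}, which also supply the required seed values $M_{k,l}^{(G)}$ for $0\leq k\leq 5$ needed to run the recursion.
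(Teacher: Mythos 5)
Your proposal is correct and coincides with the paper's own (largely implicit) derivation: the paper states that the $1$-point recursions of \S\ref{s3.2} are obtained by substituting the expansion \eqref{eq:rr11} into the moment recurrences of \S\ref{s3.1} and equating terms of equal order in $N$, which is exactly your procedure, including the replacement $k\mapsto 2k$ in Proposition \ref{P3.4} (whence the range $k\geq 6$), the decomposition of the $d_{6,r}^{(G)}$ in powers of $N_{\beta}$, and the reindexing $j=r-p$ that produces the exponent $2i-j$ of $\kappa-1$. Your remark that the explicit catalogue of the $f_{i,j}$ is a computer-algebra bookkeeping exercise is also consistent with how the paper treats these coefficients (and is prudent: carrying out that bookkeeping reveals that $f_{i,j}=\delta_{i,i-j}(k)/8$ only if the prefactor in \eqref{eq:rr14} is read as $(\kappa-1)^{2i-j}/4^{i}$, so the printed $2^{i}$ appears to be a normalisation slip relative to Proposition \ref{P3.4} rather than a flaw in your argument).
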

When \eqref{eq:rr14} is used to compute $M_{k,0}^{(G)}$, it reduces to
\begin{multline} \label{eq:rr16}
16(k+1)M_{k,0}^{(G)}=12(6k-1)(\kappa-1)^2M_{k-1,0}^{(G)}-36(3k-4)(\kappa-1)^4M_{k-2,0}^{(G)}
\\+27(2k-5)(\kappa-1)^6M_{k-3,0}^{(G)},\quad\kappa=1/3\textrm{ or }3.
\end{multline}
This is in keeping with the limiting scaled density of the Gaussian ensembles equaling the Wigner semi-circle law specified by the density $\frac{1}{\pi}\sqrt{2-x^2}$ supported on $|x|<\sqrt{2}$: up to a scale factor the Catalan numbers are the even moments.

To expand the Laguerre and Jacobi spectral moments in $N$, we need to decide on how the parameters $a$ and $b$ scale with $N$, which has been inconsequential up until now. To make our results suitable for most known (see e.g. \cite{MS11}) or potential applications, we first consider the LUE moments with $a=\alpha_1N+\delta_1$ where the $\alpha_1$ and $\delta_1$ are parameters of order unity, though our method easily accommodates for more general $N$-expansions of $a$. Substituting this together with the expansion \eqref{eq:rr12} into \eqref{eq:rr4} gives the sought recurrence.

\begin{proposition} \label{P3.6}
For $\beta=2$, and $k\geq2$,
\begin{multline} \label{eq:rr17}
(k+1)M_{k,l}^{(L)}=(2k-1)((2+\alpha_1)M_{k-1,l}^{(L)}+\delta_1M_{k-1,l-1}^{(L)})
\\-\alpha_1^2(k-2)M_{k-2,l}^{(L)}+(k-2)\left((k-1)^2-\delta_1^2\right)M_{k-2,l-2}^{(L)},
\end{multline}
where we set $M_{k,l}^{(L)}=0$ if $l<0$ or $l>k$.
\end{proposition}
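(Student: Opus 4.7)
The plan is to push the LUE moment recurrence \eqref{eq:rr4} through the substitution $a=\alpha_1 N+\delta_1$ together with the $N$-expansion \eqref{eq:rr12}, and to read off the recurrence on the $M_{k,l}^{(L)}$ by equating coefficients of equal powers of $N$. Once the substitutions are made, both sides of \eqref{eq:rr4} are polynomials in $N$, so the proposition reduces to a purely algebraic coefficient-matching step — no analytic input beyond the already-established second-order recurrence is needed.

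Concretely, I would first expand the $a$-dependent factors,
\begin{align*}
a+2N &= (\alpha_1+2)N+\delta_1,\\
(k-1)^2-a^2 &= (k-1)^2-\alpha_1^2 N^2-2\alpha_1\delta_1 N-\delta_1^2,
\end{align*}
and then substitute $m_j^{(L)}=\sum_{l\geq 0}M_{j,l}^{(L)}N^{j-l+1}$, adopting the convention $M_{j,l}^{(L)}=0$ for $l\notin\{0,\ldots,j\}$. The coefficient of $N^{k-l+1}$ on the left of \eqref{eq:rr4} is then $(k+1)M_{k,l}^{(L)}$. On the right, the general bookkeeping rule is that a term of the form $c\,N^s m_j^{(L)}$ contributes $c\,M_{j,\,l+s+j-k}^{(L)}$ at order $N^{k-l+1}$; applying this rule to each of the monomials coming from $(2k-1)(a+2N)m_{k-1}^{(L)}$ and $(k-2)\bigl[(k-1)^2-a^2\bigr]m_{k-2}^{(L)}$ and summing produces the stated recurrence \eqref{eq:rr17}.

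The only delicate point is the index bookkeeping: the two monomials from $a+2N$, together with the monomials $(k-1)^2$, $-\alpha_1^2 N^2$, $-2\alpha_1\delta_1 N$ and $-\delta_1^2$ from $(k-1)^2-a^2$, must each be tracked with their correct sign and their correct shift in $l$ (namely $0$, $-1$, $-2$, $0$, $-1$, $-2$ respectively). Boundary behaviour is handled uniformly by the convention $M_{j,l}^{(L)}=0$ outside $0\leq l\leq j$, so no separate argument is required at small $k$ or near the edges of the coefficient triangle. As a consistency check one can verify the recurrence at low orders against the explicit values of $m_0^{(L)},\ldots,m_3^{(L)}$ tabulated in \cite{MRW15,FRW17}, and also verify that the $l=0$ specialisation collapses to a three-term recursion whose solution agrees with the leading large-$N$ (Marchenko--Pastur type) moments with shape parameter determined by $\alpha_1$.
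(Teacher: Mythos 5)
Your approach coincides with the paper's own (essentially one-line) proof: substitute $a=\alpha_1N+\delta_1$ and the expansion \eqref{eq:rr12} into the second-order recurrence \eqref{eq:rr4} and match coefficients of $N^{k-l+1}$. Your bookkeeping rule $c\,N^s m_j^{(L)}\mapsto c\,M_{j,\,l+s+j-k}^{(L)}$ is correct, and so is your list of shifts $0,-1,-2,0,-1,-2$ for the six monomials.

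There is, however, a genuine discrepancy between what this computation yields and what you claim it yields. The monomial $-2\alpha_1\delta_1 N$ of $(k-1)^2-a^2$, to which you correctly assign the shift $-1$, contributes the term $-2\alpha_1\delta_1(k-2)M_{k-2,l-1}^{(L)}$ to the right-hand side, and this term is absent from \eqref{eq:rr17} as stated. A direct check at $k=3$, $l=1$, using $m_1^{(L)}=N(N+a)$, $m_2^{(L)}=N(N+a)(a+2N)$ and $m_3^{(L)}=N(N+a)\left(5N^2+5aN+a^2+1\right)$, shows that \eqref{eq:rr17} without this cross term fails by exactly $2\alpha_1\delta_1(1+\alpha_1)$, while it holds once the term is restored. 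The cross term vanishes precisely when $\alpha_1=0$ or $\delta_1=0$, which are the only two regimes exercised in the discussion following the proposition, so the defect appears to lie in the statement of \eqref{eq:rr17} rather than in your method. Nevertheless, as written your final assertion that summing the six contributions ``produces the stated recurrence'' is not accurate for general $\alpha_1,\delta_1$: a complete write-up should either include the additional term $-2\alpha_1\delta_1(k-2)M_{k-2,l-1}^{(L)}$ or explicitly restrict to $\alpha_1\delta_1=0$.
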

When this is used to compute $M_{k,0}^{(L)}$ in the case $a=\delta_1={\rm O}(1)$ and thus $\alpha_1=0$, one retrieves the familiar Catalan recursion
\begin{equation*}
M_{k,0}^{(L)}=\frac{2(2k-1)}{k+1}M_{k-1,0}^{(L)}.
\end{equation*}
This is consistent with the fact that for $a={\rm O}(1)$, the limiting scaled spectral density for the Laguerre $\beta$ ensemble is given by the particular Marchenko-Pastur density $\frac{1}{2\pi}\sqrt{4/x-1}$ \cite{PS11}, which has for its $k\textsuperscript{th}$ moment the $k\textsuperscript{th}$ Catalan number.

When $\delta_1=0$, equation \eqref{eq:rr17} reduces to a recursion over even $l$, and we may interpret $l/2$ as genus following Di Francesco \cite{Di03}. In this interpretation, $M_{k,l}^{(L)}$ counts the same gluings as those in the Harer-Zagier recursion, except that the base $2k$-gon has vertices alternately coloured black and white and the gluings respect this bicolouring. Moreover, the coloured vertices are weighted according to the dimensions of the underlying complex random matrices, which relate to the $a$ parameter in our notation. When we additionally have $\alpha_1=0$ so that $a=0$, the recurrence \eqref{eq:rr17} is equivalent to that of \cite[Theorem~4.1]{ND18} on the aforementioned gluings with the coloured vertices no longer weighted.

When we take $\delta_1\neq0$, the recurrence \eqref{eq:rr17} is now over all integer $l\geq0$, to which the interpretation in terms of gluings of $2k$-gons does not immediately extend. On the other hand, most of the literature \cite{Be97,BFB97,CMSV16a,CMSV16b,VV08,No08,LV11,MS11,MS12} relating the Laguerre and Jacobi ensembles to the problems of quantum cavities and quantum transport either does not specify the scaling behaviour of the $a$ and $b$ parameters, fixes $a$ and $b$ as precise values, or takes $a,b\propto N$. In light of this, we proceed with $a=\alpha_1N$ and $b=\alpha_2N$ with $\alpha_i={\rm O}(1)$ for the sake of brevity and clarity. The derivation is readily transferable to more general cases should such a need arise in the future. To complete this subsection, we present the $M_{k,l}^{(\,\cdot\,)}$ recurrences for the LOE, LSE, and JUE. We do not present the JOE and JSE $M_{k,l}^{(J)}$ recurrences due to their (relatively speaking) unwieldy form, but note that they can also be derived from the results of \S\ref{s3.1} in the same way as the other recurrences in this subsection.

\begin{proposition} \label{P3.7}
For $\beta=1$ and $4$, $a=\alpha_1N$ with $\alpha_1={\rm O}(1)$, and $k\geq4$,
\begin{equation} \label{eq:rr18}
(k+1)M_{k,l}^{(L)}=\sum_{i=1}^4\sum_{j=0}^i(\kappa-1)^jg_{i,j}M_{k-i,l-j}^{(L)},
\end{equation}
where
\begin{equation} \label{eq:rr19}
\begin{gathered}
g_{1,0}=(4k-1)\left(\alpha_1+4(\kappa-1)\right)(\kappa-1),
\\g_{2,0}=(3-2k)\left(\alpha_1^2+2(\alpha_1+4(\kappa-1))^2(\kappa-1)^2\right),\quad g_{2,1}=2(2k-3)\alpha_1,
\\g_{2,2}=(k-1)\left(5k^2-11k+4\right),\quad g_{3,0}=(4k-11)(\alpha_1+4(\kappa-1))\alpha_1^2(\kappa-1),
\\g_{3,1}=2(11-4k)(\alpha_1+4(\kappa-1))\alpha_1(\kappa-1),
\\g_{3,2}=2(3-k)\left(5k^2-19k+16\right)(\alpha_1+4(\kappa-1))(\kappa-1),\quad g_{4,0}=(4-k)\alpha_1^4,
\\g_{4,1}=4(k-4)\alpha_1^3,\quad g_{4,2}=(k-4)\left(5k^2-32k+47\right)\alpha_1^2,
\\g_{4,3}=2(4-k)(k-3)(5k-17)\alpha_1,\quad g_{4,4}=4(1-k)\left[(k-4)(k-3)\right]^2,
\end{gathered}
\end{equation}
and all other $g_{i,j}$ are zero. We also set $M_{k,l}^{(L)}=0$ if $l<0$ or $l>k$.
\end{proposition}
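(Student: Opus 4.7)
The plan is to mimic the strategy sketched for Proposition~\ref{P3.6}, now applied to the fourth-order recurrence~\eqref{eq:rr7} for $\beta=1,4$: substitute $a = \alpha_1 N$ together with the expansion~\eqref{eq:rr12} into~\eqref{eq:rr7}, and then extract~\eqref{eq:rr18} by equating coefficients of like powers of $N$.

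The first step is to rewrite each $(\kappa-1)^l d_{4,l}^{(L)}$ as an explicit polynomial in $N$. Under the scaling $a = \alpha_1 N$ one has $a_\beta = \alpha_1 N/(\kappa-1)$, $N_\beta = (\kappa-1) N$, and $\tilde{a} = a_\beta^2 - 2 a_\beta$, so $(a_\beta + 4 N_\beta)$ is linear in $N$ and $\tilde{a}$ is quadratic in $N$, both with coefficients that are polynomials in $\alpha_1$ and $(\kappa-1)$ with $(\kappa-1)^{-1}$ and $(\kappa-1)^{-2}$ denominators. The explicit prefactor $(\kappa-1)^l$ in~\eqref{eq:rr7} exactly clears these denominators arising in the leading-$N$ contributions of $\tilde{a}^{l/2}$ and $(a_\beta+4N_\beta)^l$, so that $(\kappa-1)^l d_{4,l}^{(L)}$ becomes a polynomial in $N$ of degree at most $l$ with coefficients polynomial in $\alpha_1$ and $(\kappa-1)$.

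The second step is to substitute $m_{k-i}^{(L)} = \sum_{l' \ge 0} M_{k-i,l'}^{(L)} N^{k-i-l'+1}$ into each term of~\eqref{eq:rr7} and collect the coefficient of $N^{k-l+1}$. The $d_{4,0}^{(L)}$ term contributes $(k+1) M_{k,l}^{(L)}$ to the left-hand side; each remaining term $(\kappa-1)^i d_{4,i}^{(L)} m_{k-i}^{(L)}$ contributes a finite sum in which $l'$ is shifted by the number of powers of $N$ drawn from the coefficient rather than from $m_{k-i}^{(L)}$. Writing $j$ for this shift, so that $l' = l - j$ with $0 \le j \le i$ (the bound $j \le i$ coming from the degree bound established in the previous step), one arrives at the double-indexed form~\eqref{eq:rr18}, and the explicit $g_{i,j}$ recorded in~\eqref{eq:rr19} are read off by collecting the coefficient of $N^{i-j}$ in $(\kappa-1)^i d_{4,i}^{(L)}$.

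The derivation is essentially mechanical, and I expect the main obstacle to be purely organisational: one must expand the five coefficients listed in~\eqref{eq:rr8} (after the substitutions of Theorem~\ref{T2.8} and $a = \alpha_1 N$) into monomials in $N$, $\alpha_1$ and $(\kappa-1)$, track how each monomial redistributes among the terms $M_{k-i,l-j}^{(L)}$ under coefficient extraction, and verify that exactly the listed $g_{i,j}$ survive while all unlisted entries vanish. The initial values of $M_{k,l}^{(L)}$ for the required small $k$ can then be supplied from explicit low-order moment computations, in the spirit of~\cite{MRW15,FRW17}.
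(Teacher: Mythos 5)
Your proposal is correct and is essentially the paper's own route: the paper derives Proposition \ref{P3.7} exactly by substituting $a=\alpha_1 N$ and the expansion \eqref{eq:rr12} into the recurrence \eqref{eq:rr7} of Proposition \ref{P3.3} and equating coefficients of $N^{k-l+1}$, which is precisely your plan. The only blemish is a wording slip in your bookkeeping (the power of $N$ drawn from $(\kappa-1)^i d_{4,i}^{(L)}$ is $i-j$, not $j$, and it is the degree bound that forces $j\geq 0$ while polynomiality forces $j\leq i$), but your final prescription --- read $g_{i,j}$ off the coefficient of $N^{i-j}$ in $(\kappa-1)^i d_{4,i}^{(L)}$ --- is the correct one.
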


\begin{proposition} \label{P3.8}
For $\beta=2$, $a=\alpha_1N$ and $b=\alpha_2N$, with $\alpha_i={\rm O}(1)$, and $k\geq3$,
\begin{equation} \label{eq:rr20}
k(\alpha_1+\alpha_2+2)^2M_{k,l}^{(J)}=k(k-1)^2M_{k,l-2}^{(J)}+\sum_{i=1}^3\left(h_{i,0}M_{k-i,l}^{(J)}+h_{i,1}M_{k-i,l-2}^{(J)}\right),
\end{equation}
where
\begin{equation} \label{eq:rr21}
\begin{gathered}
h_{1,0}=2(4k-3)(\alpha_1+\alpha_2+1)+\left(3\alpha_1(k-1)+\alpha_2k\right)(\alpha_1+\alpha_2),
\\h_{1,1}=(1-k)(3k^2-8k+6),
\\h_{2,0}=3\alpha_1^2(2-k)+(3-2k)\left((\alpha_1+2)(\alpha_2+2)-2\right),
\\ h_{2,1}=(k-2)(3k^2-10k+9),\quad h_{3,0}=\alpha_1^2(k-3),\quad h_{3,1}=(3-k)(k-2)^2,
\end{gathered}
\end{equation}
and we set $M_{k,l}^{(J)}=0$ if $l<0$.
\end{proposition}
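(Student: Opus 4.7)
The plan is to derive \eqref{eq:rr20}--\eqref{eq:rr21} directly from the JUE moment recurrence \eqref{eq:rr2}--\eqref{eq:rr3} of Proposition~\ref{P3.1}, by substituting the scalings $a=\alpha_1 N$, $b=\alpha_2 N$, inserting the large-$N$ expansion \eqref{eq:rr13}, and equating coefficients of equal powers of $N$. This is the same strategy used, tacitly, to obtain Propositions~\ref{P3.5}, \ref{P3.6} and~\ref{P3.7} in the Gaussian and Laguerre cases, so the main task here is bookkeeping.

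First, I would examine each of the coefficients $d_{2,i}^{(J)}$ for $i=0,1,2,3$ in \eqref{eq:rr3}, viewed as a polynomial in the three variables $a,b,N$. A term-by-term check reveals the structural property that each $d_{2,i}^{(J)}$ decomposes as a piece constant in $(a,b,N)$ plus a piece that is homogeneous of degree two in $(a,b,N)$; no monomials of odd total degree appear. Hence, after substituting $a=\alpha_1 N$ and $b=\alpha_2 N$, each coefficient takes the form
\[
d_{2,i}^{(J)}\Big|_{a=\alpha_1 N,\,b=\alpha_2 N} \;=\; A_i\,N^2 + B_i, \qquad i=0,1,2,3,
\]
where $A_i$ and $B_i$ depend only on $k,\alpha_1,\alpha_2$. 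The absence of an $N^1$ contribution is the structural reason that the recurrence \eqref{eq:rr20} jumps by two in its second index, relating $M_{k,l}^{(J)}$ only to $M_{k,l-2}^{(J)}$ and not to $M_{k,l-1}^{(J)}$ on a common $k$-slice.

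Next, I would substitute $m_{k-i}^{(J)}=\sum_{j\geq 0}M_{k-i,j}^{(J)}N^{1-j}$ into $\sum_{i=0}^{3}d_{2,i}^{(J)}m_{k-i}^{(J)}=0$. Each product $d_{2,i}^{(J)}m_{k-i}^{(J)}$ then produces monomials of two types, namely $A_i\,M_{k-i,j}^{(J)}N^{3-j}$ and $B_i\,M_{k-i,j}^{(J)}N^{1-j}$. Extracting the coefficient of $N^{3-l}$ gives the single identity
\[
\sum_{i=0}^{3}\bigl(A_i\,M_{k-i,l}^{(J)} \;+\; B_i\,M_{k-i,l-2}^{(J)}\bigr) \;=\; 0,
\]
valid for each $l\geq 0$ under the convention $M_{k',l'}^{(J)}=0$ for $l'<0$. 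Isolating the $i=0$ contribution and noting that $A_0=k(\alpha_1+\alpha_2+2)^2$ and $-B_0=k(k-1)^2$ from \eqref{eq:rr3}, one arrives at \eqref{eq:rr20} with the identifications $h_{i,0}=-A_i$ and $h_{i,1}=-B_i$ for $i=1,2,3$.

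The remaining work is to expand the $A_i,B_i$ from \eqref{eq:rr3} and simplify them into the explicit forms \eqref{eq:rr21}; in handling $A_2$, for example, one uses the rearrangement $2\alpha_1+2\alpha_2+2+\alpha_1\alpha_2=(\alpha_1+2)(\alpha_2+2)-2$, and in handling $A_1$ the identity $k(\alpha_1^2-\alpha_2^2)=k(\alpha_1-\alpha_2)(\alpha_1+\alpha_2)$ combines with the $(a+b+2N)^2$ expansion to produce the mixed $(3\alpha_1(k-1)+\alpha_2 k)(\alpha_1+\alpha_2)$ term seen in $h_{1,0}$. The only step requiring real attention is the verification that no $N^1$ term survives in any $d_{2,i}^{(J)}$ after the substitution, which is a small cancellation check on the polynomial structure of the coefficients; everything else is routine polynomial algebra, easily delegated to computer algebra as a consistency check.
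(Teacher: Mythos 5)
Your proposal is correct and follows essentially the same route as the paper's proof: substitute $a=\alpha_1 N$, $b=\alpha_2 N$ and the expansion \eqref{eq:rr13} into the recurrence \eqref{eq:rr2}, observe that each $d_{2,i}^{(J)}$ then reduces to the form $A_iN^2+B_i$ with no intermediate power of $N$, and equate coefficients of $N^{3-l}$. Your explicit verification that the non-quadratic piece is ${\rm O}(1)$ rather than ${\rm O}(N)$ is precisely what the paper's phrase ``a term of order one'' is meant to convey, and the identifications $h_{i,0}=-A_i$, $h_{i,1}=-B_i$ (with $A_0=k(\alpha_1+\alpha_2+2)^2$, $-B_0=k(k-1)^2$) check out against \eqref{eq:rr21}.
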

\begin{proof}
Since the derivation of this recurrence is slightly different to that described earlier, we supply the details. Substitute expansion \eqref{eq:rr13} into \eqref{eq:rr2}, along with the substitutions $a=\alpha_1N$ and $b=\alpha_2N$, and then equate terms of order $3-l$ in $N$. Note that with our choice of $a$ and $b$, the coefficients $d_{2,0}^{(J)},\ldots,d_{2,3}^{(J)}$ all contain a term of order two and a term of order one in $N$, and no other terms. The ${\rm O}(N^2)$ terms of $d_{2,i}^{(J)}$ correspond to the coefficients of $M_{k-i,l}^{(J)}$ in \eqref{eq:rr20} while the terms of order one correspond to the coefficients of $M_{k-i,l-2}^{(J)}$. 
\end{proof}

There are two immediate observations relating to Proposition \ref{P3.8}: Firstly, this recursion runs over even $l$, similar to Proposition \ref{P3.6} with $a=\alpha_1N$. Indeed, when the first few moments $m_0^{(J)},m_1^{(J)},m_2^{(J)}$ are expanded as series in $1/N$, they do not contain terms of even powers in $N$, so recurrence \eqref{eq:rr20} is trivially satisfied when $l$ is odd. Secondly, unlike the analogous recurrence relations for the GUE and LUE, $M_{k,l}^{(J)}$ depends via this recurrence on $M_{k,l-2}^{(J)}$ instead of just $M_{i,j}^{(J)}$ with $i<k$. This means that computing the order $N^l$ term of $m_k^{(J)}$ through this recurrence requires one to find all other terms in $m_k^{(J)}$ that are higher order in $N$.

As yet, it is not known whether the JUE spectral moments count some type of surface similar to those counted by the GUE and LUE spectral moments. If such an interpretation is possible for the $M_{k,l}^{(J)}$, the fact that recurrence \eqref{eq:rr20} runs over even $l$ suggests that $l/2$ might again play the role of genus, and the second observation above might give a clue as to what sets the JUE apart from the GUE and LUE. Since setting $a=0$ in Proposition \ref{P3.6} yields a simpler recurrence that retains an interpretation in terms of gluings of $2k$-gons, we set $a=b=0$ in Proposition \ref{P3.8} for comparison:

\begin{proposition} \label{P3.9}
In the context of Proposition \ref{P3.8} with $a=b=0$,
\begin{multline} \label{eq:rr22}
4kM_{k,l}^{(J)}=k(k-1)^2M_{k,l-2}^{(J)}+2(4k-3)M_{k-1,l}^{(J)}
\\+(1-k)(3k^2-8k+6)M_{k-1,l-2}^{(J)}+2(3-2k)M_{k-2,l}^{(J)}
\\+(k-2)(3k^2-10k+9)M_{k-2,l-2}^{(J)}+(3-k)(k-2)^2M_{k-3,l-2}^{(J)}.
\end{multline}
\end{proposition}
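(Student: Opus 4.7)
The statement is simply the specialisation of Proposition \ref{P3.8} to $\alpha_1 = \alpha_2 = 0$, so my plan is to do nothing more than perform this substitution and check that the coefficients in \eqref{eq:rr21} collapse to those displayed in \eqref{eq:rr22}.

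First I would observe that setting $\alpha_1 = \alpha_2 = 0$ reduces the left-hand side $k(\alpha_1+\alpha_2+2)^2 M_{k,l}^{(J)}$ to $4k M_{k,l}^{(J)}$, while the term $k(k-1)^2 M_{k,l-2}^{(J)}$ is parameter-free and survives unchanged. Next I would read off the remaining coefficients from \eqref{eq:rr21}: $h_{1,0}$ becomes $2(4k-3)$ since the second summand vanishes; $h_{2,0}$ becomes $(3-2k)(2 \cdot 2 - 2) = 2(3-2k)$ since the $\alpha_1^2$ term drops out; and $h_{3,0}$ vanishes entirely because of its overall factor $\alpha_1^2$, which accounts for the absence of an $M_{k-3,l}^{(J)}$ term in \eqref{eq:rr22}. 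The coefficients $h_{1,1}$, $h_{2,1}$, and $h_{3,1}$ are independent of $\alpha_1,\alpha_2$ and so carry over verbatim.

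Substituting these six evaluated coefficients into \eqref{eq:rr20} reproduces \eqref{eq:rr22} term by term. There is no genuine obstacle: the proof is a one-line corollary, and the only reason to record it separately is that \eqref{eq:rr22} is the form most directly comparable to the Harer--Zagier-type recurrences for the GUE and LUE with $a=0$, and is the natural starting point for any future combinatorial interpretation of $M_{k,l}^{(J)}$ as counting gluings of polygons.
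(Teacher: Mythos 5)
Your proposal is correct and is exactly the argument the paper intends: Proposition \ref{P3.9} is stated as the immediate specialisation of Proposition \ref{P3.8} to $\alpha_1=\alpha_2=0$, and your coefficient checks (in particular that $h_{2,0}$ reduces to $(3-2k)(2\cdot 2-2)=2(3-2k)$ and that $h_{3,0}$ vanishes, removing the $M_{k-3,l}^{(J)}$ term) all match \eqref{eq:rr22}.
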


\subsection{Differential equations for the coefficients of the topological expansion} \label{s3.3}
A feature of the moment expansions \eqref{eq:rr11} and \eqref{eq:rr12} for the Gaussian and Laguerre ensembles is that for large $N$ they are proportional to $N^{k+1}$, whereas the moment expansion \eqref{eq:rr13} is proportional to $N$. For the corresponding resolvents to admit a $1/N$ expansion of the form given in \eqref{eq:intro8}, a change of scale and of normalisation is required, so that all moments are to leading order unity. For this, the rescaled spectral density will be normalised to integrate to unity and will have compact support in the $N\rightarrow\infty$ regime.

Following \cite{WF14} and \cite{FRW17}, this can be achieved by introducing the so-called smoothed (also referred to as global scaled) densities as
\begin{align}
\tilde{\rho}_{(1),\beta,N}^{(G)}(x)&=\sqrt{\tfrac{\kappa}{N}}\,\rho_{(1),\beta,N}^{(G)}(\sqrt{N\kappa}x), \label{eq:rr23}
\\\tilde{\rho}_{(1),\beta,N}^{(L)}(x)&=\kappa\,\rho_{(1),\beta,N}^{(L)}(N\kappa x), \label{eq:rr24}
\\\tilde{\rho}_{(1),\beta,N}^{(J)}(x)&=\tfrac{1}{N}\,\rho_{(1),\beta,N}^{(J)}(x), \label{eq:rr25}
\end{align}
and their corresponding scaled resolvents as
\begin{align}
\tilde{W}_{\beta,N}^{(G)}(x)&=\sqrt{N\kappa}\,W_{\beta,N}^{(G)}(\sqrt{N\kappa}x), \label{eq:rr26}
\\\tilde{W}_{\beta,N}^{(L)}(x)&=N\kappa\,W_{\beta,N}^{(L)}(N\kappa x); \label{eq:rr27}
\end{align}
since $\tilde{\rho}_{(1),\beta,N}^{(J)}(x)$ has the same support as $\rho_{(1),\beta,N}^{(J)}(x)$, there is no need to scale $W_{\beta,N}^{(J)}(x)$. In the $N\to\infty$ limit, the smoothed densities approach the Wigner semi-circle law in the Gaussian case, the Marchenko-Pastur law in the Laguerre case, and a functional form first deduced by Wachter in the Jacobi case. The exact functional forms of the latter two are dependent on the proportionality in $N$ of the $a$ and $b$ parameters, and are given explicitly in \cite{FRW17}. However, as with the Wigner semi-circle law, they are independent of $\kappa$.

Compared to the resolvents $W_{\beta,N}(x)$ of the eigenvalue densities, the scaled resolvents are generating functions for the moments of the smoothed densities above, which converge for large $x$, and can be expanded in $1/N$ according to
\begin{align}
\tilde{W}_{\beta,N}^{(G)}(x)&=2N\sum_{l=0}^{\infty}\frac{W_{\beta}^{(G),l}(x)}{(2N\sqrt{\kappa})^l}, \label{eq:rr28}
\\\tilde{W}_{\beta,N}^{(L)}(x)&=N\sum_{l=0}^{\infty}\frac{W_{\beta}^{(L),l}(x)}{(N\sqrt{\kappa})^l}, \label{eq:rr29}
\\W_{\beta,N}^{(J)}(x)&=N\sum_{l=0}^{\infty}\frac{W_{\beta}^{(J),l}(x)}{(N\kappa)^l}, \label{eq:rr30}
\end{align}
where we again follow \cite{WF14,FRW17} for consistency. Note that the smoothed densities \eqref{eq:rr23}--\eqref{eq:rr25} do not permit such $1/N$ expansions, due to oscillatory terms beyond leading order; see e.g.~\cite{GFF05}. Appropriately scaling the differential equations of Section \ref{s2} yields differential equations for the scaled resolvents which in turn give first-order differential equations for the expansion coefficients $W_{\beta}^l(x)$, after substituting in the expansions above and then equating terms of equal order in $N$. Compared to the topological recursion, these differential equations provide a tractable recursive process for computing the $W_{\beta}^l(x)$, which is simpler in that there is no need for introducing multi-point correlators, but more restrictive in that they only apply for the $\beta$-values considered in this paper. Presently, we intend to use the upcoming differential equations to check earlier results of this section for consistency both internally and with \cite{WF14,FRW17}. For this reason, we continue to treat the case $a=\alpha_1N$ and $b=\alpha_2N$, with $\alpha_i={\rm O}(1)$.

We begin with the Gaussian ensemble with $\beta\in\{2/3,1,4,6\}$ and refer to \cite{HT12} for the $\beta=2$ case.
\begin{proposition} \label{P3.10}
We reuse the notation of Proposition \ref{P2.1} with $g=1/2$ so that $h=\sqrt{\kappa}-1/\sqrt{\kappa}$ and $y_{(G)}=\sqrt{x^2-2}$. Then for $\beta=1$ and $4$, the expansion coefficients of $\tilde{W}_{\beta,N}^{(G)}(x)$ \eqref{eq:rr28} satisfy the differential equations
\begin{equation} \label{eq:rr31}
y_{(G)}^2\frac{\mathrm{d}}{\mathrm{d}x}W_{\beta}^{(G),0}(x)-xW_{\beta}^{(G),0}(x)=-1,
\end{equation}

\begin{equation} \label{eq:rr32}
y_{(G)}^2\frac{\mathrm{d}}{\mathrm{d}x}W_{\beta}^{(G),1}(x)-xW_{\beta}^{(G),1}(x)=4h\frac{\mathrm{d}}{\mathrm{d}x}W_{\beta}^{(G),0}(x)-\frac{h}{y_{(G)}^2}\left[2xW_{\beta}^{(G),0}(x)+5\right],
\end{equation}
and for $l\geq2$, the general differential equation

\begin{multline} \label{eq:rr33}
y_{(G)}^2\frac{\mathrm{d}}{\mathrm{d}x}W_{\beta}^{(G),l}(x)-xW_{\beta}^{(G),l}(x)=4h\frac{\mathrm{d}}{\mathrm{d}x}W_{\beta}^{(G),l-1}(x)-\frac{2hx}{y_{(G)}^2}W_{\beta}^{(G),l-1}(x)
\\+\frac{1}{y_{(G)}^2}\left[\frac{5y_{(G)}^2}{2}\frac{\mathrm{d}^3}{\mathrm{d}x^3}-3x\frac{\mathrm{d}^2}{\mathrm{d}x^2}+\frac{\mathrm{d}}{\mathrm{d}x}\right]W_{\beta}^{(G),l-2}(x)
\\-\frac{5h}{y_{(G)}^2}\frac{\mathrm{d}^3}{\mathrm{d}x^3}W_{\beta}^{(G),l-3}(x)+\frac{1}{y_{(G)}^2}\frac{\mathrm{d}^5}{\mathrm{d}x^5}W_{\beta}^{(G),l-4}(x),
\end{multline}
where we set $W_{\beta}^{(G),k}:=0$ for $k<0$.
\end{proposition}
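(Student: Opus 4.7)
The plan is to derive Proposition~\ref{P3.10} as a direct $1/N$-expansion of the inhomogeneous fifth-order ODE supplied by Proposition~\ref{P2.1} for $\beta\in\{1,4\}$. The key observation is that the length scale $g$ in Proposition~\ref{P2.1} is a free parameter; I would specialise to $g=1/2$, which simultaneously reduces $y_{(G)}^2=x^2-4g$ to the form $y_{(G)}^2=x^2-2$ used in Proposition~\ref{P3.10}, and makes the natural small parameter $\epsilon:=g/(N\sqrt{\kappa})=1/(2N\sqrt{\kappa})$ coincide with $(2N\sqrt{\kappa})^{-1}$ appearing in \eqref{eq:rr28}. A short change of variables (mapping $x\mapsto\sqrt{N\kappa}x$ and comparing normalisations of the density) confirms that with this choice of $g$ the resolvent $\frac{1}{N}W_{\beta,N}^{(G)}(x)$ of Proposition~\ref{P2.1} coincides with $\frac{1}{N}\tilde W_{\beta,N}^{(G)}(x)$ of \S\ref{s3.3}, giving the identification
\begin{equation*}
\frac{1}{N}W_{\beta,N}^{(G)}(x)=2\sum_{l\geq 0}\epsilon^l W_\beta^{(G),l}(x).
\end{equation*}

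I would then substitute this series into \eqref{eq:de3} and expand each coefficient in \eqref{eq:de1} as a polynomial in $\epsilon$. The bookkeeping becomes clean once one records the $\epsilon$-degree of every summand of $\mathcal{D}_{\beta,N}^{(G)}$: the leading symbol $-y_{(G)}^4\,d/dx+y_{(G)}^2 x$ enters at every order and contributes $W_\beta^{(G),l}$; the $h$-pieces $4h\epsilon\,y_{(G)}^2\,d/dx-2h\epsilon\,x$ contribute $W_\beta^{(G),l-1}$; the $\epsilon^2$-block $(5y_{(G)}^2/2)\,d^3/dx^3-3x\,d^2/dx^2+d/dx$ contributes $W_\beta^{(G),l-2}$; the single $\epsilon^3$ term $-5h\,d^3/dx^3$ contributes $W_\beta^{(G),l-3}$; and $-\epsilon^4\,d^5/dx^5$ contributes $W_\beta^{(G),l-4}$. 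Matching coefficients of $\epsilon^l$ against the right-hand side $2y_{(G)}^2-10h\epsilon$ then yields three cases: at order $\epsilon^0$ only the inhomogeneity $2y_{(G)}^2$ survives and one recovers \eqref{eq:rr31}; at order $\epsilon^1$ the $-10h$ inhomogeneity couples with the first $h$-pieces to give \eqref{eq:rr32}; and for each $l\geq 2$ the RHS contributes nothing and all five of the above pieces survive, producing \eqref{eq:rr33}. In every case the stated form is reached after dividing through by $y_{(G)}^2$ and by the overall factor $2$ coming from $\Phi_l=2W_\beta^{(G),l}$.

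There is no conceptual obstacle here—the argument is routine power-series matching—so the main work is careful sign and factor accounting across the five coupled order-by-order contributions, several of which come weighted by $h$, $y_{(G)}^2$ or $y_{(G)}^{-2}$ after dividing through. As a sanity check I would verify that \eqref{eq:rr31} is solved by $W_\beta^{(G),0}(x)=\tfrac{1}{2}(x-\sqrt{x^2-2})$, which is one-half the resolvent of the Wigner semicircle density $\tfrac{1}{\pi}\sqrt{2-x^2}\,\chi_{|x|<\sqrt{2}}$; this matches both the known limiting form of the smoothed density $\tilde\rho_{(1),\beta,N}^{(G)}$ and the prefactor $2N$ in \eqref{eq:rr28}, and confirms that the identification made in the first paragraph is correctly normalised.
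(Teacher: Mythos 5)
Your approach is exactly the one the paper uses: the preamble to \S\ref{s3.3} states that these propositions follow by scaling the differential equations of Section \ref{s2}, substituting the expansion \eqref{eq:rr28}, and equating powers of $1/N$, and your identification of $\tilde{W}_{\beta,N}^{(G)}$ with the resolvent of Proposition \ref{P2.1} at $g=1/2$, together with your $\epsilon$-grading of $\mathcal{D}_{\beta,N}^{(G)}$ into the five blocks, is correct. One caveat: carrying out your matching faithfully gives $+5h/y_{(G)}^2$ as the constant term on the right of \eqref{eq:rr32} (equivalently $-\tfrac{h}{y_{(G)}^2}[2xW_\beta^{(G),0}(x)-5]$) and $-\tfrac{1}{y_{(G)}^2}\tfrac{\mathrm{d}^5}{\mathrm{d}x^5}W_\beta^{(G),l-4}(x)$ as the last term of \eqref{eq:rr33}, i.e.\ two signs opposite to the printed statement; a large-$x$ check using $m_2^{(G)}=\tfrac{\kappa}{2}N^2+\tfrac{1-\kappa}{2}N$, which forces $W_\beta^{(G),1}(x)\sim -h/(2x^3)$, confirms the signs produced by your derivation, so these appear to be slips in the printed proposition rather than an obstacle to your method --- but your assertion that the matching \emph{produces} the printed equations glosses over this discrepancy, and your final sanity check on $W_\beta^{(G),0}$ alone would not detect it.
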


\begin{proposition} \label{P3.11}
Retain the choice of $g=1/2$ from Proposition \ref{P3.10}. Then for $\beta=2/3$ and $6$, the expansion coefficients of $\tilde{W}_{\beta,N}^{(G)}(x)$ satisfy the differential equations
\begin{equation} \label{eq:rr34}
y_{(G)}^2\frac{\mathrm{d}}{\mathrm{d}x}W_{\beta}^{(G),0}(x)-xW_{\beta}^{(G),0}(x)=-1,
\end{equation}

\begin{equation} \label{eq:rr35}
y_{(G)}^2\frac{\mathrm{d}}{\mathrm{d}x}W_{\beta}^{(G),1}(x)-xW_{\beta}^{(G),1}(x)=6h\frac{\mathrm{d}}{\mathrm{d}x}W_{\beta}^{(G),0}(x)-\frac{h}{y_{(G)}^2}\left[4xW_{\beta}^{(G),0}(x)-7\right],
\end{equation}

\begin{multline} \label{eq:rr36}
y_{(G)}^2\frac{\mathrm{d}}{\mathrm{d}x}W_{\beta}^{(G),2}(x)-xW_{\beta}^{(G),2}(x)=6h\frac{\mathrm{d}}{\mathrm{d}x}W_{\beta}^{(G),1}(x)-\frac{4hx}{y_{(G)}^2}W_{\beta}^{(G),1}(x)-\frac{43}{3y_{(G)}^4}
\\+\frac{1}{12y_{(G)}^4}\left[49y_{(G)}^4\frac{\mathrm{d}^3}{\mathrm{d}x^3}-78xy_{(G)}^2\frac{\mathrm{d}^2}{\mathrm{d}x^2}+3(72-19x^2)\frac{\mathrm{d}}{\mathrm{d}x}+25x\right]W_{\beta}^{(G),0}(x),
\end{multline}
and for $l\geq3$, the general differential equation
\begin{multline} \label{eq:rr37}
y_{(G)}^2\frac{\mathrm{d}}{\mathrm{d}x}W_{\beta}^{(G),l}(x)-xW_{\beta}^{(G),l}(x)=6h\frac{\mathrm{d}}{\mathrm{d}x}W_{\beta}^{(G),l-1}(x)-\frac{4hx}{y_{(G)}^2}W_{\beta}^{(G),l-1}(x)
\\+\frac{1}{12y_{(G)}^4}\left[49y_{(G)}^4\frac{\mathrm{d}^3}{\mathrm{d}x^3}-78xy_{(G)}^2\frac{\mathrm{d}^2}{\mathrm{d}x^2}+3(72-19x^2)\frac{\mathrm{d}}{\mathrm{d}x}+25x\right]W_{\beta}^{(G),l-2}(x)
\\+\frac{h}{3y_{(G)}^4}\left[49y_{(G)}^2\frac{\mathrm{d}^3}{\mathrm{d}x^3}+39x\frac{\mathrm{d}^2}{\mathrm{d}x^2}-10\frac{\mathrm{d}}{\mathrm{d}x}\right]W_{\beta}^{(G),l-3}+\frac{7h}{y_{(G)}^4}\frac{\mathrm{d}^5}{\mathrm{d}x^5}W_{\beta}^{(G),l-5}
\\+\frac{1}{18y_{(G)}^4}\left[63y_{(G)}^2\frac{\mathrm{d}^5}{\mathrm{d}x^5}+63x\frac{\mathrm{d}^4}{\mathrm{d}x^4}+230\frac{\mathrm{d}^3}{\mathrm{d}x^3}\right]W_{\beta}^{(G),l-4}+\frac{3h}{4y_{(G)}^4}\frac{\mathrm{d}^7}{\mathrm{d}x^7}W_{\beta}^{(G),l-6},
\end{multline}
where we set $W_{\beta}^{(G),k}:=0$ for $k<0$.
\end{proposition}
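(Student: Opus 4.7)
The approach mirrors the derivation of Proposition \ref{P3.10} from Proposition \ref{P2.1}. My plan is to first convert the unscaled resolvent differential equation \eqref{eq:de39} into one for the global-scaled resolvent $\tilde{W}_{\beta,N}^{(G)}(x)$ of \eqref{eq:rr26}, then substitute the topological expansion \eqref{eq:rr28}, and finally equate coefficients of powers of $1/N$. Since \eqref{eq:de39} holds for both $\beta = 2/3$ and $\beta = 6$, the derivation treats both cases simultaneously.

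First, I would perform the change of variable $y = \sqrt{N\kappa}\,x$ in \eqref{eq:de39} via $\frac{d}{dy} = (N\kappa)^{-1/2}\frac{d}{dx}$ and $\frac{1}{N}W_{\beta,N}^{(G)}(\sqrt{N\kappa}\,x) = (N\sqrt{N\kappa})^{-1}\tilde{W}_{\beta,N}^{(G)}(x)$. Acting on the seventh-order operator \eqref{eq:de37} in this way produces a new seventh-order equation for $\tilde{W}_{\beta,N}^{(G)}(x)$ whose coefficients, after substituting $N_\beta = (\kappa - 1)N$, are polynomials in $x$ with $N$-dependence entering only through powers of $(N\sqrt{\kappa})^{-1}$, and whose structure can be reorganised in terms of $y_{(G)}^2 = x^2 - 2$ and $h = \sqrt{\kappa} - 1/\sqrt{\kappa}$. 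The inhomogeneity on the right-hand side of \eqref{eq:de39} similarly becomes a polynomial in $x^2$ whose $1/N$-expansion terminates after finitely many terms.

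Second, I would insert $\tilde{W}_{\beta,N}^{(G)}(x) = 2N\sum_{l \geq 0}(2N\sqrt{\kappa})^{-l}\,W_{\beta}^{(G),l}(x)$ into the scaled equation and collect the coefficient of each $(2N\sqrt{\kappa})^{-l}$. At leading order ($l=0$) only the zeroth- and first-order derivative pieces of the scaled operator survive, together with the leading term of the inhomogeneity, yielding \eqref{eq:rr34}. For $l = 1$ and $l = 2$, the remaining subleading inhomogeneous pieces (proportional to $h$) contribute, along with the higher-derivative terms of the scaled operator acting on $W_{\beta}^{(G),j}$ with $j < l$, giving \eqref{eq:rr35} and \eqref{eq:rr36}. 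From $l = 3$ onwards the inhomogeneity is exhausted, and matching produces the uniform recursion \eqref{eq:rr37}, in which the highest derivative of $W_\beta^{(G),l-6}$ is only genuinely present for $l \geq 6$ but is consistent with the convention $W_\beta^{(G),k} \equiv 0$ for $k < 0$.

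The main obstacle is bookkeeping rather than conceptual: \eqref{eq:de37} has seven derivative terms whose polynomial prefactors involve mixed powers of $x$, $(\kappa - 1)$, and $N_\beta$, and the chain-rule substitution together with the topological expansion produces a large number of contributions that must be reorganised into the compact $y_{(G)}^2, h$-form displayed in \eqref{eq:rr34}--\eqref{eq:rr37}. This reorganisation is most efficiently executed in computer algebra, and simultaneously provides a cross-check against the resolvent expansion coefficients generated via topological recursion \cite{FRW17}.
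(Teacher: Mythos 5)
Your plan is correct and coincides with the paper's own (sketched) derivation: rescale the resolvent equation \eqref{eq:de39} by $x\mapsto\sqrt{N\kappa}\,x$ to get an equation for $\tilde{W}_{\beta,N}^{(G)}$, insert the expansion \eqref{eq:rr28}, and equate powers of $1/N$, with the computer-algebra bookkeeping and the cross-check against \cite{WF14,FRW17} also matching what the paper does. The only quibble is that the order-$N^0$ piece of the expanded inhomogeneity (which produces the $-43/(3y_{(G)}^4)$ term in \eqref{eq:rr36}) is not proportional to $h$, but this does not affect the validity of the argument.
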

This proposition has been checked against \cite{WF14} up to $l=6$, and thus also serves as a check for differential equation \eqref{eq:de39} up to order six in $1/N$. Interpreting $W_{\beta}^{(G),l}(x)$ as a generating function for $M_{k,l}^{(G)}$, we confirm Proposition \ref{P3.5} up to $l=6$ as well. Similar checks for consistency have been carried out for Propositions \ref{P3.12} to \ref{P3.15} below.

Moving on to the Laguerre ensembles, we have the following three propositions.
\begin{proposition} \label{P3.12}
The expansion coefficients of the LUE scaled density's resolvent $\tilde{W}_{2,N}^{(L)}$ \eqref{eq:rr29} satisfy the differential equation
\begin{equation} \label{eq:rr38}
\left[4x-(\alpha_1-x)^2\right]x\frac{\mathrm{d}}{\mathrm{d}x}W_2^{(L),0}(x)+\left[(\alpha_1+2)x-\alpha_1^2\right]W_2^{(L),0}(x)=x+\alpha_1,
\end{equation}
and for $l\geq2$, the general differential equation
\begin{multline} \label{eq:rr39}
\left[(\alpha_1-x)^2-4x\right]x\frac{\mathrm{d}}{\mathrm{d}x}W_2^{(L),l}(x)-\left[(\alpha_1+2)x-\alpha_1^2\right]W_2^{(L),l}(x)
\\=\left[x^3\frac{\mathrm{d}^3}{\mathrm{d}x^3}+4x^2\frac{\mathrm{d}^2}{\mathrm{d}x^2}+2x\frac{\mathrm{d}}{\mathrm{d}x}\right]W_2^{(L),l-2}(x).
\end{multline}
\end{proposition}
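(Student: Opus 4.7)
The plan is to push the inhomogeneous LUE resolvent equation $\mathcal{D}_{2,N}^{(L)}\cdot\tfrac{1}{N}W_{2,N}^{(L)}(y)=y+a$ from Proposition~\ref{P2.9} through the global scaling \eqref{eq:rr27} and then match powers of $N$ in the topological expansion \eqref{eq:rr29}. Writing $y=Nx$ and $a=\alpha_1N$ (with $\kappa=1$), the chain rule $\tfrac{\mathrm{d}}{\mathrm{d}y}=N^{-1}\tfrac{\mathrm{d}}{\mathrm{d}x}$ leaves the scale-invariant derivative terms $y^k\tfrac{\mathrm{d}^k}{\mathrm{d}y^k}$ unchanged as $x^k\tfrac{\mathrm{d}^k}{\mathrm{d}x^k}$, while $y^2-2(a+2N)y+a^2-2=N^2[(x-\alpha_1)^2-4x]-2$ and $(a+2N)y-a^2=N^2[(\alpha_1+2)x-\alpha_1^2]$. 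The operator \eqref{eq:de27} therefore splits cleanly as $\mathcal{D}_{2,N}^{(L)}=N^2\mathcal{L}_0+\mathcal{L}_1$ with
\begin{equation*}
\mathcal{L}_0:=-\bigl[(x-\alpha_1)^2-4x\bigr]x\tfrac{\mathrm{d}}{\mathrm{d}x}+[(\alpha_1+2)x-\alpha_1^2],\qquad \mathcal{L}_1:=x^3\tfrac{\mathrm{d}^3}{\mathrm{d}x^3}+4x^2\tfrac{\mathrm{d}^2}{\mathrm{d}x^2}+2x\tfrac{\mathrm{d}}{\mathrm{d}x}.
\end{equation*}
Since $\tfrac{1}{N}W_{2,N}^{(L)}(Nx)=N^{-2}\tilde{W}_{2,N}^{(L)}(x)$ and the source becomes $N(x+\alpha_1)$, clearing denominators produces
\begin{equation*}
(N^2\mathcal{L}_0+\mathcal{L}_1)\tilde{W}_{2,N}^{(L)}(x)=N^3(x+\alpha_1).
\end{equation*}

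I would then substitute the expansion $\tilde{W}_{2,N}^{(L)}(x)=\sum_{l\geq0}W_2^{(L),l}(x)\,N^{1-l}$ from \eqref{eq:rr29} and equate coefficients of each power of $N$. At order $N^3$ only $\mathcal{L}_0W_2^{(L),0}$ and the source contribute, giving \eqref{eq:rr38} directly once one notes that $4x-(\alpha_1-x)^2=-[(x-\alpha_1)^2-4x]$. For $l\geq1$ the source is absent and the matching reads $\mathcal{L}_0W_2^{(L),l}+\mathcal{L}_1W_2^{(L),l-2}=0$, with the convention $W_2^{(L),m}\equiv0$ for $m<0$; for $l\geq2$ this is precisely \eqref{eq:rr39}. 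The $l=1$ instance collapses to $\mathcal{L}_0W_2^{(L),1}=0$, a first-order homogeneous equation whose only solution with the $\mathrm{O}(1/x^2)$ large-$x$ decay dictated by \eqref{eq:rr1} together with \eqref{eq:rr29} is identically zero; induction then yields $W_2^{(L),2k+1}\equiv0$ for every $k\geq0$, consistent with the absence of odd $1/N$ corrections in unitary-symmetry ensembles.

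The principal pitfall is mechanical rather than conceptual: the constant $-2$ inside the bracket $[y^2-2(a+2N)y+a^2-2]$ of \eqref{eq:de27} is subleading under $y=Nx$, so it must be peeled off the $N^2\mathcal{L}_0$ piece and deposited into $\mathcal{L}_1$ as the $+2x\tfrac{\mathrm{d}}{\mathrm{d}x}$ term; overlooking this would alter the first-order coefficient of the recurrence \eqref{eq:rr39}. Once the split $\mathcal{D}_{2,N}^{(L)}=N^2\mathcal{L}_0+\mathcal{L}_1$ is correctly isolated, the rest is a formal power-series bookkeeping exercise.
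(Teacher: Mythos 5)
Your proposal is correct and follows exactly the route the paper indicates for Propositions \ref{P3.10}--\ref{P3.15}: scale the inhomogeneous resolvent equation of Proposition \ref{P2.9} via \eqref{eq:rr27} with $a=\alpha_1 N$, substitute the expansion \eqref{eq:rr29}, and equate powers of $N$; your split $\mathcal{D}_{2,N}^{(L)}=N^2\mathcal{L}_0+\mathcal{L}_1$ (including the correct handling of the subleading $-2$) reproduces \eqref{eq:rr38} and \eqref{eq:rr39} with the right signs. Your treatment of the $l=1$ case, forcing $W_2^{(L),l}\equiv 0$ for odd $l$ from the decay condition, also matches the remark the paper makes immediately after the proposition.
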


In the $a=\alpha_1N$ setting considered here, $W_2^{(L),k}=0$ when $k$ is odd. Thus, differential-difference equation \eqref{eq:rr39} holds vacuously for odd $l$, and should otherwise be interpreted as a recursion over even $l$. This is in keeping with the discussion following Proposition \ref{P3.6}.

\begin{proposition} \label{P3.13}
Let $y_{(L),1}=\sqrt{(2\alpha_1-x)^2-4x}$. The expansion coefficients of the LOE scaled density's resolvent $\tilde{W}_{1,N}^{(L)}(x)$ satisfy the differential equations
\begin{equation} \label{eq:rr40}
-xy_{(L),1}^2\frac{\mathrm{d}}{\mathrm{d}x}W_1^{(L),0}(x)+2\left[(\alpha_1+1)x-2\alpha_1^2\right]W_1^{(L),0}(x)=x+2\alpha_1,
\end{equation}

\begin{multline} \label{eq:rr41}
xy_{(L),1}^4\frac{\mathrm{d}}{\mathrm{d}x}W_1^{(L),1}(x)-2y_{(L),1}^2\left[(\alpha_1+1)x-2\alpha_1^2\right]W_1^{(L),1}(x)
\\=8h\alpha_1xy_{(L),1}^2\frac{\mathrm{d}}{\mathrm{d}x}W_1^{(L),0}(x)+4h\alpha_1\left[x^2-6(\alpha_1+1)x+8\alpha_1^2\right]W_1^{(L),0}(x)
\\+2h\left[x(x-1)+4\alpha_1x+8\alpha_1^2\right],
\end{multline}

\begin{multline} \label{eq:rr42}
xy_{(L),1}^4\frac{\mathrm{d}}{\mathrm{d}x}W_1^{(L),2}(x)-2y_{(L),1}^2\left[(\alpha_1+1)x-2\alpha_1^2\right]W_1^{(L),2}(x)
\\=8h\alpha_1xy_{(L),1}^2\frac{\mathrm{d}}{\mathrm{d}x}W_1^{(L),1}(x)+4h\alpha_1\left[x^2-6(\alpha_1+1)x+8\alpha_1^2\right]W_1^{(L),1}(x)
\\+\tfrac{5}{2}x^3y_{(L),1}^2\frac{\mathrm{d}^3}{\mathrm{d}x^3}W_1^{(L),0}(x)+x^2\left[8x^2-38(\alpha_1+1)x+44\alpha_1^2\right]\frac{\mathrm{d}^2}{\mathrm{d}x^2}W_1^{(L),0}(x)
\\+2x\left[x^2-6(\alpha_1+1)x+10\alpha_1^2\right]\frac{\mathrm{d}}{\mathrm{d}x}W_1^{(L),0}(x)
\\+\left[4(\alpha_1+1)x-8\alpha_1^2\right]W_1^{(L),0}(x)-2(2\alpha_1+x),
\end{multline}
and for $l\geq3$, the general differential equation
\begin{multline} \label{eq:rr43}
xy_{(L),1}^4\frac{\mathrm{d}}{\mathrm{d}x}W_1^{(L),l}(x)-2y_{(L),1}^2\left[(\alpha_1+1)x-2\alpha_1^2\right]W_1^{(L),l}(x)
\\=8h\alpha_1xy_{(L),1}^2\frac{\mathrm{d}}{\mathrm{d}x}W_1^{(L),l-1}(x)+4h\alpha_1\left[x^2-6(\alpha_1+1)x+8\alpha_1^2\right]W_1^{(L),l-1}(x)
\\+\tfrac{5}{2}x^3y_{(L),1}^2\frac{\mathrm{d}^3}{\mathrm{d}x^3}W_1^{(L),l-2}(x)+x^2\left[8x^2-38(\alpha_1+1)x+44\alpha_1^2\right]\frac{\mathrm{d}^2}{\mathrm{d}x^2}W_1^{(L),l-2}(x)
\\+2x\left[x^2-6(\alpha_1+1)x+10\alpha_1^2\right]\frac{\mathrm{d}}{\mathrm{d}x}W_1^{(L),l-2}(x)+\left[4(\alpha_1+1)x-8\alpha_1^2\right]W_1^{(L),l-2}(x)
\\-2h\alpha_1x\left[5x^2\frac{\mathrm{d}^3}{\mathrm{d}x^3}+22x\frac{\mathrm{d}^2}{\mathrm{d}x^2}+14\frac{\mathrm{d}}{\mathrm{d}x}\right]W_1^{(L),l-3}(x)
\\-\left[x^5\frac{\mathrm{d}^5}{\mathrm{d}x^5}+10x^4\frac{\mathrm{d}^4}{\mathrm{d}x^4}+22x^3\frac{\mathrm{d}^3}{\mathrm{d}x^3}+4x^2\frac{\mathrm{d}^2}{\mathrm{d}x^2}-4x\frac{\mathrm{d}}{\mathrm{d}x}\right]W_1^{(L),l-4}(x),
\end{multline}
where we set $W_1^{(L),-1}:=0$.
\end{proposition}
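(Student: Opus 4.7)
My plan is to derive \eqref{eq:rr40}--\eqref{eq:rr43} by combining the fifth-order inhomogeneous differential equation $\mathcal{D}_{1,N}^{(L)}\,\tfrac{1}{N}W_{1,N}^{(L)}(x) = \text{RHS}$ supplied by Proposition \ref{P2.9} (with $\beta = 1$) with the scaling \eqref{eq:rr27} and the topological expansion \eqref{eq:rr29}. First I would specialise \eqref{eq:de28}--\eqref{eq:de30} to $\beta = 1$, so that $\kappa = 1/2$, $\kappa - 1 = -1/2$, $h = -1/\sqrt{2}$, $a_{\beta} = -2a$ and $N_{\beta} = -N/2$. The ansatz $a = \alpha_1 N$ then turns the combinations $a_{\beta} + 4N_{\beta} = -2(\alpha_1+1)N$ and $\tilde{a} = 4\alpha_1 N(\alpha_1 N + 1)$ appearing in $\mathcal{D}_{1,N}^{(L)}$ into explicit polynomials in $N$ of known degree.

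Next, the change of variable $x \mapsto N\kappa x = Nx/2$ sends $W_{1,N}^{(L)}(x)/N$ to $\tilde{W}_{1,N}^{(L)}(x)/(N^2\kappa)$ and each $d/dx$ to $(N\kappa)^{-1}d/dx$, producing an inhomogeneous linear equation for $\tilde{W}_{1,N}^{(L)}(x)$ whose coefficients are finite polynomials in $N$ over $\mathbb{Q}[x, \alpha_1, h]$. Substituting $\tilde{W}_{1,N}^{(L)}(x) = N\sum_{l \geq 0} W_1^{(L),l}(x)/(N\sqrt{\kappa})^l$ and equating coefficients of like powers of $N$ then yields a first-order differential-difference recursion in $l$. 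At leading order only the pieces of \eqref{eq:de28} quadratic in $a_{\beta} + 4N_{\beta}$ and in $x/(\kappa-1)$ survive, and after cancelling a common factor of $y_{(L),1}^2$ they assemble into \eqref{eq:rr40}; subleading orders bring in the $h$-linear inhomogeneities visible in \eqref{eq:rr41}--\eqref{eq:rr43}, while the $d^5/dx^5$ term of \eqref{eq:de28} is what supplies the $W_1^{(L),l-4}$ contribution in the general $l \geq 4$ equation.

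The principal technical obstacle is algebraic bookkeeping: the nine terms of $\mathcal{D}_{1,N}^{(L)}$ have polynomial coefficients of mixed degree in $(x, \alpha_1, N)$, and after scaling, expansion, and collection by powers of $N$ one must verify that the surviving coefficients on each level reassemble into the precise combinations of $y_{(L),1}^2 = (2\alpha_1 - x)^2 - 4x$ and $y_{(L),1}^4$ displayed in \eqref{eq:rr41}--\eqref{eq:rr43}. This step is most efficiently executed in a computer algebra system. Two independent cross-checks are available: the $l = 0$ equation \eqref{eq:rr40} should reduce to the LUE analogue \eqref{eq:rr38} under the scaling-induced identification of their respective $\alpha_1$'s, consistent with $W_{\beta}^{(L),0}(x)$ being a (\textit{$\beta$-independent}) generating function for the moments of the Marchenko--Pastur law; and the coefficients $W_1^{(L),l}(x)$ obtained by recursively solving \eqref{eq:rr40}--\eqref{eq:rr43} should match those computed from the topological recursion of \cite{FRW17}, simultaneously confirming Proposition \ref{P3.13} and the underlying Proposition \ref{P2.9}. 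A useful secondary route, providing a further check on the signs of the $h$-dependent terms, is to establish the analogous $\beta = 4$ system by the same scale-and-expand procedure and then map it to the $\beta = 1$ system via the duality $W_{4,N}^{(L)}(x;a) = W_{1,-N/2}^{(L)}(-x/2; -a/2)$ noted in \S\ref{s2.2}.
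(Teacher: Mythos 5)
Your proposal follows the paper's own route exactly: Proposition \ref{P3.13} is obtained by specialising Proposition \ref{P2.9} to $\beta=1$ (so $\kappa=1/2$, $h=-1/\sqrt{2}$, $a_\beta=-2a$, $N_\beta=-N/2$) with $a=\alpha_1N$, applying the scaling \eqref{eq:rr27}, substituting the expansion \eqref{eq:rr29}, and equating powers of $N$, and your power counting (leading order giving \eqref{eq:rr40} after cancelling $y_{(L),1}^2$, the $\mathrm{d}^5/\mathrm{d}x^5$ term feeding $W_1^{(L),l-4}$, the consistency checks against \eqref{eq:rr38} and against \cite{FRW17}) matches the paper's stated method and Remark \ref{R3.16}. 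The only blemish is in your optional secondary check, where the Laguerre duality should read $W_{4,N}^{(L)}(x;a)=W_{1,-2N}^{(L)}(-x/2;-a/2)$ rather than $W_{1,-N/2}^{(L)}(-x/2;-a/2)$; this does not affect the main derivation.
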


\begin{proposition} \label{P3.14}
The expansion coefficients of the LSE scaled density's resolvent $\tilde{W}_{4,N}^{(L)}(x)$ satisfy the differential equations presented in Proposition \ref{P3.13} upon replacing $\alpha_1$ with $\alpha_1/4$.
\end{proposition}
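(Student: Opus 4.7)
The plan is to repeat the derivation leading to Proposition \ref{P3.13} with $\beta=1$ replaced by $\beta=4$ throughout. Explicitly, take the inhomogeneous resolvent equation $\mathcal{D}_{4,N}^{(L)}\frac{1}{N}W_{4,N}^{(L)}(x)=\mathrm{RHS}$ of Proposition \ref{P2.9} specialised to $\beta=4$ (so that $\kappa=2$, $a_{\beta}=a$, $N_{\beta}=N$), apply the global rescaling $x\mapsto N\kappa x=2Nx$ so that each $\mathrm{d}/\mathrm{d}x$ acquires a factor $1/(2N)$, and use \eqref{eq:rr27} to replace $W_{4,N}^{(L)}(2Nx)$ by $\tilde{W}_{4,N}^{(L)}(x)/(2N)$. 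Setting $a=\alpha_1 N$ and inserting the expansion \eqref{eq:rr29} for $\tilde{W}_{4,N}^{(L)}$, one equates coefficients of equal powers of $N$; the coefficient of $N^{1-l}$ yields a first-order differential-difference equation for $W_{4}^{(L),l}(x)$.

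To see why the system so obtained coincides with that of Proposition \ref{P3.13} after the substitution $\alpha_1\mapsto\alpha_1/4$, note that $\mathcal{D}_{\beta,N}^{(L)}$ from Proposition \ref{P2.9} depends on $(a,N)$ only through $a_{\beta}+4N_{\beta}$ and $\tilde{a}=a_{\beta}(a_{\beta}-2)$. Under the rescaling $x\mapsto N\kappa x$ and the ansatz $a=\alpha_1 N$, the combinations that survive in each coefficient differential-difference equation are functions of the dimensionless ratio $a/(N\kappa)=\alpha_1/\kappa$; this equals $2\alpha_1$ for the LOE and $\alpha_1/2$ for the LSE. Equating these two effective parameters forces $\alpha_1^{\text{LOE}}=\alpha_1^{\text{LSE}}/4$. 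A sanity check is that the zero set of $y_{(L),1}^2=(2\alpha_1-x)^2-4x$ appearing in Proposition \ref{P3.13} reduces, under $\alpha_1\mapsto\alpha_1/4$, to $(\alpha_1/2-x)^2-4x=0$, which is precisely the Marchenko--Pastur edge-of-support condition for the LSE with $a/(N\kappa)=\alpha_1/2$.

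The main technical burden is algebraic: since $\mathcal{D}_{4,N}^{(L)}$ is fifth-order and the right-hand side of \eqref{eq:de30} in the $\beta=4$ case has three $N$-dependent blocks, propagating the expansion and matching orders is cumbersome, and the $\sqrt{\kappa}$ factors in \eqref{eq:rr29} must be tracked with some care. Nevertheless, the procedure is entirely mechanical --- every step runs exactly parallel to the (unstated) derivation of Proposition \ref{P3.13} --- and no new ideas are required, so the cleanest strategy is to carry out the coefficient matching symbolically and verify the claimed substitution $\alpha_1\mapsto\alpha_1/4$ a posteriori using computer algebra.
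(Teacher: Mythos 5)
Your proposal is correct and follows essentially the route the paper intends: Proposition \ref{P3.14} is stated without explicit proof, and the implied derivation is exactly your rescaling $x\mapsto N\kappa x$ of the $\beta=4$ case of Proposition \ref{P2.9} followed by insertion of \eqref{eq:rr29} and order-by-order matching, with the $\alpha_1\mapsto\alpha_1/4$ rule explained by the fact that $\mathcal{D}_{\beta,N}^{(L)}$ and the right-hand side of \eqref{eq:de30} depend on the data only through $N\sqrt{\kappa}$, $h$, and the effective parameter $a/(N\kappa)=\alpha_1/\kappa$ (equal to $2\alpha_1$ for the LOE and $\alpha_1/2$ for the LSE), consistent with the duality $W_{\beta,N}^{(L)}(x;a)=W_{4/\beta,-\kappa N}^{(L)}(-x/\kappa;-a/\kappa)$ quoted after Proposition \ref{P2.9}. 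Your sanity check on the zero set of $y_{(L),1}^2$ is also correct.
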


Finally, the analogous proposition for the JUE is as follows:

\begin{proposition} \label{P3.15}
The expansion coefficients of the JUE scaled density's resolvent $\tilde{W}_{2,N}^{(J)}(x)$ \eqref{eq:rr30} satisfy the differential equation
\begin{multline} \label{eq:rr44}
\left[(\alpha_1+\alpha_2+2)^2x^2-2(\alpha_1+2)(\alpha_1+\alpha_2)x+\alpha_1^2\right]x(x-1)\frac{\mathrm{d}}{\mathrm{d}x}W_2^{(J),0}(x)
\\+\Big[\alpha_1^2(x-1)^3+\alpha_1(\alpha_2+2)x(1-x)(1-2x)+(\alpha_2+2)^2x^3
\\-2(\alpha_2+1)(3x^2+x)\Big]W_2^{(J),0}(x)=(\alpha_1+\alpha_2+1)(\alpha_1(1-x)+\alpha_2x),
\end{multline}
and for $l\geq1$,
\begin{multline} \label{eq:rr45}
\left[(\alpha_1+\alpha_2+2)^2x^2-2(\alpha_1+2)(\alpha_1+\alpha_2)x+\alpha_1^2\right]x(x-1)\frac{\mathrm{d}}{\mathrm{d}x}W_2^{(J),l}(x)
\\+\Big[\alpha_1^2(x-1)^3+\alpha_1(\alpha_2+2)x(1-x)(1-2x)+(\alpha_2+2)^2x^3
\\-2(\alpha_2+1)(3x^2+x)\Big]W_2^{(J),l}(x)=x^3(x-1)^3\frac{\mathrm{d}^3}{\mathrm{d}x^3}W_2^{(J),l-2}(x)
\\-4x^2(x-1)^2(1-2x)\frac{\mathrm{d}^2}{\mathrm{d}x^2}W_2^{(J),l-2}(x)+2x(x-1)(7x(x-1)+1)\frac{\mathrm{d}}{\mathrm{d}x}W_2^{(J),l-2}(x)
\\-2x(x-1)(1-2x)W_2^{(J),l-2}(x),
\end{multline}
where we set $W_2^{(J),-1}(x)=0$.
\end{proposition}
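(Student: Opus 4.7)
My plan is to derive both (3.44) and (3.45) as direct consequences of the inhomogeneous third-order equation (2.20) of Theorem \ref{T2.6}, by inserting the parameter scalings $a=\alpha_1 N$, $b=\alpha_2 N$ together with the $1/N$ expansion (3.30). First, I would observe that once $a$ and $b$ are set to $\alpha_i N$, the operator $\mathcal{D}_{2,N}^{(J)}$ from (2.18) becomes polynomial in $N$: the only $N$-dependent pieces are $(a+b+2N)^2=(\alpha_1+\alpha_2+2)^2 N^2$, $a^2=\alpha_1^2 N^2$ and $b^2=\alpha_2^2 N^2$. Hence $\mathcal{D}_{2,N}^{(J)} = N^2\,\mathcal{L}_2 + \mathcal{L}_0$, with $N$-independent operators $\mathcal{L}_2$ (collecting the $N^2$ coefficients) and $\mathcal{L}_0$ (everything else), while the right-hand side of (2.20) becomes $N^2(\alpha_1+\alpha_2+1)(\alpha_1(1-x)+\alpha_2 x)$.

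Next, since $\kappa=1$, the expansion (3.30) reads $W_{2,N}^{(J)}(x)/N = \sum_{l\geq 0} W_2^{(J),l}(x)/N^l$. Substituting into (2.20) and equating coefficients of $N^{2-l}$ on both sides, the leading ($l=0$) order produces $\mathcal{L}_2\, W_2^{(J),0}(x) = (\alpha_1+\alpha_2+1)(\alpha_1(1-x)+\alpha_2 x)$, which upon expansion of $\mathcal{L}_2$ is precisely (3.44); for $l=1$ it gives $\mathcal{L}_2\, W_2^{(J),1}(x)=0$; and for $l\geq 2$ it gives $\mathcal{L}_2\, W_2^{(J),l}(x) = -\mathcal{L}_0\, W_2^{(J),l-2}(x)$. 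With the stated convention $W_2^{(J),-1}(x):=0$, the latter two cases merge into the single uniform statement (3.45). This is the same pattern of derivation used in the proofs sketched for Propositions \ref{P3.10}--\ref{P3.14}: no multi-point correlators are needed, just the resolvent ODE of Section \ref{s2} and the formal power series structure (3.30).

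The only real work is then the bookkeeping step of verifying that the operator displayed on the LHS of (3.45) matches $\mathcal{L}_2$ as extracted from (2.18), and that the polynomial differential operator on the RHS of (3.45) matches $-\mathcal{L}_0$. This is largely a matter of expanding $(1-x)^n = (-1)^n(x-1)^n$ to reconcile the sign conventions in (2.18) versus the $(x-1)$-form used in (3.45), then collecting like powers of $x$ and of $\alpha_1,\alpha_2$. Since (2.18) is of only third order with low-degree polynomial coefficients, this verification is elementary but tedious algebra, and I expect the main obstacle to be clerical care in tracking signs and groupings rather than any genuine analytical difficulty. As an internal consistency check one can compare the $W_2^{(J),l}(x)$ that (3.45) generates, starting from (3.44), against those produced for small $l$ by the topological recursion of \cite{FRW17}, as the text notes was carried out.
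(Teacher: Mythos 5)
Your derivation is precisely the route the paper takes: as stated in the preamble to \S\ref{s3.3}, one inserts $a=\alpha_1N$, $b=\alpha_2N$ and the expansion \eqref{eq:rr30} (with $\kappa=1$) into \eqref{eq:de20}, splits $\mathcal{D}_{2,N}^{(J)}=N^2\mathcal{L}_2+\mathcal{L}_0$, and equates powers of $N$, and your handling of the orders $N^2$, $N^1$ and $N^{2-l}$ (with the convention $W_2^{(J),-1}=0$ absorbing the $l=1$ case into \eqref{eq:rr45}) is exactly right. One warning about the bookkeeping you defer: the right-hand operator of \eqref{eq:rr45} does come out as $-\mathcal{L}_0$ exactly, but the left-hand operator as printed differs from $\mathcal{L}_2$ by $-4x\bigl(x(1-x)\tfrac{\mathrm{d}}{\mathrm{d}x}+1\bigr)$ --- for instance at $\alpha_1=\alpha_2=0$ the multiplicative coefficient should be $4x^3-6x^2+2x$ rather than $4x^3-6x^2-2x$, as one checks against $W_2^{(J),0}(x)=(x^2-x)^{-1/2}$ --- so the displayed coefficients in \eqref{eq:rr44}--\eqref{eq:rr45} appear to carry a small typographical slip rather than signalling any defect in your method.
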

Similar to Proposition \ref{P3.12}, in the $a=\alpha_1N,\,b=\alpha_2N$ setting, $W_2^{(J),k}$ vanishes for odd $k$, and the differential equations of Proposition \ref{P3.15} constitute a recursion over even $l$. As in \S\ref{s3.2}, we do not present the analogous differential equations for the JOE and JSE due to their cumbersome structure.

\begin{remark} \label{R3.16}
Comparing \eqref{eq:rr31}, \eqref{eq:rr34}, and the analogous result in \cite{HT12} to each other shows that the differential equation for $W_{\beta}^{(G),0}(x)$ is independent of $\beta$. This can also be observed for $W_{\beta}^{(L),0}(x)$ when comparing \eqref{eq:rr38} to \eqref{eq:rr40} while rescaling $\alpha_1$ by a factor of $1/\kappa$. Furthermore, one can readily check from Theorem \ref{T2.8} that differential equation \eqref{eq:rr44} is valid when replacing $W_2^{(J),0}(x)$ with $W_1^{(J),0}(x)$ or $W_4^{(J),0}(x)$, granted that $\alpha_1$ and $\alpha_2$ are rescaled appropriately. When we take into account the requirement that $W_{\beta}^{0}(x)\underset{x\rightarrow\infty}{\sim}1/x$, these differential equations and the spectral curves seen in topological recursion both yield the same unique $\beta$-independent solutions. This is in keeping with the fact that the large $N$ limiting forms of $\tilde{\rho}_{(1),\beta,N}(x)$ (the inverse Stieltjes transforms of $W_{\beta}^0(x)$) in these cases are independent of $\beta$, as discussed earlier in this subsection.
\end{remark}

\setcounter{equation}{0}
\section{Differential Equations for the Soft and Hard Edge Scaled Densities}\label{s4}
In this section, we present the second of our promised applications of the differential equations in Section \ref{s2}. Namely, we derive differential equations satisfied by the eigenvalue densities when they have been centred on either the largest or smallest eigenvalue and they have been scaled so that the mean spacing between this eigenvalue and its neighbour is order unity. Upon recentring and scaling in this manner, the eigenvalue densities considered fall into one of two universal classes: The limiting smoothed density exhibits either a square root profile, which we call a soft edge, or it exhibits an inverse square root singularity, which we call a hard edge (see e.g. \cite{Fo12,WF12}). In particular, the Gaussian ensemble eigenvalue density exhibits only soft edges in accordance with the Wigner semi-circle law. Likewise, the Laguerre ensemble eigenvalue density also has a soft edge at the largest eigenvalue. Both edges of the Jacobi density and the left edge of the Laguerre density behave either as hard edges when the corresponding parameter is of order unity -- $a$ for the regime of the smallest eigenvalue and $b$ for the largest eigenvalue -- or as soft edges if said parameter is of order $N$. A consequence of universality is that for each $\beta$ value, all soft edges are statistically equivalent, and likewise for hard edges. Thus, we specify the differential equations satisfied by the eigenvalue densities with soft and hard edge scalings for $\beta=1,2$, and $4$, and with soft edge scaling for $\beta=2/3$ and $6$. The universal soft and hard edge limiting densities are to be denoted as $\rho_{(1),\beta,\infty}^{(soft)}(x)$ and $\rho_{(1),\beta,\infty}^{(hard)}(x)$, respectively.

\begin{theorem} \label{T4.1}
Define the soft edge limiting forms of the differential operators introduced in Section \ref{s2} as
\begin{align} \label{eq:scale1}
\mathcal{D}_{\beta,\infty}^{(soft)} =
\begin{cases}
\frac{\mathrm{d}^3}{\mathrm{d}x^3}-4x\frac{\mathrm{d}}{\mathrm{d}x}+2,&\beta=2,
\\ \frac{\mathrm{d}^5}{\mathrm{d}x^5}-10\kappa x\frac{\mathrm{d}^3}{\mathrm{d}x^3}+6\kappa\frac{\mathrm{d}^2}{\mathrm{d}x^2}+16\kappa^2x^2\frac{\mathrm{d}}{\mathrm{d}x}-8\kappa^2x,&\beta=1,4,
\\ 3\frac{\mathrm{d}^7}{\mathrm{d}x^7}-56\kappa x\frac{\mathrm{d}^5}{\mathrm{d}x^5}+28\kappa\frac{\mathrm{d}^4}{\mathrm{d}x^4}+\frac{784}{3}\kappa^2x^2\frac{\mathrm{d}^3}{\mathrm{d}x^3}
\\ \quad-208\kappa^2x\frac{\mathrm{d}^2}{\mathrm{d}x^2}-4\kappa^2(64\kappa x^3-17)\frac{\mathrm{d}}{\mathrm{d}x}+128\kappa^3x^2,&\beta=2/3,6,
\end{cases}
\end{align}
where we recall that $\kappa=\beta/2$. Then for $\beta\in\{2/3,1,2,4,6\}$, at leading order, the Gaussian, Laguerre, and Jacobi eigenvalue densities satisfy the following differential equation in the soft edge limit:
\begin{equation} \label{eq:scale2}
\mathcal{D}_{\beta,\infty}^{(soft)}\,\rho_{(1),\beta,\infty}^{(soft)}(x) = 0.
\end{equation}
\end{theorem}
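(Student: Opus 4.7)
The proof is a scaling analysis: for each of the three classical weights we would substitute a soft-edge change of variable into the corresponding differential operator from Section \ref{s2}, expand in inverse powers of $N$, and retain the leading contribution. Since universality of the soft edge tells us in advance that the three weights must yield the same limiting operator (up to rescaling of $X$), a convenient order to proceed is to start with the Gaussian case, use it to identify $\mathcal{D}_{\beta,\infty}^{(soft)}$, and then treat the Laguerre and Jacobi cases as consistency checks.

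For the Gaussian case I would set $x = 2\sqrt{g} + c_N X$ with $c_N = g^{1/2}(2N\sqrt{\kappa})^{-2/3}$, so that $y_{(G)}^2 = x^2-4g = 4\sqrt{g}\,c_N X + c_N^2 X^2$ vanishes at the edge. Since $d/dx = c_N^{-1}d/dX$, each derivative raises the order in $c_N^{-1}$ by one. Substituting into the $\beta = 2$ operator in \eqref{eq:de1} and dividing through by $c_N^{-3}\bigl(g/(N\sqrt{\kappa})\bigr)^2 = 4\sqrt{g}$, all subleading terms are suppressed by positive powers of $c_N$ and the surviving operator is precisely $\frac{d^3}{dX^3} - 4X\frac{d}{dX} + 2$ after replacing $X$ by $x$. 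The same substitution in the $\beta = 1,4$ operator of \eqref{eq:de1}, tracking the leading balance among the five derivative terms (observe that $h$ does not scale with $N$ so the terms with coefficient $h(g/(N\sqrt{\kappa}))$ are subleading relative to $y_{(G)}^4$, $y_{(G)}^2$, etc.), gives the fifth-order operator in \eqref{eq:scale1}. The analogous manipulation applied to Proposition \ref{P2.10} yields the seventh-order case $\beta \in \{2/3, 6\}$.

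For the Laguerre case, I would use the soft-edge expansion about the right end of the Marchenko--Pastur support. For $\alpha_1 = O(1)$ this is $x_+ = 4N\kappa + O(N^{1/3})$ and one sets $x = x_+ + c_N X$ with $c_N \propto N^{1/3}$; more generally for $a = \alpha_1 N$ one uses $x_+ = N(\sqrt{\alpha_1}+2\sqrt{\kappa})^2/\kappa$ with appropriately modified $c_N$. Substituting into the operators \eqref{eq:de27}--\eqref{eq:de28}, the dominant contributions come from balancing the leading derivative terms against the terms of highest combined order in $a_\beta + 4N_\beta$ and $x/(\kappa-1)$; all other terms decay as positive powers of $c_N/N$. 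The Jacobi case is handled the same way around whichever edge is soft: I would place the edge using the large-$N$ endpoints of the Wachter density, rescale by $c_N \propto N^{-2/3}$, and expand \eqref{eq:de18} and \eqref{eq:de22} in the reduced variable, noting that $\tilde c$ is $O(N)$ and $\tilde a, \tilde b$ are $O(N^2)$ (or $O(1)$ when the corresponding parameter is $O(1)$, but then the adjacent edge would be hard rather than soft).

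The main technical obstacle is the fifth-order Jacobi operator \eqref{eq:de22}: it has roughly fifteen polynomial coefficients in $\tilde a,\tilde b,\tilde c$ and one must verify that, after the substitution, every term except those entering $\mathcal{D}_{\beta,\infty}^{(soft)}$ is subdominant. The cleanest way is to exploit the universality argument: the Gaussian computation already fixes $\mathcal{D}_{\beta,\infty}^{(soft)}$, so it suffices to check that the leading coefficient of each $d^j/dX^j$ in the Laguerre and Jacobi operators scales uniformly and in the same ratio as in the Gaussian limit. This reduces the bookkeeping to identifying, for each $j$, the single monomial in $\tilde a, \tilde b, \tilde c, x$ that survives after the scaling.
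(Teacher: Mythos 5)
Your proposal follows essentially the same route as the paper: the paper's proof is precisely a soft-edge change of variables in the Gaussian equations \eqref{eq:de2} and \eqref{eq:de38}, followed by extraction of the leading order in $N$, with the Laguerre and Jacobi cases handled by replicating the computation on \eqref{eq:de19}, \eqref{eq:de24}, \eqref{eq:de29} and appealing to universality, exactly as you suggest. One quantitative slip is worth flagging: your scaling constant $c_N = g^{1/2}(2N\sqrt{\kappa})^{-2/3}$ is off by a factor of $(4\kappa)^{1/3}$ from the one that reproduces \eqref{eq:scale1} on the nose. The paper's substitution $x\mapsto\sqrt{\kappa}\bigl(\sqrt{2N}+x/(\sqrt{2}N^{1/6})\bigr)$ corresponds, in the normalisation of \eqref{eq:de1} where the edge sits at $2\sqrt{g}$, to $c_N=\sqrt{g}\,N^{-2/3}$; with your choice the $\beta=2$ computation yields $\frac{\mathrm{d}^3}{\mathrm{d}X^3}-X\frac{\mathrm{d}}{\mathrm{d}X}+\tfrac12$ rather than $\frac{\mathrm{d}^3}{\mathrm{d}X^3}-4X\frac{\mathrm{d}}{\mathrm{d}X}+2$, i.e.\ a dilated copy of the target operator, which would annihilate a rescaled version of $\rho_{(1),\beta,\infty}^{(soft)}$ rather than the density in the normalisation fixed by \eqref{eq:scale3}--\eqref{eq:scale4}. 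The balance-of-terms analysis itself (in particular that the $h\,g/(N\sqrt{\kappa})$ terms are subleading, and that the Jacobi bookkeeping with $\tilde{c}={\rm O}(N)$, $\tilde{a},\tilde{b}={\rm O}(N^2)$ can be short-circuited by universality) is correct.
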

\begin{proof}
Make the change of variables \cite{Fo93d} $x\mapsto\sqrt{\kappa}\left(\sqrt{2N}+\frac{x}{\sqrt{2}N^{1/6}}\right)$ in \eqref{eq:de2} and \eqref{eq:de38}. Then, multiply through by $N^{-1/2}$ for $\beta=2$, $N^{-5/6}$ for $\beta=1$ and $4$, or $N^{-7/6}$ for $\beta=2/3$ and $6$. Equating terms of order one then yields \eqref{eq:scale2} above, while all other terms vanish in the $N\rightarrow\infty$ limit.
\end{proof}
In the case $\beta=2$, the differential equation \eqref{eq:scale2} satisfied by the soft edge density has been isolated in the earlier works
of Brack et al.~\cite[Eq.~(C.2) with $\hbar^2/m=1$,
$\lambda_M - V(r) = - {1 \over 2} r$, $D=1$]{BKMR10} and of Dean et al. \cite[Eq.~(207) with $d=1$]{DDMS16}. For $\beta=1,2$, and $4$, the above proof can be replicated by instead considering the differential equations \eqref{eq:de19}, \eqref{eq:de24}, and \eqref{eq:de29} for the Jacobi and Laguerre eigenvalue densities due to universality. Explicitly, differential equation \eqref{eq:scale2} is satisfied by the leading order term of the following scaled densities in the large $N$ limit \cite{FT18,Jo01,Fo12,BF98}:
\begin{itemize}
\item In the regime of the largest eigenvalue,
\begin{equation}\label{4.2a}
\rho_{(1),\beta,N}^{(G)}\left(\sqrt{\kappa}\left(\sqrt{2N}+\delta_G+\frac{x}{\sqrt{2}N^{1/6}}\right)\right),
\end{equation}
where $\delta_G={\rm o}(N^{-1/6})$ is an arbitrary parameter (the regime of the smallest eigenvalue can be treated by exploiting the symmetry $x\mapsto-x$);
\item In the regime of the largest eigenvalue with $a={\rm O}(1)$,
\begin{equation}\label{4.2b}
\rho_{(1),\beta,N}^{(L)}\left(\kappa\left(4N+\delta_L^{(l,a={\rm O}(1))}+2(2N)^{1/3}x\right)\right),
\end{equation}
where $\delta_L^{(\,\cdot\,)}={\rm o}(N^{1/3})$ is henceforth an arbitrary parameter;
\item In the regime of the largest eigenvalue with $a=\alpha_1N$ and $\alpha_1={\rm O}(1)$,
\begin{equation}\label{4.2c}
\rho_{(1),\beta,N}^{(L)}\left(\kappa\left(q_+^2N+\delta_L^{(l,a={\rm O}(N))}+\frac{q_+}{q_+-1}(q_+N)^{1/3}x\right)\right),
\end{equation}
where we define $q_{\pm}:=\sqrt{1+\frac{\alpha_1}{\kappa}}\pm1$ so that $q_{\pm}^2$ are the endpoints of the support of the limiting smoothed density $\tilde{\rho}_{(1),\beta,N}^{(L)}$ as defined by \eqref{eq:rr24} (note that $q_+\rightarrow2$ as $\alpha_1\rightarrow0$, so we have consistency with the scaling given above for the $a={\rm O}(1)$ regime);
\item In the regime of the smallest eigenvalue with $a=\alpha_1N$ and $\alpha_1={\rm O}(1)$,
\begin{equation}\label{4.2d}
\rho_{(1),\beta,N}^{(L)}\left(\kappa\left(q_-^2N-\delta_L^{(s,a={\rm O}(N))}-q_-\left(\frac{q_-N}{q_-+1}\right)^{1/3}x\right)\right);
\end{equation}
\item In the regime of either the largest or smallest eigenvalue, with either $a=\alpha_1N$ and $\alpha_1,b={\rm O}(1)$, or $a=\alpha_1N$, $b=\alpha_2N$ and $\alpha_1,\alpha_2={\rm O}(1)$, $\rho_{(1),\beta,N}^{(J)}$ scaled according to \cite{HF12,Jo08}.
\end{itemize}

\begin{remark}
For even $\beta$, $\rho_{(1),\beta,\infty}^{(soft)}(x)$ has an explicit representation as a $\beta$-dimensional integral due to \cite{DF06}, while \cite{FFG06} provides alternate forms for $\beta=1,2$, and $4$. To date, there is no explicit functional form for $\rho_{(1),2/3,\infty}^{(soft)}(x)$. On the other hand, \cite{DV13} shows for all $\beta>0$ that for the first two leading orders as $x\rightarrow\infty$,
\begin{equation*}
\rho_{(1),\beta,\infty}^{(soft)}(x)\propto\frac{\exp\left(-4\kappa x^{3/2}/3\right)}{x^{3\kappa/2}},
\end{equation*}
which is an extension of the even-$\beta$ result of \cite{Fo12a},
\begin{equation} \label{eq:scale3}
\rho_{(1),\beta,\infty}^{(soft)}(x)\underset{x\rightarrow\infty}{\sim}\frac{1}{\pi}\frac{\Gamma(1+\kappa)}{(8\kappa)^{\kappa}}\frac{\exp\left(-4\kappa x^{3/2}/3\right)}{x^{3\kappa/2}}.
\end{equation}
This result is consistent with the differential equations of Theorem \ref{T4.1}. So too is the result \cite{DF06},
\begin{equation} \label{eq:scale4}
\rho_{(1),\beta,\infty}^{(soft)}(x)\underset{x\rightarrow-\infty}{\sim}\frac{\sqrt{|x|}}{\pi},\quad\beta\in2\mathbb{N}.
\end{equation}
Likewise, Theorem \ref{T4.4} below is consistent with the result \cite{Fo93c},
\begin{equation} \label{eq:scale5}
\rho_{(1),\beta,\infty}^{(hard)}(x)\underset{x\rightarrow\infty}{\sim}\frac{1}{2\pi\sqrt{x}},\quad\beta\in2\mathbb{N}.
\end{equation}
Note that the asymptotic forms \eqref{eq:scale3}, \eqref{eq:scale4}, and \eqref{eq:scale5} respectively capture the facts that moving past the soft edge results in exponential decay, moving from the soft edge into the bulk results in a square root profile, and moving from the hard edge into the bulk shows a square root singularity.
\end{remark}

\begin{remark}
It has previously been observed in \cite{FT18} that the differential equation characterisation of the soft edge scaled density can be extended to similarly characterise the optimal leading order correction term. The latter is obtained by a tuning of $\delta_G$ and $\delta_L^{(\,\cdot\,)}$ in (\ref{4.2a})--(\ref{4.2d}) so as to obtain the fastest possible decay in $N$ of the leading order correction, and thus the fastest possible convergence to the limit. On this latter point, and considering the Gaussian case for definiteness, we know from \cite{FT19} that for $\beta$ even (at least),
\begin{align*}
\mu_{\beta,N}(x)  :&= {\sqrt{\kappa} \over \sqrt{2} N^{7/6}} 
\rho_{(1),\beta,N}^{(G)}\left(\sqrt{\kappa}\left(\sqrt{2N}+\delta_G+\frac{x}{\sqrt{2}N^{1/6}}\right)\right)  \\ &=
\rho_{(1),\beta,\infty}^{(soft)}(x) + \Big (
\sqrt{2} N^{1/6} \delta_G - (1 - 1/\kappa)/(2 N^{1/3}) \Big )
{\mathrm{d} \over \mathrm{d}x} \rho_{(1),\beta,\infty}^{(soft)}(x) 
\\&\quad+ {\rm O}\Big ( {1 \over N^{2/3}} \Big ).
\end{align*}
Thus, choosing $\delta_G =  (1 - 1/\kappa)/(2 \sqrt{2N})$ gives the fastest
convergence to the limit,
$$
\mu_{\beta,N}(x) = \rho_{(1),\beta,\infty}^{(soft)}(x) + {1 \over N^{2/3}} \hat{\mu}_{\beta}(x) +
{\rm o}(N^{-2/3})
$$
for some $\hat{\mu}_\beta(x)$ which, for the values of $\beta$ permitting a differential equation
characterisation of $\rho_{(1),\beta,N}^{(G)}$ and $\rho_{(1),\beta,\infty}^{(soft)}(x)$, can itself be characterised as the solution
of a differential equation. The simplest case is $\beta = 2$, when
\begin{equation} \label{4.12a}
\hat{\mu}_2'''(x) - 4 x \hat{\mu}_2'(x) + 2 \hat{\mu}_2(x) = x^2 {\mathrm{d}\over \mathrm{d}x} \rho_{(1),2,\infty}^{(soft)}(x) - x \rho_{(1),2,\infty}^{(soft)}(x),
\end{equation}
which is an inhomogeneous generalisation of (\ref{eq:scale2}) for $\beta = 2$. For the particular Laguerre soft edge scaling \eqref{4.2c}, again with $\beta=2$, the analogue of \eqref{4.12a} is given in \cite[Eq.~(4.19)]{FT18}. For all other even $\beta$, \cite{FT19} shows that scalings \eqref{4.2b} and \eqref{4.2c} become optimal when we set
\begin{equation*}
\delta_L^{(l,a={\rm O}(1))}=2a/\kappa\quad\textrm{and}\quad\delta_L^{(l,a={\rm O}(N))}=\left(1-\frac{1}{\kappa}\right)\frac{\alpha_1/\kappa}{2\sqrt{\alpha_1/\kappa+1}}.
\end{equation*}
Applying these scalings to differential equation \eqref{eq:de29} shows that they remain optimal in the $\beta=1$ case when choosing $\delta_L^{(\,\cdot\,)}$ as above. Similarly, when $\beta=1,2$ or $4$, scaling \eqref{4.2d} becomes optimal when we take $\delta_L^{(s,a={\rm O}(N))}$ to be equal to the optimal choice of $\delta_L^{(l,a={\rm O}(N))}$ given above.
\end{remark}

At the hard edge, in contrast to the soft edge scaling, the dependence on the
exponent $a$ in the Laguerre or Jacobi weight remains.

\begin{theorem} \label{T4.4}
Define the hard edge limiting forms of the differential operators introduced in Section \ref{s2} as
\begin{align} \label{eq:scale7}
\mathcal{D}_{\beta,\infty}^{(hard)} =
\begin{cases}
x^3\frac{\mathrm{d}^3}{\mathrm{d}x^3}+4x^2\frac{\mathrm{d}^2}{\mathrm{d}x^2}+\left[x-a^2+2\right]x\frac{\mathrm{d}}{\mathrm{d}x}+\tfrac{1}{2}x-a^2,&\beta=2,
\\ 4x^5\frac{\mathrm{d}^5}{\mathrm{d}x^5}+40x^4\frac{\mathrm{d}^4}{\mathrm{d}x^4}+\left[10\kappa x-5\tilde{a}+88\right]x^3\frac{\mathrm{d}^3}{\mathrm{d}x^3}&
\\\quad+\left[38\kappa x-22\tilde{a}+16\right]x^2\frac{\mathrm{d}^2}{\mathrm{d}x^2}&
\\\quad+\left[\left(2\kappa x-\tilde{a}\right)^2+12\kappa x-14\tilde{a}-16\right]x\frac{\mathrm{d}}{\mathrm{d}x}&
\\\quad+(2\kappa x-\tilde{a})(\kappa x-\tilde{a})-4\kappa x,&\beta=1,4,
\end{cases}
\end{align}
where we retain the definition of $\tilde{a}$ given in Theorem \ref{T2.8}. Then for $\beta\in\{1,2,4\}$, the hard edge scaled densities satisfy the differential equations
\begin{equation} \label{eq:scale8}
\mathcal{D}_{\beta,\infty}^{(hard)}\,\rho_{(1),\beta,\infty}^{(hard)}(x) = 0.
\end{equation}
\end{theorem}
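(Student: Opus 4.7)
The plan is to mirror the approach of Theorem~\ref{T4.1}, starting from the Laguerre equations of Proposition~\ref{P2.9} and scaling about the hard edge at the origin via $x\mapsto\kappa x/(4N)$. Under this substitution each monomial $x^{k}(d/dx)^{j}$ in $\mathcal{D}_{\beta,N}^{(L)}$, when applied to a function of the new hard edge variable, picks up an overall factor $(\kappa/(4N))^{k-j}$. Consequently a term of the form $c(N)\,x^{k}(d/dx)^{j}$ contributes to the limiting operator if and only if the polynomial $c(N)$ has a component of order $N^{k-j}$, and the contribution is obtained by pairing that component with the rescaled monomial. The task therefore reduces to expanding every bracketed coefficient in \eqref{eq:de27}--\eqref{eq:de28}, sorting its monomials by the power of $x$ they carry, and retaining in each case the coefficient of $N^{k-j}$.

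For $\beta=2$ this is essentially a one-line exercise: the cross term $2(a+2N)x$ inside the bracket multiplying $-x\frac{d}{dx}$ in \eqref{eq:de27} is of order $N$ and hence survives the $N^{-1}$ rescaling to produce $x\cdot x\frac{d}{dx}$; the $x^{2}$ piece is too small and $a^{2}-2$ is retained as the constant part; a similar computation for $(a+2N)x-a^{2}$ yields $\tfrac{1}{2}x-a^{2}$, reproducing $\mathcal{D}_{2,\infty}^{(hard)}$. For $\beta=1,4$ the critical observation is that $(a_{\beta}+4N_{\beta})^{p}$ has leading part $(4(\kappa-1)N)^{p}$. This feeds the $4\kappa^{2}x^{3}\frac{d}{dx}$ top-order piece of $\mathcal{D}_{\beta,\infty}^{(hard)}$: inside the coefficient of $\frac{x^{3}}{(\kappa-1)^{2}}\frac{d}{dx}$ in \eqref{eq:de28}, the sub-term $4(a_{\beta}+4N_{\beta})^{2}/(\kappa-1)^{2}$ scales as $64N^{2}$, which combined with the $(\kappa/(4N))^{2}$ coming from $x^{3}\frac{d}{dx}$ gives exactly $4\kappa^{2}x^{3}\frac{d}{dx}$. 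Identical leading-order extractions performed on each of the remaining brackets in \eqref{eq:de28} assemble the full operator \eqref{eq:scale7}, including the polynomial combination $(2\kappa x-\tilde{a})(\kappa x-\tilde{a})-4\kappa x$ in the zeroth-order part.

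Since \eqref{eq:de29} is homogeneous, the multiplicative Jacobian between $\rho_{(1),\beta,N}^{(L)}(\kappa x/(4N))$ and its properly normalised scaled counterpart plays no role, so the hard edge limit of the Laguerre density is annihilated by $\mathcal{D}_{\beta,\infty}^{(hard)}$. Universality of the hard edge scaling for the classical $\beta=1,2,4$ ensembles then promotes this conclusion to the Jacobi case; alternatively one can start from \eqref{eq:de18} or \eqref{eq:de22}, apply the Jacobi hard edge scaling $x\mapsto x/(4N^{2})$ (using $1-x\to 1$ near the origin), and carry out the same accounting, which produces the same limiting operator. The main obstacle is purely algebraic: for $\beta=1,4$ the operator \eqref{eq:de28} comprises roughly a dozen polynomial coefficients, each of which must be expanded and sorted by joint powers of $x$ and $N$ before one can read off which pieces survive. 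This bookkeeping is routine but error-prone, and is most safely executed with computer algebra.
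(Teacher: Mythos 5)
Your proposal is correct and follows essentially the same route as the paper: substitute $x\mapsto\kappa x/(4N)$ into the Laguerre equation \eqref{eq:de29}, observe that each monomial $x^{k}(\mathrm{d}/\mathrm{d}x)^{j}$ acquires a factor $(\kappa/(4N))^{k-j}$, and retain only the order-one terms, with the homogeneity of \eqref{eq:de29} making the normalisation of the scaled density irrelevant. The sample term-by-term checks you give (e.g.\ the $4\kappa^{2}x^{3}\frac{\mathrm{d}}{\mathrm{d}x}$ contribution from $4(a_{\beta}+4N_{\beta})^{2}/(\kappa-1)^{2}$) are accurate, and your closing remark on recovering the same limit from the Jacobi equations via $x\mapsto x/(4N^{2})$ matches the remark the paper makes after its proof.
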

\begin{proof}
Since the Gaussian ensemble eigenvalue density does not exhibit a hard edge, we turn to the Laguerre ensemble as it is the next-simplest to work with. Thus, we begin by changing variables \cite{NF95} $x\mapsto\kappa x/(4N)$ in \eqref{eq:de29}. Equating terms of order one yields the differential equation \eqref{eq:scale8} above, and we note that all other terms are ${\rm O}(\frac{1}{N})$.
\end{proof}

\begin{remark}
Our proof made use of Laguerre ensemble finite $N$ differential equations. One can check that applying the following hard edge scalings to the differential equations \eqref{eq:de19} and \eqref{eq:de24} reclaims Theorem \ref{T4.4}:
\begin{itemize}
\item In the regime of the smallest eigenvalue with $a,b={\rm O}(1)$ \cite{NF95},
\begin{equation*}
\rho_{(1),\beta,N}^{(J)}\left(\frac{x}{4N^2}\right);
\end{equation*}
\item In the regime of the smallest eigenvalue with $b=\alpha_2N$ and $a,\alpha_2={\rm O}(1)$,
\begin{equation*}
\rho_{(1),\beta,N}^{(J)}\left(\frac{x}{4(1+\alpha_2/\kappa)N^2}\right);
\end{equation*}
\item Before scaling, we may exploit the symmetry $(x,a,b)\leftrightarrow(1-x,b,a)$ for the Jacobi ensemble to characterise the hard edge scalings of the Jacobi ensemble at the largest eigenvalue.
\end{itemize}
\end{remark}

\begin{remark} The optimal hard edge scaling variable
	for the Laguerre ensemble has been identified in \cite{FT19a} as
	$x \mapsto \kappa x /(4N + 2a/\kappa)$; the leading correction term to
	the limiting density is then ${\rm O}(1/N^2)$, as distinct from the slower decaying
	${\rm O}(1/N^{2/3})$ optimal correction at the soft edge.
	\end{remark}

\section*{Acknowledgments}
The authors are grateful to the referees for their valuable feedback. The work of PJF was partially supported by the Australian Research Council Grant DP170102028 and the ARC Centre of Excellence for Mathematical and Statistical Frontiers, and that of AAR by the Australian Government Research Training Program Scholarship and the ARC Centre of Excellence for Mathematical and Statistical Frontiers.

\setcounter{equation}{0}
\appendix
\section{Particular Stieltjes Transforms} \label{A}
To take the Stieltjes transforms of equations \eqref{eq:de19} and \eqref{eq:de24}, we need to compute terms of the form
\begin{align*}
\int_0^1\frac{x^p(1-x)^q}{s-x}\frac{\mathrm{d}^n}{\mathrm{d}x^n}\rho_{(1),\beta,N}^{(J)}(x)\,\mathrm{d}x
\end{align*}
for $\beta=2$ or $4$, $0\leq n\leq5$, $n\leq q\leq n+2$, and $q\leq p\leq q+1$ all integers. To this end, we define
\begin{equation}\label{eq:A1}
\mathcal{I}_{\beta}(s;p,q,n,k):=\int_0^1\frac{x^p(1-x)^q}{(s-x)^k}\frac{\mathrm{d}^n}{\mathrm{d}x^n}\rho_{(1),\beta,N}^{(J)}(x)\,\mathrm{d}x
\end{equation}
for integers $0\leq n\leq q\leq p$ and $k\geq0$. Then, integration by parts gives the identity
\begin{align}
\mathcal{I}_{\beta}(s;p,q,n,k)&=(p+q)\mathcal{I}_{\beta}(s;p,q-1,n-1,k)-p\mathcal{I}_{\beta}(s;p-1,q-1,n-1,k)\nonumber
\\&\quad-k\mathcal{I}_{\beta}(s;p,q,n-1,k+1).
\end{align}
Applying this identity $n$ times allows us to reduce $\mathcal{I}_{\beta}(s;p,q,n,k)$ to an expression involving terms of the form $\mathcal{I}_{\beta}(s;p,q,0,k)$. Then, considering $(s-x)^{-k-1}=(-1)^k/k!\,\partial_s^k(s-x)^{-1}$ for $k\geq0$ gives us
\begin{align}
\mathcal{I}_{\beta}(s;p,q,0,k+1)=\frac{(-1)^k}{k!}\frac{\mathrm{d}^k}{\mathrm{d}s^k}\mathcal{I}_{\beta}(s;p,q,0,1),\quad k\geq0,
\end{align}
which allows us to further reduce to an expression involving terms of the form $\mathcal{I}_{\beta}(s;p,q,0,1)$. Finally, factorisation of $x^p-s^p$ for positive integer $p$ yields
\begin{align} \label{eq:A4}
\mathcal{I}_{\beta}(s;p,q,0,1)=s^p\mathcal{I}_{\beta}(s;0,q,0,1)-\sum_{l=0}^{p-1}s^{p-l-1}\mathcal{I}_{\beta}(s;l,q,0,0),\quad p\geq1
\end{align}
and likewise
\begin{align}
\mathcal{I}_{\beta}(s;0,q,0,1)=\sum_{m=0}^q\binom{q}{m}(-1)^m\left[s^mW_{\beta,N}^{(J)}(s)-\sum_{l=0}^{m-1}s^{m-l-1}m_l^{(J)}\right],\quad q\geq1,
\end{align}
where $m_l^{(J)}$ is the $l\textsuperscript{th}$ moment of $\rho_{(1),\beta,N}^{(J)}(x)$. These last two equations thus reduce our expression to one involving powers of $s$, derivatives of $W_{\beta,N}^{(J)}(s)$, and moments of $\rho_{(1),\beta,N}^{(J)}(x)$.

To begin, we list
\begin{align*}
\mathcal{I}_{\beta}(s;1,1,1,1)&=s(1-s)\frac{\mathrm{d}}{\mathrm{d}s}W_{\beta,N}^{(J)}(s)-N,
\\\mathcal{I}_{\beta}(s;2,2,2,1)&=s^2(1-s)^2\frac{\mathrm{d}^2}{\mathrm{d}s^2}W_{\beta,N}^{(J)}(s)-2N(s-2)-6m_1^{(J)},
\\\mathcal{I}_{\beta}(s;3,3,3,1)&=s^3(1-s)^3\frac{\mathrm{d}^3}{\mathrm{d}s^3}W_{\beta,N}^{(J)}(s)-6N\left(s^2-3s+3\right)
\\&\quad-24m_1^{(J)}(s-3)-60m_2^{(J)},
\\\mathcal{I}_{\beta}(s;4,4,4,1)&=s^4(1-s)^4\frac{\mathrm{d}^4}{\mathrm{d}s^4}W_{\beta,N}^{(J)}(s)-24N(s^3-4s^2+6s-4)
\\&\qquad-120m_1^{(J)}\left(s^2-4s+6\right)-360m_2^{(J)}(s-4)-840m_3^{(J)},
\\\mathcal{I}_{\beta}(s;5,5,5,1)&=s^5(1-s)^5\frac{\mathrm{d}^5}{\mathrm{d}s^5}W_{\beta,N}^{(J)}(s)-120N(s^4-5s^3+10s^2-10s+5)
\\&\qquad-720m_1^{(J)}(s^3-5s^2+10s-10)-2520m_2^{(J)}(s^2-5s+10)
\\&\qquad-6720m_3^{(J)}(s-5)-15120m_4^{(J)}.
\end{align*}
All of the necessary $\mathcal{I}_{\beta}(s;p,q,n,k)$ can be obtained from these through variants of identity \eqref{eq:A4} and integration by parts. For example,
\begin{multline*}
\mathcal{I}_{\beta}(s;n+1,n,n,1)=s\mathcal{I}_{\beta}(s;n,n,n,1)-\mathcal{I}_{\beta}(s;n,n,n,0)
\\=s\mathcal{I}_{\beta}(s;n,n,n,1)+(-1)^{n+1}\int_0^1\frac{\mathrm{d}^n}{\mathrm{d}x^n}(x^n(1-x)^n)\rho_{(1),\beta,N}^{(J)}(x)\,\mathrm{d}x,
\end{multline*}
which only requires knowledge of the moments $m_0^{(J)}$ to $m_n^{(J)}$ of $\rho_{(1),\beta,N}^{(J)}(x)$, in addition to a term from the above list. It should be noted that applying the Stieltjes transform to $x^p(1-x)^q\frac{\mathrm{d}^n}{\mathrm{d}x^n}\rho_{(1),\beta,N}^{(J)}(x)$ produces $s^p(1-s)^q\frac{\mathrm{d}^n}{\mathrm{d}s^n}W_{\beta,N}^{(J)}(s)$ plus terms that do not involve $W_{\beta,N}^{(J)}(s)$. It follows that the differential equations for the resolvent will be the same as those for the density, with additional inhomogeneous terms.

\end{document}